\newcommand{\cc}[1]{\ensuremath{\mathsf{#1}}}
\newcommand{\algprobm}[1]{\textsc{#1}\xspace}
\newcommand{\Soc}{\text{Soc}}
\newcommand{\Fac}{\text{Fac}}
\newcommand{\F}{\mathbb{F}}
\newcommand{\Z}{\mathbb{Z}}
\theoremstyle{plain}
\newtheorem{theorem}{Theorem}[section]
\newtheorem{proposition}[theorem]{Proposition}
\newtheorem{corollary}[theorem]{Corollary}
\newtheorem{lemma}[theorem]{Lemma}
\newtheorem{fact}[theorem]{Fact}
\theoremstyle{definition}
\newtheorem{definition}[theorem]{Definition}
\newtheorem{remark}[theorem]{Remark}
\newtheorem{question}[theorem]{Question}
\newcommand{\Lem}[1]{Lem.~\ref{#1}\xspace}
\newcommand{\Cor}[1]{Cor.~\ref{#1}\xspace}
\newcommand{\Prop}[1]{Prop.~\ref{#1}\xspace}
\newcommand{\Thm}[1]{Thm.~\ref{#1}\xspace}
\DeclareMathOperator{\Inn}{Inn}
\DeclareMathOperator{\Aut}{Aut}
\DeclareMathOperator{\ncl}{ncl}
\DeclareMathOperator{\poly}{poly}
\DeclareMathOperator{\cw}{cw}
\DeclareMathOperator{\rk}{rk}
\title{On the Parallel Complexity of Group Isomorphism via Weisfeiler--Leman\footnote{A preliminary version of this work appeared in the proceedings of FCT 2023 \cite{GrochowLevetWL}. ML thanks Keith Kearnes for helpful discussions, which led to a better understanding of the Hella-style pebble game. ML also wishes to thank Richard Lipton for helpful discussions regarding previous results. We wish to thank J. Brachter, P. Schweitzer, and the anonymous referees for helpful feedback. JAG was partially supported by NSF award DMS-1750319 and NSF CAREER award CCF-2047756 and during this work. ML was partially supported by J. Grochow startup funds.}}
\author[1,2]{Joshua A. Grochow}
\author[3]{Michael Levet}
\affil[1]{Department of Computer Science, University of Colorado Boulder}
\affil[2]{Department of Mathematics, University of Colorado Boulder}
\affil[3]{Department of Computer Science, College of Charleston}
\begin{document}
\maketitle

\begin{abstract}
In this paper, we show that the constant-dimensional Weisfeiler--Leman algorithm for groups (Brachter \& Schweitzer, LICS 2020) can be fruitfully used to improve parallel complexity upper bounds on isomorphism testing for several families of groups. In particular, we show:
\begin{itemize}
\item Groups with an Abelian normal Hall subgroup whose complement is $O(1)$-generated are identified by constant-dimensional Weisfeiler--Leman using only a constant number of rounds. This places isomorphism testing for this family of groups into $\textsf{L}$; the previous upper bound for isomorphism testing was $\cc{P}$ (Qiao, Sarma, \& Tang, STACS 2011).

\item We use the individualize-and-refine paradigm to obtain an isomorphism test for groups without Abelian normal subgroups by $\cc{SAC}$ circuits of depth $O(\log n)$ and size $n^{O(\log \log n)}$, 
previously only known to be in $\cc{P}$ (Babai, Codenotti, \& Qiao, ICALP 2012) and $\mathsf{quasiSAC}^1$ (Chattopadhyay, Tor\'an, \& Wagner, \textit{ACM Trans. Comput. Theory}, 2013).

\item We extend a result of Brachter \& Schweitzer (ESA, 2022) on direct products of groups to the parallel setting. Namely, we also show that Weisfeiler--Leman can identify direct products in parallel, provided it can identify each of the indecomposable direct factors in parallel. They previously showed the analogous result for $\cc{P}$.
\end{itemize}

We finally consider the count-free Weisfeiler--Leman algorithm, where we show that count-free WL is unable to even distinguish Abelian groups in polynomial-time. Nonetheless, we use count-free WL in tandem with bounded non-determinism and limited counting to obtain a new upper bound of $\beta_{1}\textsf{MAC}^{0}(\textsf{FOLL})$ for isomorphism testing of Abelian groups. This improves upon the previous $\textsf{TC}^{0}(\textsf{FOLL})$ upper bound due to Chattopadhyay, Tor\'an, \& Wagner (\textit{ibid.}). 
\end{abstract}

\thispagestyle{empty}

\newpage

\setcounter{page}{1}

\section{Introduction}
\label{sec:introduction}
The \algprobm{Group Isomorphism} problem (\algprobm{GpI}) takes as input two finite groups $G$ and $H$, and asks if there exists an isomorphism $\varphi : G \to H$. When the groups are given by their multiplication (a.k.a. Cayley) tables, it is known that $\algprobm{GpI}$ belongs to $\textsf{NP} \cap \textsf{coAM}$. The generator-enumerator algorithm, attributed to Tarjan in 1978 \cite{MillerTarjan}, has time complexity $n^{\log_{p}(n) + O(1)}$, where $n$ is the order of the group and $p$ is the smallest prime dividing $n$. In more than 40 years, this bound has escaped largely unscathed: Rosenbaum \cite{Rosenbaum2013BidirectionalCD} (see \cite[Sec. 2.2]{GR16}) improved this to $n^{(1/4)\log_p(n) + O(1)}$. And even the impressive body of work on practical algorithms for this problem, led by Eick, Holt, Leedham-Green and O'Brien (e.\,g., \cite{BEO02, ELGO02, BE99, CH03}) still results in an $n^{\Theta(\log n)}$-time algorithm in the general case (see \cite[Page 2]{WilsonSubgroupProfiles}). In the past several years, there have been significant advances on algorithms with worst-case guarantees on the serial runtime for special cases of this problem including Abelian groups \cite{Kavitha, Vikas, Savage}, direct product decompositions \cite{WilsonDirectProductsArxiv, KayalNezhmetdinov}, groups with no Abelian normal subgroups \cite{BCGQ, BCQ}, coprime and tame group extensions \cite{Gal09,QST11,BQ, GQ15}, low-genus $p$-groups and their quotients \cite{LW12,BMWGenus2}, Hamiltonian groups \cite{DasSharma}, and groups of almost all orders \cite{DietrichWilson}.

In addition to the intrinsic interest of this natural problem, a key motivation for the $\algprobm{Group Isomorphism}$ problem is its close relation to the \algprobm{Graph Isomorphism} problem ($\algprobm{GI}$). In the Cayley (verbose) model, $\algprobm{GpI}$ reduces to $\algprobm{GI}$ \cite{ZKT}, 
while $\algprobm{GI}$ reduces to the succinct $\algprobm{GpI}$ problem \cite{Heineken1974TheOO, Mekler} (recently simplified \cite{HeQiao}). In light of Babai's breakthrough result that $\algprobm{GI}$ is quasipolynomial-time solvable \cite{BabaiGraphIso}, $\algprobm{GpI}$ in the Cayley model is a key barrier to improving the complexity of $\algprobm{GI}$. Both verbose $\algprobm{GpI}$ and $\algprobm{GI}$ are considered to be candidate $\textsf{NP}$-intermediate problems, that is, problems that belong to $\textsf{NP}$, but are neither in $\textsf{P}$ nor $\textsf{NP}$-complete \cite{Ladner}. There is considerable evidence suggesting that $\algprobm{GI}$ is not $\textsf{NP}$-complete \cite{Schoning, BuhrmanHomer, ETH, BabaiGraphIso, GILowPP, ArvindKurur}. As verbose $\algprobm{GpI}$ reduces to $\algprobm{GI}$, this evidence also suggests that $\algprobm{GpI}$ is not $\textsf{NP}$-complete. It is also known that $\algprobm{GI}$ is strictly harder than $\algprobm{GpI}$ under $\textsf{AC}^{0}$ reductions \cite{ChattopadhyayToranWagner}. Tor\'an showed that $\algprobm{GI}$ is $\textsf{DET}$-hard \cite{Toran}, which provides that $\algprobm{Parity}$ is $\textsf{AC}^{0}$-reducible to $\algprobm{GI}$. On the other hand, Chattopadhyay, Tor\'an, and Wagner showed that $\algprobm{Parity}$ is not $\textsf{AC}^{0}$-reducible to $\algprobm{GpI}$ \cite{ChattopadhyayToranWagner}. To the best of our knowledge, there is no literature on lower bounds for $\algprobm{GpI}$ in the Cayley table model. The absence of such lower bounds begs the question of how much existing polynomial-time isomorphism tests can be parallelized, even for special cases for \algprobm{GpI}.

Despite $\algprobm{GpI}$ in the Cayley table model being strictly easier than $\algprobm{GI}$ under $\textsf{AC}^{0}$-reductions, there are several key approaches in the $\algprobm{GI}$ literature such as parallelization and individualization that have received comparatively little attention in the setting of \algprobm{GpI}---see the discussion of Related Work on Page 5. In this paper, using Weisfeiler--Leman for groups \cite{WLGroups} as our main tool, we begin to bring both of these techniques to bear on $\algprobm{GpI}$. As a consequence, we also make advances in the descriptive complexity theory of finite groups.

\noindent \\ \textbf{Main Results.} In this paper, we show that Weisfeiler--Leman serves as a key subroutine in developing efficient parallel isomorphism tests. 

Brachter \& Schweitzer \cite{WLGroups} actually introduced three different versions of WL for groups. While they are equivalent in terms of pebble complexity up to constant factors, their round complexities---and hence, parallel complexities---may differ. Because of these differences we are careful to specify which version of WL for groups each result uses.

As we are interested in both the Weisfeiler--Leman dimension and the number of rounds, we introduce the following notation.

\begin{definition}
Let $k \geq 2$ and $r \geq 0$ be integers, and let $J \in \{ I, II \}$. The $(k, r)$-WL Version $J$ algorithm for groups is obtained by running $k$-WL Version $J$ for $r$ rounds. Here, the initial coloring counts as the first round ($r = 0$). By conventional definition, each of the WL versions distinguish two group at $r = 0$ if and only if the groups do not have the same order.
\end{definition}

We first examine coprime extensions of the form $H \ltimes N$ where $N$ is Abelian. When either $H$ is elementary Abelian or $H$ is $O(1)$-generated, Qiao, Sarma, \& Tang \cite{QST11} gave a polynomial-time isomorphism test for these families of groups, using some nontrivial representation theory. Here, as a proof of concept that WL can successfully use and parallelize some representation theory (which was not yet considered in \cite{WLGroups, BrachterSchweitzerWLLibrary}), we use WL to improve their result's parallel complexity in the case that $H$ is $O(1)$-generated. We remark below about the difficulties in extending WL to handle the case that $H$ is Abelian (without restricting the number of generators).

\begin{theorem} \label{ThmMainQST}
Groups of the form $H \ltimes N$, where $N$ is Abelian, $H$ is $O(1)$-generated, and $|H|$ and $|N|$ are coprime are identified by $(O(1), O(1))$-WL Version II. Consequently, isomorphism between a group of the above form and arbitrary groups can be decided in $\textsf{L}$.
\end{theorem}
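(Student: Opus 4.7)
The plan is to combine Schur--Zassenhaus with individualization of an $O(1)$-sized generating set for $H$. Fix $G = H \ltimes N$ in the specified family and suppose $G'$ is a group whose $(O(1), O(1))$-WL Version II coloring agrees with that of $G$. Because $\gcd(|H|,|N|) = 1$, the Abelian Hall subgroup $N$ is characteristic in $G$: it coincides with the set of elements whose order divides $|N|$, and element orders are visible from the initial WL coloring. A constant number of refinement rounds therefore forces $G'$ to contain a normal Abelian subgroup $N'$ of the same order (Abelianness is witnessed by $2$-WL) and a Hall complement. Applying Schur--Zassenhaus to $G'$ yields $G' = H' \ltimes N'$ with $|H'| = |H|$ coprime to $|N'|$, so the remaining task is to show that WL certifies that the actions $H \curvearrowright N$ and $H' \curvearrowright N'$ are equivalent up to group isomorphisms $H \to H'$ and $N \to N'$.

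Next I would exploit the $O(1)$-generation of $H$. Since the $c$-pebble variant of WL implicitly ranges over all $c$-tuples of group elements, I may assume a generating tuple $(h_1, \ldots, h_c)$ of $H$ has been individualized together with a matching tuple $(h_1', \ldots, h_c')$ in $G'$. Version II WL supports product-based refinement, so a constant number of further rounds suffice to color each element of $H$ by its (shortest) word in the individualized generators; in particular $H$ is determined set-wise and its isomorphism type is fixed. Each $n \in N$ then receives a color encoding its conjugation profile $(h_1 n h_1^{-1}, \ldots, h_c n h_c^{-1})$. Decomposing $N = \bigoplus_p N_p$ into its Sylow components (each characteristic, and thus recognized by WL from element orders), each $N_p$ becomes a semisimple $\F_p H$-module by Maschke's theorem, and equivalence of the two coprime extensions reduces to $\F_p H$-module isomorphism $N_p \cong N_p'$ for every $p$ under the chosen identification of $H$ with $H'$.

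The main obstacle -- and the heart of the proof -- is showing that the WL color classes on $N_p$ faithfully encode its $\F_p H$-module isomorphism type in $O(1)$ rounds. My plan is to argue that, with generators of $H$ pinned, the conjugation profile of $n \in N_p$ determines the $H$-invariant $\F_p$-subspace it generates, and that distinct isotypic components of $N_p$ yield disjoint unions of WL color classes whose sizes recover the multiplicities of the irreducibles. Because $H$ is $O(1)$-generated, its $\F_p$-representation on $N_p$ is fully determined by the $c$ pinned conjugation actions, so the Version II profiles encode exactly the data needed for the Qiao--Sarma--Tang module-isomorphism test. Equality of WL color class sizes across $G$ and $G'$ then forces $N_p \cong N_p'$ as $\F_p H$-modules for every $p$, yielding $G \cong G'$. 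The \textsf{L} upper bound follows from the standard logspace simulation of constant-round, constant-dimensional Version II WL discussed in Section~2.4.
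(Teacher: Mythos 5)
Your overall architecture (make $N$ characteristic via element orders, invoke Schur--Zassenhaus, pin an $O(1)$ generating tuple of $H$, then reduce to equivalence of the $H$-actions on $N$) matches the paper's up to that point, but there is a genuine gap where you pass to representation theory. You decompose $N$ into Sylow components and declare each $N_p$ a \emph{semisimple} $\F_p H$-module by Maschke's theorem. That is only valid when $N_p$ is elementary Abelian; the theorem allows arbitrary Abelian $N$, so in general $N_p$ is a $(\Z/p^k\Z)[H]$-module of exponent $p^k>p$, Maschke does not apply, and these modules are not semisimple (one indecomposable can be a proper submodule of another). The paper confronts exactly this: it uses Taunt's Lemma (\Lem{LemTaunt}) to reduce to comparing multiplicities of indecomposable $\Z H$-summands, and then the key representation-theoretic input is Th\'even\-az's result (\Lem{LemThevenaz}) that every indecomposable $(\Z/p^k\Z)[H]$-module with $p\nmid |H|$ is \emph{cyclic}, i.e.\ generated by a single element. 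Your proposal has no substitute for this fact, and without it the reduction to ``$\F_p H$-module isomorphism for every $p$'' simply does not cover the stated family.

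The second problem is that the step you yourself call the heart of the proof---that the $O(1)$-round color classes on $N_p$ recover the isotypic multiplicities, so that equal color-class sizes force module isomorphism---is only a plan, not an argument. The claim that the conjugation profile of $n$ determines the submodule it generates, and that isotypic components correspond to unions of color classes whose sizes encode multiplicities, is nontrivial even in the semisimple case and is unclear in the non-semisimple $(\Z/p^k\Z)[H]$ setting. The paper sidesteps all of this with the pebble game: once generators of $H$ are pebbled, cyclicity of the indecomposables lets Spoiler pebble a \emph{single} additional element $n$ whose cyclic module $\langle n\rangle_H$ witnesses (via \Lem{LemTaunt}) the difference in multiplicities, and the Version~II winning condition on the generated subgroup $\langle h_1,\dots,h_c,n\rangle \cong H\ltimes\langle n\rangle_H$ immediately detects the inequivalent $H$-module, with no counting over color classes at all. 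If you want to salvage your counting route, you would need both a replacement for Maschke in the non-elementary-Abelian case (essentially re-proving something like \Lem{LemThevenaz}) and an actual proof that the cyclic-submodule-type counts determine the module, which is substantially more work than the one-element pebbling argument.
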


\begin{remark}
Despite Qiao, Sarma, and Tang giving a polynomial-time algorithm for the case where $H$ and $N$ are coprime, $N$ is arbitrary Abelian, and $H$ is elementary Abelian (no restriction on number of generators for $H$ or $N$), we remark here on some of the difficulties we encountered in getting WL to extend beyond the case of $H$ being $O(1)$-generated. When $H$ is $O(1)$-generated, we may start by pebbling the generators of $H$. After this, by Taunt's Lemma (reproduced as Lemma~\ref{LemTaunt} below), all that is left is to identify the multiset of $H$-modules appearing in $N$. In contrast, when $H$ is not $O(1)$-generated, this strategy fails quite badly: if only a small subset of $H$'s generators are pebbled, then it leaves open automorphisms of $H$ that could translate one $H$-module structure to another. But the latter translation-under-automorphism problem is equivalent to the \emph{entire} problem in this family of groups (see, e.g., \cite[Theorem~1.2]{QST11}).

This same difficulty is encountered even when using the more powerful \emph{second} Ehrenfeucht--Fra\"iss\'e pebble game in Hella's \cite{Hella1989, Hella1993} hierarchy, in which Spoiler may pebble two elements per turn instead of just one. This second game in Hella's hierarchy is already quite powerful: it identifies semisimple groups using only $O(1)$ pebbles and $O(1)$ rounds \cite{GLDescriptiveComplexity}. It seems plausible to us that with only $O(1)$ pebbles, neither ordinary WL nor this second game in Hella's hierarchy identifies coprime extensions where both $H,N$ are Abelian with no restriction on the number of generators. 
\end{remark}

We next parallelize a result of Brachter \& Schweitzer \cite{BrachterSchweitzerWLLibrary}, who showed that Weisfeiler--Leman can identify direct products in polynomial-time provided it can also identify the indecomposable direct factors in polynomial-time. Specifically, we show:

\begin{theorem} \label{ThmProductCanonicalForm}
Let $G = G_{1} \times \cdots \times G_{d}$ be a decomposition into indecomposable direct factors, let $k \geq 5$, and let $r := r(n)$. If $G$ and $H$ are not distinguished by $(k, r+O(\log n))$-WL Version II, then there exist direct factors $H_{i} \leq H$ such that $H = H_{1} \times \cdots \times H_{d}$ such that for all $i \in [d]$, $G_{i}$ and $H_{i}$ are not distinguished by $(k-1, r)$-WL Version II.
\end{theorem}

Prior to \Thm{ThmProductCanonicalForm}, the best-known upper bound on computing direct product decompositions was $\textsf{P}$ \cite{WilsonDirectProductsArxiv, KayalNezhmetdinov}. In group isomorphism testing, the main benefit of these results is that, for the purposes of getting polynomial-time isomorphism tests, they reduce the problem to the case of directly indecomposable groups. Although we do not know how to show that Weisfeiler--Leman can be used to compute the direct product decomposition itself, \Thm{ThmProductCanonicalForm} shows that for the purposes of using WL to distinguish groups, WL achieves this same main benefit---reducing to the directly indecomposable case---in $\textsf{TC}^1$ (at the cost of increasing the dimension by 1 and adding $O(\log n)$ rounds).

We next consider groups without Abelian normal subgroups, one of the few other classes of groups of structural importance for which a highly non-trivial polynomial-time algorithm is known. Using the individualize-and-refine paradigm, we obtain a new upper bound for not only deciding isomorphisms, but also listing isomorphisms. To state this result, recall that an $\textsf{SAC}$ circuit is a Boolean circuit using $\textsf{AND}, \textsf{OR}$, and $\textsf{NOT}$ gates, in which the $\textsf{AND}$ gates have fan-in $2$, and the $\textsf{OR}$ gates have unbounded fan-in (see Section~\ref{sec:Complexity} for more details). We show that isomorphisms of groups without Abelian normal subgroups can be listed by an $\mathsf{SAC}$ circuit of $O(\log n)$ depth and $n^{O(\log \log n)}$ size. Although this does not improve upon the upper bound of $\textsf{P}$ for isomorphism testing in this class of groups \cite{BCQ}, this does parallelize the previous bound of $n^{\Theta(\log \log n)}$ runtime for listing isomorphisms \cite{BCGQ}. We note that for listing isomorphisms, our size bound is essentially optimal, as Babai \emph{et al.} (\emph{ibid.}) observed that such groups can have $n^{\Theta(\log \log n)}$ many isomorphisms. Compared to what is known for general groups, general group isomorphism can be solved by $O(\log n)$-depth $n^{O(\log n)}$-size $\mathsf{SAC}$ circuits \cite{ChattopadhyayToranWagner}, and for groups without Abelian normal subgroups our result improves the size much closer to polynomial while maintaining the depth.

\begin{theorem} \label{ThmSemisimpleIntro}
Let $G$ be a group of order $n$ without Abelian normal subgroups, and let $H$ be arbitrary group of order $n$. We can test isomorphism between $G$ and $H$ using an $\textsf{SAC}$ circuit of depth $O(\log n)$ and size $n^{\Theta(\log \log n)}$. Furthermore, all such isomorphisms can be listed in this bound.
\end{theorem}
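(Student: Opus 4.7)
The plan is to parallelize the $n^{O(\log\log n)}$-time algorithm of Babai, Codenotti, Grochow, and Qiao \cite{BCGQ} for semisimple groups, using constant-dimensional Weisfeiler--Leman as the refinement step in an individualize-and-refine scheme. As preprocessing, I would compute $\Soc(G) = T_1 \times \cdots \times T_m$ in $\textsf{NC}$: since $G$ has no Abelian normal subgroups, the socle is the product of the (non-Abelian) simple minimal normal subgroups, and $m = O(\log n / \log 60)$. I would likewise compute $\Soc(H)$, rejecting immediately unless its structural data (isomorphism types of simple factors and their multiplicities) matches that of $\Soc(G)$. The key rigidity fact I would rely on is that any isomorphism $\varphi : G \to H$ is determined by its restriction to $\Soc(G)$, which is in turn determined by the images of any generating set, because the conjugation embedding $G \hookrightarrow \Aut(\Soc(G))$ forces the extension from $\Soc(G)$ back to all of $G$ uniquely.

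For the main branching, I would nondeterministically guess a tuple $(g_1, \ldots, g_t) \in G^t$ together with a candidate image $(h_1, \ldots, h_t) \in H^t$, where $t = O(\log\log n)$ is chosen to import BCGQ's enumeration bound. For each guess, these tuples are individualized (pebbled with distinct colors) and then $(k, O(\log n))$-WL Version II is run for some constant $k$. By the Grohe--Verbitsky-style parallel implementations of WL discussed earlier in the paper, each WL pass is realizable by an $\textsf{SAC}$ subcircuit of depth $O(\log n)$ and polynomial size; an unbounded fan-in OR over all $n^{O(\log\log n)}$ guesses then yields the claimed circuit of depth $O(\log n)$ and size $n^{\Theta(\log\log n)}$. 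To list all isomorphisms, each accepting branch canonically yields a unique $\varphi$ by the rigidity above: WL's final coloring on the individualized configuration reads off the map on the socle frame, from which the embedding into $\Aut(\Soc(G))$ supplies the extension; these can all be written out in parallel.

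The main obstacle I anticipate is showing that $O(\log\log n)$ individualized elements together with $O(\log n)$ rounds of constant-dimensional WL really do certify isomorphism when it holds and detect it when it fails. A generating set of $\Soc(G)$ can require $\Theta(\log n)$ elements in the worst case, so the pebbled tuple cannot be an arbitrary generating set; it must be a structurally chosen certificate (for instance, one representative per isotypic component of $\Soc(G)$ together with minimal data pinning down the wreath-product action on equal factors), from which WL must propagate colors across the multiplication table of $G$ to recover the rest. Making this propagation argument rigorous within the $O(\log n)$ round budget, and matching the branching precisely to BCGQ's $n^{O(\log\log n)}$ bound rather than the naive $n^{O(\log n)}$, is where the bulk of the technical work is concentrated.
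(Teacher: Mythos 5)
There is a genuine gap, and it sits exactly where you flagged it: your branching individualizes only $t = O(\log\log n)$ elements and then asks constant-dimensional WL with $O(\log n)$ rounds to ``propagate'' colors and certify the whole isomorphism. No such propagation argument is given, and it is precisely the hard part -- pinning down one representative per isotypic component of $\Soc(G)$ still leaves the wreath-product/coset ambiguity that is the crux of the semisimple case, so the claimed certificate is not known (and not shown) to be decided by $(k, O(\log n))$-WL. The paper resolves this differently: it individualizes \emph{two generators per simple direct factor of the socle} (so up to $2\log n$ elements, not $O(\log\log n)$), and the branching stays at $n^{\Theta(\log\log n)}$ not because few elements are pebbled but because the choices factor as (i) a bijection $\psi$ between the sets of simple factors, of which there are at most $k!\prod_i t_i! \le n^{O(\log\log n)}$, times (ii) the choices of generator images inside the matched factors, of which there are only $\prod_{S}\binom{|S|}{2} \le O(n^2)$ since each simple factor is $2$-generated (the one use of CFSG). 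After this individualization every socle element gets a unique color in two rounds, and then a \emph{constant-round} $(2,4)$-WL Version II run decides whether the pinned socle isomorphism extends to $G\cong H$; correctness and uniqueness of the extension come from the BCGQ rigidity lemma (\Lem{CharacterizeSemisimple}), which is the same rigidity fact you invoke. Your acknowledgment that ``making this propagation argument rigorous \ldots is where the bulk of the technical work is concentrated'' is accurate, but it means the proposal does not contain the key idea that makes the theorem go through.

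There is also a quantitative problem with your depth accounting: running $(k, O(\log n))$-rounds of \emph{counting} WL Version II inside each branch is not an $\textsf{SAC}$ circuit of depth $O(\log n)$. Grohe--Verbitsky gives $\textsf{TC}^{0}$ per refinement round, so $O(\log n)$ rounds yields a $\textsf{TC}$ circuit of depth $O(\log n)$ (i.e.\ $\textsf{TC}^{1}$), and converting the majority gates to bounded fan-in AND/unbounded OR costs another logarithmic factor in depth, giving $O(\log^2 n)$. The paper sidesteps this because, after individualization, only $(2,4)$-WL Version II is needed; its initial coloring is $\textsf{L}$-computable and hence realizable in $\textsf{SAC}^1$, which is what keeps the final circuit at depth $O(\log n)$ and size $n^{\Theta(\log\log n)}$. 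Your preprocessing claim that $\Soc(G)$ and its simple factors can be found in $\textsf{NC}$ is true but needs the paper's $\textsf{L}$ subroutines (membership testing plus the $2$-generation of simple factors, \Lem{IdentifyDirectFactors} and \Lem{LemmaAC1Factors}) rather than a bare assertion; that part is minor compared with the two issues above.
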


\begin{remark}
The key idea in proving \Thm{ThmSemisimpleIntro} is to prescribe an isomorphism between $\Soc(G)$ and $\Soc(H)$ (as in \cite{BCGQ}), and then use Weisfeiler--Leman to test in $\textsf{L}$ whether the given isomorphism of $\Soc(G) \cong \Soc(H)$ extends to an isomorphism of $G \cong H$. The procedure from \cite{BCGQ} for choosing all possible isomorphisms between socles is easily seen to parallelize; our key improvement is in the parallel complexity of testing whether such an isomorphism of socles extends to the whole groups.

Previously, this latter step was shown to be polynomial-time computable \cite[Proposition~3.1]{BCGQ} via membership checking in the setting of permutation groups. Now, although membership checking in permutation groups is in $\cc{NC}$ \cite{BabaiLuksSeress}, the proof there uses several different group-theoretic techniques, and relies on the Classification of Finite Simple Groups (see the end of the introduction of \cite{BabaiLuksSeress} for a discussion). Furthermore, there is no explicit upper bound on which level of the $\cc{NC}$ hierarchy these problems are in, just that it is $O(1)$. Thus, it does not appear that membership testing in the setting of permutation groups is known to be even $\textsf{AC}^{1}$-computable. So already, our use of $\mathsf{SAC}$ circuits
 is new (the $n^{\Theta(\log \log n)}$ size comes only from parallelizing the first step). Furthermore, Weisfeiler--Leman provides a much simpler algorithm; indeed, although we also rely on the fact that all finite simple groups are 2-generated (a result only known via CFSG), this is the only consequence of CFSG that we use. We note, however, that although WL improves the parallel complexity of these particular instances of membership testing, it requires access to the multiplication table for the underlying group, so this technique cannot be leveraged for more general membership testing in permutation groups.
\end{remark}

In the case of serial complexity, if the number of simple direct factors of Soc(G) is just slightly less than maximal, even listing isomorphism can be done in $\cc{FP}$ \cite{BCGQ}. Under the same restriction, we get an improvement in the parallel complexity to $\mathsf{FL}$:

\begin{corollary}[Cf. {\cite[Corollary~4.4]{BCGQ}}]
Let $G$ be a group without Abelian normal subgroups, and let $H$ be arbitrary. Suppose that the number of non-Abelian simple direct factors of $\Soc(G)$ is $O(\log n/\log \log n)$. Then we can decide isomorphism between $G$ and $H$ in $\textsf{L}$, as well as list all such isomorphisms, in $\textsf{FL}$.
\end{corollary}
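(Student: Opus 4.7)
The plan is to refine the runtime analysis of the algorithm underlying \Thm{ThmSemisimpleIntro} under the additional hypothesis on $\Soc(G)$. That algorithm enumerates candidate isomorphisms $\varphi_0 : \Soc(G) \to \Soc(H)$ and, for each, invokes the WL-based test (which runs in $\textsf{L}$) to decide whether $\varphi_0$ lifts to an isomorphism $G \to H$. Because $G$ has no Abelian normal subgroups, $C_G(\Soc(G)) = 1$, so any such lift is unique; in particular, the number of isomorphisms $G \cong H$ is bounded by $|\Iso(\Soc(G), \Soc(H))|$. The quasi-polynomial size in \Thm{ThmSemisimpleIntro} arises entirely from the socle-enumeration phase, so it suffices to show (i) the number of candidate $\varphi_0$ is $n^{O(1)}$ under our hypothesis, and (ii) these $\varphi_0$ can be enumerated in $\textsf{FL}$.

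For (i), write $\Soc(G) \cong T_1^{a_1} \times \cdots \times T_m^{a_m}$ with the $T_j$ pairwise non-isomorphic non-Abelian simple groups and $\sum_j a_j = k = O(\log n / \log \log n)$. Then $|\Aut(\Soc(G))| = \prod_j |\Aut(T_j)|^{a_j} \cdot a_j!$. Every non-Abelian simple group is $2$-generated (a consequence of CFSG), so $|\Aut(T_j)| \leq |T_j|^2$, and therefore $\prod_j |\Aut(T_j)|^{a_j} \leq |\Soc(G)|^2 \leq n^2$. Under the bound on $k$, $\prod_j a_j! \leq k! = 2^{O(k \log k)} = n^{O(1)}$, and so $|\Aut(\Soc(G))| = n^{O(1)}$, whence $|\Iso(\Soc(G), \Soc(H))| = n^{O(1)}$.

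For (ii), I would route the enumeration through the natural factorization $\Aut(\Soc(G)) \cong \prod_j (\Aut(T_j) \wr S_{a_j})$ rather than by freely choosing $2k$ generator images (which would yield $n^{2k}$ candidates---only quasi-polynomial). Concretely, an $\textsf{L}$-routine identifies the simple direct factors of $\Soc(G)$ and $\Soc(H)$, selects a canonical pair of generators for each $T_j$ via brute-force search among the $|T_j|^2 \leq n^2$ pairs, and then iterates an $O(\log n)$-bit mixed-radix counter over the elements of $\prod_j \Aut(T_j)^{a_j} \times S_{a_j}$, decoding each counter value into a concrete bijection $\varphi_0$. For each $\varphi_0$, the WL extension test of \Thm{ThmSemisimpleIntro} either outputs the unique lift in $\textsf{L}$ or rejects. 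Iterating $n^{O(1)}$ logspace subroutines keeps the overall computation in $\textsf{FL}$, producing the complete list of isomorphisms $G \cong H$ within the same bound.

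The main technical care needed is in this logspace bookkeeping: factoring $\Soc(G)$ and $\Soc(H)$ into simple factors, enumerating each $\Aut(T_j)$ by searching over pairs of generator images and verifying homomorphism consistency against the multiplication table, and decoding the mixed-radix counter into a well-defined bijection---all with only $O(\log n)$ space. Standard logspace group-theoretic routines suffice for each of these steps, so the main new content is the counting argument in (i) that leverages the hypothesis $k = O(\log n/\log \log n)$ to collapse $k!$ from $n^{O(\log \log n)}$ to $n^{O(1)}$.
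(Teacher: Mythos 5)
Your proposal is correct and follows essentially the same route as the paper: the corollary is obtained from the proof of \Thm{thm:ListSemisimpleIsomorphisms} by observing that the only quasi-polynomial factor is the number of factor-matchings $k!\prod_i t_i!$, which collapses to $n^{O(1)}$ when the number of simple direct factors is $O(\log n/\log\log n)$, after which a logspace counter iterates over the polynomially many candidate socle isomorphisms and the $\textsf{L}$-computable individualize-and-refine WL test (with uniqueness of the lift from \cite[Lemma~3.1]{BCGQ}) handles each one. Your counting via $\Aut(\Soc(G)) \cong \prod_j \Aut(T_j) \wr S_{a_j}$ and the $2$-generation bound $|\Aut(T_j)| \leq |T_j|^2$ matches the paper's $n^2 k! \prod_i t_i!$ bound, so there is nothing genuinely different to flag.
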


It remains open as to whether isomorphism testing of groups without Abelian normal subgroups is even in $\textsf{NC}$. This would follow if such groups were identified by WL with $O(1)$ pebbles in $(\log n)^{O(1)}$ rounds; while we do not yet know whether this is the case, there is a higher-arity version of WL which identifies such groups with $O(1)$ pebbles in $O(1)$ rounds \cite{GLDescriptiveComplexity}, but that higher-arity version is not known to be implementable by efficient algorithms.

Given the lack of lower bounds on \algprobm{GpI}, and Grohe \& Verbitsky's parallel WL algorithm, it is natural to wonder whether our parallel bounds could be improved. One natural approach to this is via the \emph{count-free} WL algorithm, which compares the set rather than the multiset of colors at each iteration. We show unconditionally that this algorithm fails to serve as a polynomial-time isomorphism test even for Abelian groups. 

\begin{theorem} \label{thm:MainCountFree}
There exists an infinite family $(G_{n}, H_{n})_{n \geq 1}$ where $G_{n} \not \cong H_{n}$ are Abelian groups of the same order and count-free WL requires dimension at least $(1/3)\log_2 |G_n|$ to distinguish $G_n$ from $H_n$.
\end{theorem}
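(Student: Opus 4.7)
The plan is to exhibit an explicit family of pairs of non-isomorphic Abelian $p$-groups of the same order for which count-free Weisfeiler--Leman requires dimension linear in the logarithm of the group order. Fix a prime $p$, and for $n \geq 2$ define $G_n := \mathbb{Z}_{p^2}^n$ and $H_n := \mathbb{Z}_p^2 \times \mathbb{Z}_{p^2}^{n-1}$. Both groups have order $p^{2n}$ and exponent $p^2$, but $G_n$ has $n$ cyclic summands of order $p^2$ in its invariant-factor decomposition while $H_n$ has only $n-1$, so $G_n \not\cong H_n$. Standard (counting) WL distinguishes them easily---for instance, $|G_n[p]| = p^n$ but $|H_n[p]| = p^{n+1}$---so the difference really is count-detected.

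Next I would invoke the standard equivalence between count-free $k$-WL and the $k$-pebble Ehrenfeucht--Fra\"iss\'e game on the Cayley structures of $G_n$ and $H_n$ (equivalently, $FO^k$-equivalence of the two groups viewed as structures with the multiplication relation). It therefore suffices to provide Duplicator with a winning strategy whenever $k < n$. The strategy is for Duplicator to answer every move by placing a pebble inside the canonical direct factor $K := \mathbb{Z}_{p^2}^{n-1} \leq H_n$. The key point is that $K$ is a direct summand of $H_n$, and dually $G_n$ contains its own pure copy $K' := \mathbb{Z}_{p^2}^{n-1}$ as the first $n-1$ coordinates, so $p$-th-power structure in $K$ (resp.\ $K'$) agrees with that of the ambient group $H_n$ (resp.\ $G_n$). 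Duplicator maintains the invariant that the pebbled correspondence $g_i \mapsto h_i$ extends to an abstract group isomorphism $\pi : \langle \vec g\rangle \to \langle \vec h\rangle$ with $\langle \vec h\rangle \leq K$.

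For the extension step, suppose Spoiler moves a pebble onto some $g' \in G_n$. Duplicator considers the type of $g'$ over the subgroup $A^-$ generated by the other currently-pebbled elements in $G_n$---namely, the smallest $e \in \{0,1,2\}$ with $p^e g' \in A^-$ together with the element $p^e g' \in A^-$ itself. Because $A^-$ is generated by at most $k-1 \leq n-2$ pebbles, its rank is at most $n-2$, so $\langle A^-, g'\rangle$ is an Abelian group of rank at most $n-1$ and exponent at most $p^2$, hence isomorphic to some $\mathbb{Z}_p^a \times \mathbb{Z}_{p^2}^b$ with $a+b \leq n-1$. Any such group embeds into $K$, and the purity of $\pi(A^-)$ inside $K$ ensures that $\pi(p^e g') \in p^e K$, so Duplicator can pick $h' \in K$ with $p^e h' = \pi(p^e g')$ and with $h' + \pi(A^-)$ of order exactly $p^e$ in $K/\pi(A^-)$. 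Such an $h'$ exists because $\pi(A^-)$ has $\mathbb{Z}_{p^2}$-rank at most $n-2$, so at least one of the $n-1$ cyclic factors of $K$ remains ``free'' to accommodate the new pebble. Spoiler's moves inside $H_n$ are handled symmetrically, using that $K'$ is pure in $G_n$. Thus Duplicator wins the $k$-pebble game for every $k < n$, and since $|G_n| = p^{2n}$ we obtain the desired lower bound $k \geq n = \Theta(\log|G_n|)$.

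The main obstacle is verifying that $\pi$ continues to preserve ``being a $p$-th power in the ambient group'' after the extension, particularly in the delicate case $e=1$ where $pg' \in A^-$ is nonzero. A naive matching could force Duplicator to place $h'$ in the $\mathbb{Z}_p^2$-factor of $H_n$, where the $p$-th-power structure breaks (elements of $\mathbb{Z}_p^2$ have order $p$ but are not $p$-th powers in $H_n$). Restricting Duplicator to $K$ circumvents this: inside $K$, ``order $p$'' and ``$p$-th power in the ambient group'' coincide, mirroring the situation in $G_n$ via $K'$. The remaining case split on $e \in \{0,1,2\}$ and on whether $g' \in A^-$ is a standard back-and-forth verification.
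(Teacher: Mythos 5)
Your construction does not work: the pair $G_n = \mathbb{Z}_{p^2}^{\,n}$ versus $H_n = \mathbb{Z}_p^2 \times \mathbb{Z}_{p^2}^{\,n-1}$ is already distinguished by \emph{constant}-dimensional count-free WL, so no $\Omega(\log|G_n|)$ lower bound can come from it. The separating property is first-order with $O(1)$ variables and no counting: ``every element of order $p$ is a $p$-th power,'' i.e.\ $\forall x\,\bigl(x^p=1 \rightarrow \exists y\, (y^p=x)\bigr)$. This holds in $G_n$ (the socle of $\mathbb{Z}_{p^2}^{\,n}$ equals $pG_n$) but fails in $H_n$ (any element with nonzero component in the $\mathbb{Z}_p^2$ factor is not a $p$-th power). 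Concretely, in the Version II count-free game Spoiler pebbles an order-$p$ element $h_1\in H_n$ with nonzero $\mathbb{Z}_p^2$-component; Duplicator must answer with some $g_1\in G_n$ of order $p$, which necessarily has a $p$-th root $g_2\in G_n$; Spoiler pebbles $g_2$, and whatever $h_2$ Duplicator answers, the map $(g_1,g_2)\mapsto(h_1,h_2)$ cannot extend to an isomorphism of $\langle g_2\rangle\cong\mathbb{Z}_{p^2}$ onto $\langle h_1,h_2\rangle$ because $h_1\notin pH_n$. Two pebbles and two rounds suffice. Your proposed fix---restricting Duplicator's answers to the summand $K=\mathbb{Z}_{p^2}^{\,n-1}\le H_n$---cannot repair this, because Duplicator only chooses \emph{responses}: Spoiler is free to place pebbles in $H_n$ on elements outside $K$, and the claimed invariant $\langle\vec h\rangle\le K$ (and with it the ``symmetric handling'' of moves in $H_n$) collapses on Spoiler's very first move.

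The deeper point, which the paper's construction is designed around, is that for a count-free lower bound the two groups must realize the same \emph{local types} and differ only in their \emph{multiplicities}. The paper takes $G_n=(\mathbb{Z}/2\mathbb{Z})^n\times(\mathbb{Z}/4\mathbb{Z})^n$ and $H_n=(\mathbb{Z}/2\mathbb{Z})^{n-2}\times(\mathbb{Z}/4\mathbb{Z})^{n+1}$: both groups contain both kinds of involutions (those lying inside a copy of $\mathbb{Z}/4\mathbb{Z}$, i.e.\ squares, and those not), so there is no small quantifier-free or low-variable sentence separating them; only the counts of the two kinds differ, which count-free WL with $\le n/4$ pebbles cannot detect. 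Duplicator's strategy there is an induction on the number of pebbles that matches, for each newly pebbled element, whether it generates $\mathbb{Z}/2\mathbb{Z}$ or $\mathbb{Z}/4\mathbb{Z}$, how it intersects the span of the existing pebbles, and whether any shared $\mathbb{Z}/2\mathbb{Z}$ sits inside a $\mathbb{Z}/4\mathbb{Z}$ of the ambient group---an invariant that is symmetric in the two groups and hence robust to Spoiler playing on either side. If you want to salvage your write-up, you must replace your family with one in which neither group has an $O(1)$-variable first-order feature (such as ``all order-$p$ elements are $p$-th powers'') absent from the other, and your Duplicator strategy must be stated so that it survives Spoiler pebbling arbitrary elements of \emph{either} group.
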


\begin{remark}
Even prior to \cite{CFI}, it was well-known that the count-free variant of Weisfeiler--Leman failed to place \algprobm{GI} into $\textsf{P}$ \cite{ImmermanLander1990}. In fact, count-free WL fails to distinguish almost all graphs \cite{FaginCountFree, IMMERMAN198276}, while two iterations of the standard counting $1$-WL almost surely assign a unique label to each vertex \cite{BabaiKucera, BabaiErdosSelkow}. In light of the equivalence between count-free WL and the logic $\textsf{FO}$ (first-order logic \textit{without} counting quantifiers), this rules out $\textsf{FO}$ as a viable logic to capture $\textsf{P}$ on unordered graphs. Finding such a logic is a central open problem in Descriptive Complexity Theory. On ordered structures such a logic was given by Immerman \cite{ImmermanPTime} and Vardi \cite{VardiPTime}.

\Thm{thm:MainCountFree} establishes the analogous result, ruling out $\textsf{FO}$ as a candidate logic to capture $\textsf{P}$ on unordered groups. This suggests that some counting may indeed be necessary to place $\algprobm{GpI}$ into $\textsf{P}$. As $\textsf{DET}$ is the best known lower bound for $\algprobm{GI}$ \cite{Toran}, counting is indeed necessary to place $\algprobm{GI}$ into $\textsf{P}$. There are no such lower bound known for \algprobm{GpI}. Furthermore, the work of \cite{ChattopadhyayToranWagner} shows that \algprobm{GpI} is not hard (under $\textsf{AC}^{0}$-reductions) for any complexity class that can compute \algprobm{Parity}, such as \textsf{DET}. Determining which families of groups can(not) be identified by count-free WL remains an intriguing open question.
\end{remark}

While count-free WL is not sufficiently powerful to compare the multiset of colors, it turns out that $O(\log \log n)$-rounds of count-free $O(1)$-WL Version I will distinguish two elements of different orders. Thus, the multiset of colors computed by the count-free $(O(1), O(\log \log n))$-WL Version I for non-isomorphic Abelian groups $G$ and $H$ will be different. We may use $O(\log n)$ non-deterministic bits to guess the color class where $G$ and $H$ have different multiplicities, and then an $\textsf{MAC}^{0}$ circuit to compare said color class. This yields the following.

\begin{theorem} \label{thm:AbelianGpI}
Abelian Group Non-Isomorphism is in $\beta_{1}\textsf{MAC}^{0}(\textsf{FOLL})$.
\end{theorem}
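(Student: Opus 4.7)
The plan is to combine three ingredients. First, I would establish the technical lemma that count-free $(O(1), O(\log\log n))$-WL Version III assigns distinct colors to any two group elements of distinct orders. The idea is that Version III's refinement rule provides access to group products through its pebbling structure, so by iterated squaring we can express ``$g^{2^i} = e$'' using formulas whose quantifier depth grows modestly in $i$, and by nesting such squarings together with how Version III composes these checks, the predicate ``$g$ has order $k$'' becomes separable from ``$g$ has order $k'$'' for all $k \neq k'$ within $O(\log\log n)$ WL rounds. Since orders divide $n$ and there are at most $n^{o(1)}$ distinct orders in a group of size $n$, encoding them doubly-exponentially should suffice. I expect this lemma to be the main technical obstacle, requiring careful bookkeeping of Version III's update rule and how the count-free semantics interact with the iterated squaring strategy.

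Second, I would invoke the classical consequence of the fundamental theorem of finite Abelian groups: two finite Abelian groups $G$ and $H$ are isomorphic iff their multisets of element orders coincide. A quick argument: for each prime $p$ dividing $|G|$, the Sylow $p$-subgroup is determined by the sequence $\bigl(|\{g \in G : g^{p^j} = e\}|\bigr)_{j \geq 0}$, which is recoverable from the multiset of orders. Hence, when $G \not\cong H$ are Abelian, their order multisets differ. Since the WL color histogram refines the order histogram (by the lemma), the color multisets differ as well: summing color-class sizes over all colors mapping to a fixed order yields the number of elements of that order, so agreement of color multisets would force agreement of order multisets.

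Finally, for the complexity bound: each round of count-free WL is FO-expressible, so $O(\log\log n)$ rounds can be compiled into the $\textsf{FOLL}$ oracle. The $\beta_1$ quantifier provides $O(\log n)$ nondeterministic bits, which I would use to guess an element $g^\ast \in G \cup H$ whose color class has different cardinalities in $G$ versus $H$. The outer $\textsf{MAC}^{0}$ circuit then queries the $\textsf{FOLL}$ oracle to label every element of $G$ and $H$ by its WL color, and uses its majority gate to compare the two color-class cardinalities; if they disagree, the circuit accepts (witnessing non-isomorphism). Correctness follows from the multiset discrepancy established above, and the overall procedure respects the $\beta_{1}\textsf{MAC}^{0}(\textsf{FOLL})$ budget.
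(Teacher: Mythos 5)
Your plan is essentially the paper's own proof: the paper establishes exactly your first ingredient (count-free $(O(1),O(\log\log n))$-WL Version III separates elements of different orders, via a repeated-squaring/binary-expansion pebble argument in the multiplication-gadget graph), then uses the fact that non-isomorphic Abelian groups have different order multisets, guesses the discrepant color class with $O(\log n)$ nondeterministic bits, and compares the two class sizes with the single \textsf{Majority} output gate after compiling the $O(\log\log n)$ count-free rounds into the \textsf{FOLL} part. The only cosmetic difference is that the paper also dispenses in $O(1)$ rounds with the case where the second input group is non-Abelian, a step that is trivial (or handled by an \textsf{AC}$^0$ commutativity check) in your setting.
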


Here, $\beta_{1}\textsf{MAC}^{0}(\textsf{FOLL})$ denotes the composition of an $\textsf{FOLL}$ circuit with a $\beta_{1}\textsf{MAC}^{0}$ circuit. We refer to Section~\ref{sec:Complexity} for formal details of these complexity classes. Here, we will briefly sketch the algorithm in hopes of providing some intuition regarding the complexity bounds. We may easily check in $\textsf{AC}^{0}$ whether a group is Abelian. So suppose that $G, H$ are Abelian groups. We first compute the order of each element, for both groups by an $\mathsf{AC}$ circuit of depth $O(\log \log n)$, that is, in the class $\mathsf{FOLL}$ \cite{BKLM}. If $G \not \cong H$, then there exists $d \in \mathbb{N}$ where, without loss of generality, $G$ has more elements of order $d$ than $H$. We may use $O(\log n)$ non-deterministic bits to guess such a $d$ (this is the $\beta_{1}$), and a single $\textsf{Majority}$ gate to compare the number of elements of order $d$ in $G$ vs. $H$ (this is the $\textsf{MAC}^{0}$ circuit). We refer to the proof (see Theorem~\ref{ThmAbelian}) for full details.

\begin{remark}
We note that this and \Thm{ThmSemisimpleIntro} illustrate uses of WL for groups as a \emph{subroutine} in isomorphism testing, which is how it is so frequently used in the case of graphs. To the best of our knowledge, the only previous uses of WL as a subroutine for \algprobm{GpI} were in \cite{QiaoLiWL, BGLQW}. In particular, \Thm{thm:AbelianGpI} motivated follow-up work by Collins \& Levet \cite{CollinsLevetWL, CollinsUndergradThesis}, who leveraged count-free WL Version I in a similar manner to obtain novel parallel complexity bounds for isomorphism testing of several families of groups. Most notably, they improved the complexity of isomorphism testing for the \textit{CFI groups} from $\cc{TC}^1$ \cite{WLGroups} to $\beta_{1}\textsf{MAC}^{0}(\textsf{FOLL})$. The CFI groups are highly non-trivial, arising via Mekler's construction \cite{Mekler, HeQiao} from the CFI graphs \cite{CFI}.
\end{remark}

\begin{remark}
The previous best upper bounds for isomorphism testing of Abelian groups are linear time \cite{Kavitha, Vikas, Savage} and $\textsf{L} \cap \textsf{TC}^{0}(\textsf{FOLL})$ \cite{ChattopadhyayToranWagner}. As $\beta_{1}\textsf{MAC}^{0}(\textsf{FOLL}) \subseteq \textsf{TC}^{0}(\textsf{FOLL})$, \Thm{thm:AbelianGpI} improves the upper bound for isomorphism testing of Abelian groups.
\end{remark}

\paragraph{Methods.}
We find the comparison of methods at least as interesting as the comparison of complexity.  Here we discuss at a high level the methods we use for our main theorems above, and compare them to the methods of their predecessor results.

For \Thm{ThmMainQST}, its predecessor in Qiao--Sarma--Tang \cite{QST11} leveraged a result of Le Gall \cite{Gal09} on testing conjugacy of elements in the automorphism group of an Abelian group. (By further delving into the representation theory of Abelian groups, they were also able to solve the case where $H$ and $N$ are coprime and both are Abelian without any restriction on number of generators; we leave that as an open question in the setting of WL.) Here, we use the standard pebbling game equivalent to WL (see Section~\ref{sec:pebble}). Our approach is to first pebble generators for the complement $H$, which fixes an isomorphism between $H$ and its image. For groups that decompose as a coprime extension of $H$ and $N$, the isomorphism type is completely determined by the multiplicities of the indecomposable $H$-module direct summands (\Lem{LemTaunt}). So far, this is the same group-theoretic structure leveraged by Qiao, Sarma, and Tang \cite{QST11}. However, we then use the representation-theoretic fact that, since $|N|$ and $|H|$ are coprime, each indecomposable $H$-module is generated by a single element (\Lem{LemThevenaz}); this is crucial in our setting, as it allows Spoiler to pebble that one element in the WL pebbling game. Then, as the isomorphism of $H$ is fixed, we show that any subsequent bijection that Duplicator selects must restrict to $H$-module isomorphisms on each indecomposable $H$-submodule of $N$ that is a direct summand. 

For \Thm{ThmSemisimpleIntro}, solving isomorphism of semisimple groups took a series of two papers \cite{BCGQ, BCQ}. Our result is really only a parallel improvement on the first of these (we leave the second as an open question). In Babai \emph{et al.} \cite{BCGQ}, they used \algprobm{Code Equivalence} techniques to identify semisimple groups where the minimal normal subgroups have a bounded number of non-Abelian simple direct factors, and to identify general semisimple groups in time $n^{O(\log \log n)}$. In contrast, WL---along with individualize-and-refine in the second case---provides a single, combinatorial algorithm that is able to detect the same group-theoretic structures leveraged in previous works to solve isomorphism in these families.

In parallelizing Brachter \& Schweitzer's direct product result in \Thm{ThmProductCanonicalForm}, we use two techniques. The first is simply carefully analyzing the number of rounds used in many of the proofs. In several cases, a careful analysis of the rounds used was not sufficient to get a strong parallel result. In those cases, we use the notion of \emph{rank}, which may be of independent interest and have further uses. 

Given a subset $C$ of group elements, the $C$-rank of $g \in G$ is the minimal word-length over $C$ required to generate $g$. Our next result shows that if $C$ is easily identified by Weisfeiler--Leman, then WL can identify $\langle C \rangle$ in $O(\log n)$ rounds. More precisely, we say that $C \subseteq G$ is \emph{distinguished by the version $J$ pebble game with $k$ pebbles and $r$ rounds} if 
for all $g \in C$ and $g' \notin C$, Spoiler can win this pebble game, played on $G$ and $G$, starting from the configuration $g \mapsto g'$. We then have:


\begin{lemma}[Rank lemma; cf. Lemma~\ref{LemmaRank}] \label{LemmaRankIntro}
Let $k \geq 3$, and $J \in \{I,II\}$. Suppose $C \subseteq G$ is distinguished (as defined above) by the Version $J$ pebble game with $k$ pebbles and $r$ rounds. In the pebble game played on $G$ and another group $H$, suppose that Duplicator chooses a bijection $f$ such that for some $g \in G$, $\rk_{C}(g) \neq \rk_{f(C)}(f(g))$. Then Spoiler can win with $k+1$ pebbles and $r + O(\log d)$ rounds, where $d = \text{diam}(\text{Cay}(\langle C \rangle, C)) \leq |\langle C \rangle| \leq |G|$.
%
\end{lemma}

One application of our Rank Lemma is that WL identifies verbal subgroups where the words are easily identified. Given a set of words $w_1(x_1, \dotsc, x_n), \dotsc, w_m(x_1, \ldots, x_n)$, the corresponding \emph{verbal subgroup} is the subgroup generated by $\{w_i(g_1, \dotsc, g_n) : i=1,\dotsc, m, g_j \in G\}$. One example that we use in our results is the commutator subgroup. If Duplicator chooses a bijection $f : G \to H$ such that $f([x,y])$ is not a commutator in $H$, then Spoiler pebbles $[x,y] \mapsto f([x,y])$ and wins in two additional rounds. Thus, by our Rank Lemma, if Duplicator does not map the commutator subgroup $[G, G]$ to the commutator subgroup $[H,H]$, then Duplicator wins with $1$ additional pebble and $O(\log n)$ additional rounds.

Brachter \& Schweitzer \cite{BrachterSchweitzerWLLibrary} obtained a similar result about verbal subgroups using different techniques. Namely, they showed that if WL assigns a distinct coloring to certain subsets $S_{1}, \ldots, S_{t}$, then WL assigns a unique coloring to the set of group elements satisfying systems of equations over $S_{1}, \ldots, S_{t}$. They analyzed the WL colorings directly. As a result, it is not clear how to compose their result with the pebble game in a manner that also allows us to control rounds. For instance, while their result implies that if Duplicator does not map $f([G,G]) = [H,H]$ then Spoiler wins, it is not clear how Spoiler wins nor how quickly Spoiler can win. Our result addresses these latter two points more directly. Recall that the number of rounds is the crucial parameter affecting both the parallel complexity and quantifier depth.

\paragraph{Related Work.} There has been considerable work on efficient parallel ($\textsf{NC}$) isomorphism tests for graphs \cite{LindellTreeCanonization, BirgitKoblerMcKenzieToran, KoblerVerbitsky,WagnerBoundedTreewidth, ElberfeldSchweitzer, GroheVerbitsky, GroheKieferPlanar, DattaLimayeNimbhorkarPrajaktaThieraufWagner, datta_et_al:LIPIcs:2009:2314, ARVIND20121}. In contrast with the work on serial runtime complexity, the literature on the space and parallel complexity for $\algprobm{GpI}$ is quite minimal. Around the same time as Tarjan's $n^{\log_{p}(n) + O(1)}$-time algorithm for $\algprobm{GpI}$ \cite{MillerTarjan}, Lipton, Snyder, and Zalcstein showed that $\algprobm{GpI} \in \textsf{DSPACE}(\log^{2}(n))$ \cite{LiptonSnyderZalcstein}. This bound has been improved to $\beta_{2}\textsf{NC}^{2}$ ($\textsf{NC}^{2}$ circuits that receive $O(\log^{2}(n))$ non-deterministic bits as input) \cite{Wolf}, and subsequently to $\beta_{2}\textsf{L} \cap \beta_{2}\textsf{FOLL} \cap \beta_{2}\textsf{SC}^{2}$ \cite{ChattopadhyayToranWagner, TangThesis}. In the case of Abelian groups, Chattopadhyay, Tor\'an, and Wagner showed that $\algprobm{GpI} \in \textsf{L} \cap \textsf{TC}^{0}(\textsf{FOLL})$ \cite{ChattopadhyayToranWagner}. Tang showed that isomorphism testing for groups with a bounded number of generators can also be done in $\textsf{L}$ \cite{TangThesis}. Since composition factors of permutation groups can be identified in $\textsf{NC}$ \cite{BabaiLuksSeress} (see also \cite{BealsCompositionFactors} for a CFSG-free proof), isomorphism testing \emph{between} two permutation groups that are both direct products of simple groups (Abelian or non-Abelian) can be done in $\textsf{NC}$, using the regular representation, though this does not allow one to test isomorphism of such a group against an arbitrary permutation group. To the best of our knowledge, no other specific family of groups is known to admit an $\textsf{NC}$-computable isomorphism test prior to our paper.

Combinatorial techniques, such as individualization with Weisfeiler--Leman refinement, have also been incredibly successful in $\algprobm{GI}$, yielding efficient isomorphism tests for several families \cite{GroheVerbitsky, KieferPonomarenkoSchweitzer, GroheKieferPlanar, grohe_et_al:LIPIcs:2019:10693, grohe2019canonisation, BabaiWilmes, ChenSunTeng}. Weisfeiler--Leman is also a key subroutine in Babai's quasipolynomial-time isomorphism test \cite{BabaiGraphIso}. Despite the successes of such combinatorial techniques, they are known to be insufficient to place $\algprobm{GI}$ into $\textsf{P}$ \cite{CFI, NeuenSchweitzerIR}. In contrast, the use of combinatorial techniques for $\algprobm{GpI}$ is relatively new \cite{QiaoLiWL, BGLQW, WLGroups, BrachterSchweitzerWLLibrary}, and it is a central open problem as to whether such techniques are sufficient to improve even the long-standing upper-bound of $n^{\Theta(\log n)}$ runtime.

Examining the distinguishing power of the counting logic $\mathcal{C}_{k}$ serves as a measure of descriptive  complexity for groups. In the setting of graphs, the descriptive complexity has been extensively studied, with \cite{GroheBook} serving as a key reference in this area. There has been recent work relating first order logics and groups \cite{FiniteGroupsFOL}, as well as work examining the descriptive complexity of finite abelian groups \cite{DescriptiveComplexityAbelianGroups}. However, the work on the descriptive complexity of groups is scant compared to the algorithmic literature on \algprobm{GpI}. 

Ehrenfeucht--Fra\"iss\'e games \cite{Ehrenfeucht, Fraisse}, also known as pebbling games, serve as another tool in proving the inexpressibility of certain properties in first-order logics. 
Pebbling games have served as an important tool in analyzing graph properties like reachability \cite{ajtai_fagin_1990, ARORA199797}, designing parallel algorithms for graph isomorphism \cite{GroheVerbitsky}, and isomorphism testing of random graphs \cite{Rossman2009EhrenfeuchtFrassGO}.

\section{Preliminaries}

\subsection{Groups}

Unless stated otherwise, all groups are assumed to be finite and represented by their Cayley tables. 
For a group of order $n$, the Cayley table has $n^{2}$ entries, each represented by a binary string of size $\lceil \log_{2}(n) \rceil$. For an element $g$ in the group $G$, we denote the \textit{order} of $g$ by $|g|$. We use $d(G)$ to denote the minimum size of a generating set for the group $G$. 

The \textit{socle} of a group $G$, denoted $\text{Soc}(G)$, is the subgroup generated by the minimal normal subgroups of $G$. If $G$ has no Abelian normal subgroups, then $\text{Soc}(G)$ decomposes as the direct product of non-Abelian simple factors. The \textit{normal closure} of a subset $S \subseteq G$, which we denote $\text{ncl}(S)$, is the smallest normal subgroup of $G$ that contains $S$.

We say that a normal subgroup $N \trianglelefteq G$ \textit{splits} in $G$ if there exists a subgroup $H \leq G$ such that $H \cap N = \{1\}$ and $G = HN$. The conjugation action of $H$ on $N$ allows us to express multiplication of $G$ in terms of pairs $(h, n) \in H \times N$. We note that the conjugation action of $H$ on $N$ induces a group homomorphism $\theta : H \to \Aut(N)$ mapping $h \mapsto \theta_{h}$, where $\theta_{h} : N \to N$ sends $\theta_{h}(n) = hnh^{-1}$. So given $(H, N, \theta)$, we may define the group $H \ltimes_{\theta} N$ on the set $\{ (h, n) : h \in H, n \in N \}$ with the product $(h_{1}, n_{1})(h_{2}, n_{2}) = (h_{1}h_{2}, \theta_{h_{2}^{-1}}(n_{1})n_{2})$. We refer to the decomposition $G = H \ltimes_{\theta} N$ as a \textit{semidirect product} decomposition. When the action $\theta$ is understood, we simply write $G = H \ltimes N$. 

A \emph{Hall} subgroup of a group $G$ is a subgroup $N$ such that $|N|$ is coprime to $|G/N|$. We are particularly interested in semidirect products when $N$ is a normal Hall subgroup. To this end, we recall the Schur--Zassenhaus Theorem \cite[(9.1.2)]{Robinson1982}.

\begin{theorem}[Schur--Zassenhaus]
Let $G$ be a finite group of order $n$, and let $N$ be a normal Hall subgroup. Then there exists a complement $H \leq G$, such that $\text{gcd}(|H|, |N|) = 1$ and $G = H \ltimes N$. Furthermore, if $H$ and $K$ are complements of $N$, then $H$ and $K$ are conjugate.
\end{theorem}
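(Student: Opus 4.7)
The plan is to prove the existence and conjugacy statements separately, each by induction on $|G|$, with an abelian base case handled by an averaging argument that uses coprimality of $|N|$ and $|G/N|$ in an essential way.

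For existence, I would induct on $|G|$ and reduce to the case that $N$ contains no proper nontrivial subgroup that is also normal in $G$. Split on whether $N$ is abelian. In the \emph{abelian} (Schur) case, pick any set-theoretic section $s \colon G/N \to G$; its failure to be a homomorphism is measured by the cocycle $c(x,y) = s(x)s(y)s(xy)^{-1} \in N$. Because $\gcd(|N|, |G/N|) = 1$, multiplication by $|G/N|$ is an automorphism of $N$, so one can average $c$ to produce a $1$-cochain $t \colon G/N \to N$ whose coboundary equals $c^{-1}$; then $\tilde s(x) := t(x) s(x)$ is a homomorphism, and $H = \tilde s(G/N)$ is the desired complement. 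In the \emph{non-abelian} case, choose a prime $p$ dividing $|N|$ and a Sylow $p$-subgroup $P$ of $N$. By the Frattini argument, $G = N \cdot N_G(P)$, so $|N_G(P) : N_G(P) \cap N| = |G : N|$ and $N_G(P) \cap N$ is a normal Hall subgroup of $N_G(P)$. If $N_G(P) \lneq G$, induction on $N_G(P)$ yields a complement, which is also a complement of $N$ in $G$. Otherwise $P \trianglelefteq G$, so $Z(P)$ is a nontrivial characteristic abelian subgroup of $N$, hence normal in $G$; apply induction to $G/Z(P)$ with normal Hall subgroup $N/Z(P)$ to obtain $\bar H$, lift to the preimage (which now has abelian normal Hall subgroup $Z(P)$), and apply the abelian case once more.

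For conjugacy, let $H, K$ be two complements, and again induct on $|G|$. In the abelian case, the inclusions $H, K \hookrightarrow G$ define two sections of $G \to G/N$, and their difference can be written as a $1$-cocycle $H \to N$; the same averaging argument (again using $\gcd(|H|,|N|)=1$) shows this cocycle is a coboundary, producing an element $n \in N$ with $n H n^{-1} = K$. For the general case, I would invoke the Feit--Thompson odd order theorem to guarantee that at least one of $N$ or $G/N$ is solvable, which in turn guarantees the existence of a nontrivial abelian subgroup that is characteristic in $N$ (and hence normal in $G$) or a minimal normal subgroup of $G$ contained in $N$; reducing modulo this subgroup lets the induction go through, and a final application of the abelian base case inside the preimage completes the argument.

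The main obstacle is the conjugacy statement in the non-abelian case: it genuinely requires Feit--Thompson (or CFSG) to produce an abelian subgroup on which to reduce, whereas the existence proof only needs elementary structure theory via $Z(P)$. The abelian base cases for both parts are essentially the same averaging trick, and their correctness hinges entirely on coprimality making $|G/N|$ invertible as an operator on $N$; everything else is bookkeeping to reduce to that case.
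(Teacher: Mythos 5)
The paper does not actually prove this statement---it is recalled with a citation to Robinson (9.1.2)---so the comparison is with the standard literature proof, which is essentially the route you outline: existence by induction on $|G|$, with Schur's averaging argument in the abelian case and the Frattini-argument/$Z(P)$ reduction otherwise, and conjugacy reduced via Feit--Thompson to the case where $N$ or $G/N$ is solvable. Your existence half is correct as sketched, as is the abelian case of conjugacy.

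There is, however, a genuine gap in the general conjugacy case. Your proposed reduction (``quotient by a nontrivial abelian characteristic subgroup of $N$, or by a minimal normal subgroup of $G$ contained in $N$, then finish with the abelian base case in the preimage'') breaks down exactly when $N$ is a non-abelian minimal normal subgroup of $G$ and $G/N$ is solvable. Such an $N$ is characteristically simple (e.g.\ $N \cong A_5 \times A_5$), so it has no nontrivial proper characteristic subgroup, abelian or otherwise; and the only minimal normal subgroup of $G$ inside $N$ may be $N$ itself, in which case ``reducing modulo this subgroup'' is reducing modulo $N$, which yields no information, while the ``final application of the abelian base case inside the preimage'' is unavailable because the preimage is all of $G$ and $N$ is non-abelian. (For $M$ a minimal normal subgroup of $G$ with $M \lneq N$ your induction does close, since then $HM \lneq G$; the problem is solely the case $M = N$.) The missing idea---and the place where solvability of $G/N$ is actually used in Robinson's proof---is to take a minimal normal subgroup $M/N$ of the solvable group $G/N$, an elementary abelian $p$-group with $p \nmid |N|$. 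Then $H \cap M$ and $K \cap M$ are complements of $N$ in $M$, hence Sylow $p$-subgroups of $M$, so by Sylow's theorem $(H \cap M)^m = K \cap M =: P$ for some $m \in M$. Since $M$ is normal in $G$, both $H^m$ and $K$ normalize $P$, and they are complements of the normal Hall subgroup $L \cap N$ in $L := N_G(P)$ (one checks $LN = G$). If $L \lneq G$ one concludes by induction in $L$; if $L = G$ then $P$ is a nontrivial normal subgroup meeting $N$ trivially, and one inducts in $G/P$, noting both complements contain $P$. Without this Sylow/normalizer step the induction in your outline does not close.
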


We will use the following standard observation a few times:

\begin{fact} \label{FactWordLength}
Let $G = \langle g_1, \dotsc, g_d \rangle$. Then every element of $G$ can be written as a word in the $g_i$ of length at most $|G|$.
\end{fact}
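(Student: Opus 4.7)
The plan is to prove this by a straightforward breadth-first-search argument on word length. Define $S_k \subseteq G$ to be the set of elements of $G$ expressible as a word of length at most $k$ in the generators $g_1, \dotsc, g_d$, where $S_0 = \{1\}$ corresponds to the empty word. (Since $G$ is finite, each $g_i^{-1}$ is a positive power of $g_i$, so there is no need to allow formal inverses in our words.) By construction we have the chain of inclusions $S_0 \subseteq S_1 \subseteq S_2 \subseteq \cdots \subseteq G$, and $S_{k+1} = S_k \cup \{s \cdot g_j : s \in S_k,\ 1 \leq j \leq d\}$.

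The key observation is the following dichotomy: either $|S_{k+1}| > |S_k|$, or $S_k = S_{k+1}$, in which case $S_k$ is closed under right multiplication by each generator $g_j$, and therefore $S_k$ is a (right-)coset-closed subset containing $1$, hence a subgroup containing all generators, hence equal to $\langle g_1, \dotsc, g_d \rangle = G$. Once the chain stabilizes it has reached $G$.

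Since $|S_0| = 1$ and the sizes are bounded above by $|G|$, strict growth can occur at most $|G| - 1$ times. Therefore $S_k = G$ for some $k \leq |G| - 1 \leq |G|$, which is exactly the claimed bound. There is no real obstacle here; the argument is self-contained and does not depend on any of the preceding machinery in the paper. The only minor point worth stating explicitly in the written proof is the reduction from words with inverses to words with only positive powers (using finiteness of the cyclic subgroups $\langle g_i \rangle$), so that ``word of length at most $|G|$'' can be read in either convention without ambiguity.
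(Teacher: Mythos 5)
Your proof is correct, and it takes a genuinely different (though equally elementary) route from the paper's. The paper argues on the Cayley graph of $G$ with generating set $g_1, \dotsc, g_d$: a word is a walk, any walk revisiting a vertex can have the intervening loop erased since that segment evaluates to the identity, and a simple walk visits at most $|G|$ vertices. Yours is the standard ball-growth (breadth-first) argument: the sets $S_k$ of elements of word length at most $k$ strictly increase until they stabilize, stabilization forces closure under right multiplication by the generators and hence $S_k = G$ (your phrase ``coset-closed subset'' is odd, but the conclusion is right --- closure under right multiplication by each $g_j$ plus finiteness gives $S_k g_j = S_k$, hence closure under $g_j^{-1}$ as well, so $S_k$ contains the full subgroup generated), and strict growth can happen at most $|G|-1$ times. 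Both arguments in fact yield the slightly sharper bound $|G|-1$. The paper's loop-erasure proof shows that any \emph{given} expression can be shortened, and it sits naturally next to the paper's later use of $\mathrm{diam}(\mathrm{Cay}(\langle C\rangle, C))$ in the Rank Lemma; your growth argument instead directly bounds that diameter, aligns with the algorithmic (BFS) viewpoint used elsewhere in the parallel-complexity analysis, and handles the positive-word convention cleanly via your remark that $g_i^{-1}$ is a positive power of $g_i$ (used only to justify the convention, not to rewrite long words, so no length blow-up issue arises).
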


\begin{proof}
Consider the Cayley graph of $G$ with generating set $g_1, \dotsc, g_d$. Words correspond to walks in this graph. We need only consider simple walks---those which never visit any vertex more than once---since if a walk visits a group element $g$ more than once, then the part of that walk starting and ending at $g$ is a word that equals the identity element, so it can be omitted. But the longest simple walk is at most the number of vertices, which is $|G|$.
\end{proof}

\subsection{Weisfeiler--Leman}

We begin by recalling the Weisfeiler--Leman algorithm for graphs, which computes an isomorphism-invariant coloring. Let $\Gamma$ be a graph, and let $k \geq 2$ be an integer. The $k$-dimensional Weisfeiler--Leman, or $k$-WL, algorithm begins by constructing an initial coloring $\chi_{k,0} : V(\Gamma)^{k} \to \mathcal{K}$, where $\mathcal{K}$ is our set of colors, by assigning each $k$-tuple a color based on its isomorphism type. That is, two $k$-tuples $(v_{1}, \ldots, v_{k})$ and $(u_{1}, \ldots, u_{k})$ receive the same color under $\chi_{k,0}$ iff the map $v_i \mapsto u_i$ (for all $i \in [k]$) is an isomorphism of the induced subgraphs $\Gamma[\{ v_{1}, \ldots, v_{k}\}]$ and $\Gamma[\{u_{1}, \ldots, u_{k}\}]$ and for all $i, j$, $v_i = v_j \Leftrightarrow u_i = u_j$. 

For $r \geq 1$, the coloring computed at the $r$th iteration of  Weisfeiler--Leman is refined as follows. For a $k$-tuple $\overline{v} = (v_{1}, \ldots, v_{k})$ and a vertex $x \in V(\Gamma)$, define
\[
\overline{v}(v_{i}/x) = (v_{1}, \ldots, v_{i-1}, x, v_{i+1}, \ldots, v_{k}).
\]

The coloring computed at the $(r+1)$st iteration, denoted $\chi_{k, r+1}$, stores the color of the given $k$-tuple $\overline{v}$ at the $r$th iteration, as well as the colors under $\chi_{k,r}$ of the $k$-tuples obtained by substituting a single vertex in $\overline{v}$ for another vertex $x$. We examine this multiset of colors over all such vertices $x$. This is formalized as follows:
\begin{align*}
\chi_{k,r+1}(\overline{v}) = &( \chi_{r}(\overline{v}), \{\!\!\{ ( \chi_{r}(\overline{v}(v_{1}/x)), \ldots, \chi_{r}(\overline{v}(v_{k}/x) ) \bigr| x \in V(\Gamma) \}\!\!\} ),
\end{align*}
where $\{\!\!\{ \cdot \}\!\!\}$ denotes a multiset.

Note that the coloring $\chi_{k,r}$ computed at iteration $r$ induces a partition of $V(\Gamma)^{k}$ into color classes. The Weisfeiler--Leman algorithm terminates when this partition is not refined, that is, when the partition induced by $\chi_{k,r+1}$ is identical to that induced by $\chi_{k,r}$. The final coloring is referred to as the \textit{stable coloring}, which we denote $\chi_{k,\infty} := \chi_{k,r}$. 

The \textit{count-free} variant of $k$-WL works identically as the classical variant, except at the refinement step, we consider the set of colors rather than the full multi-set. We re-use the notation $\chi_{k,r}$ to denote the coloring computed by count-free $(k,r)$-WL; context should make it clear whether $\chi_{k,r}$ refers to count-free or counting WL (we never use $\chi_{k,r}$ to denote the count-free coloring when discussing counting WL, nor vice versa).
Precisely:
\begin{align*}
\chi_{k,r+1}(\overline{v}) = &( \chi_{r}(\overline{v}), \{ ( \chi_{r}(\overline{v}(v_{1}/x)), \ldots, \chi_{r}(\overline{v}(v_{k}/x) ) \bigr| x \in V(\Gamma) \} ).
\end{align*}

Let $k \geq 2, r \geq 0$, and let $G$ be a graph. We say that the classical counting variant of $(k,r)$-WL \textit{distinguishes} $G$ from the graph $H$ if there exists a color class $C$ such that:
\[
|\{ \bar{x} \in V(G)^k : \chi_{k,r}(\bar{x}) = C \}| \neq |\{ \bar{x} \in V(H)^{k} : \chi_{k,r}(\bar{x}) = C\}|.
\]
Similarly, the count-free variant of $(k,r)$-WL \textit{distinguishes} $G$ from the graph $H$ if $|V(G)| \neq |V(H)|$ or there exists a color class $C$ and some $\bar{x} \in V(G)^k$ where $\chi_{k,r}(\bar{x}) = C$, but for all $\bar{y} \in V(H)^k$, $\chi_{k,r}(\bar{y}) \neq C$. We say that the classical counting (resp., count-free) $(k,r)$-WL \textit{identifies} $G$ if for all $H \not \cong G$, $(k,r)$-WL (resp., count-free $(k,r)$-WL) distinguishes $G$ from $H$. The terms \textit{distinguish} and \textit{identify} also extend in the natural way to WL on groups.

Brachter \& Schweitzer introduced three variants of WL for groups. We will restrict attention to the first two variants. WL Versions I and II are both executed directly on the Cayley tables, where $k$-tuples of group elements are initially colored. For WL Version I, two $k$-tuples $(g_{1}, \ldots, g_{k})$ and $(h_{1}, \ldots, h_{k})$ receive the same initial color iff (a) for all $i, j, \ell \in [k]$, $g_{i}g_{j} = g_{\ell} \iff h_{i}h_{j} = h_{\ell}$, and (b) for all $i, j \in [k]$, $g_{i} = g_{j} \iff h_{i} = h_{j}$. For WL Version II, $(g_{1}, \ldots, g_{k})$ and $(h_{1}, \ldots, h_{k})$ receive the same initial color iff the map $g_{i} \mapsto h_{i}$ for all $i \in [k]$ extends to an isomorphism of the generated subgroups $\langle g_{1}, \ldots, g_{k} \rangle$ and $\langle h_{1}, \ldots, h_{k} \rangle$. For both WL Versions I and II, refinement is performed in the classical manner as for graphs. Namely, for a given $k$-tuple $\overline{g}$ of group elements,
\begin{align*}
\chi_{k,r+1}(\overline{g}) = &( \chi_{r}(\overline{g}), \{\!\!\{ ( \chi_{r}(\overline{g}(g_{1}/x)), \ldots, \chi_{r}(\overline{g}(g_{k}/x) ) \bigr| x \in G \}\!\!\} ).
\end{align*}

The count-free variants of WL Versions I and II are defined in the identical manner as for graphs.

\subsection{Pebbling Game} \label{sec:pebble}

We recall the bijective pebble game introduced by Hella \cite{Hella1989, Hella1993} for WL on graphs. This game is often used to show that two graphs $X$ and $Y$ cannot be distinguished by $k$-WL. The game is an Ehrenfeucht--Fra\"iss\'e game (cf., \cite{Ebbinghaus:1994, Libkin}), with two players: Spoiler and Duplicator. We begin with $k+1$ pairs of pebbles. Prior to the start of the game, each pebble pair $(p_{i}, p_{i}')$ is initially placed either beside the graphs or on a given pair of vertices $v_{i} \mapsto v_{i}'$ (where $v_{i} \in V(X), v_{i}' \in V(Y)$). Each round $r$ proceeds as follows.
\begin{enumerate}
\item Spoiler picks up a pair of pebbles $(p_{i}, p_{i}^{\prime})$. 
\item We check the winning condition, which will be formalized later.
\item Duplicator chooses a bijection $f_{r} : V(X) \to V(Y)$ (where here, we emphasize that the bijection chosen depends on the round- and, implicitly, the pebbling configuration at the start of said round).
\item Spoiler places $p_{i}$ on some vertex $v \in V(X)$. Then $p_{i}^{\prime}$ is placed on $f(v)$. 
\end{enumerate} 

Let $v_{1}, \ldots, v_{m}$ be the vertices of $X$ pebbled at the end of step 1 at round $r$ of the game, and let $v_{1}^{\prime}, \ldots, v_{m}^{\prime}$ be the corresponding pebbled vertices of $Y$. Spoiler wins precisely if the map $v_{\ell} \mapsto v_{\ell}^{\prime}$ does not extend to an isomorphism of the induced subgraphs $X[\{v_{1}, \ldots, v_{m}\}]$ and $Y[\{v_{1}^{\prime}, \ldots, v_{m}^{\prime}\}]$. Duplicator wins otherwise. Spoiler wins, by definition, at round $0$ if $X$ and $Y$ do not have the same number of vertices. We note that $\overline{v} \in X^{k}$ and $\overline{v'} \in Y^{k}$ are not distinguished by the first $r$ rounds of $k$-WL if and only if Duplicator wins the first $r$ rounds of the $(k+1)$-pebble game starting from the configuration $\overline{v} \mapsto \overline{v'}$ \cite{Hella1989, Hella1993, CFI}. 

For groups instead of graphs, Versions I and II of the pebble game are defined analogously, where Spoiler pebbles group elements on the Cayley tables. Precisely, for groups $G$ and $H$, each round proceeds as follows.
\begin{enumerate}
\item Spoiler picks up a pair of pebbles $(p_{i}, p_{i}^{\prime})$. 
\item We check the winning condition, which will be formalized later.
\item Duplicator chooses a bijection $f_{r} : G \to H$.
\item Spoiler places $p_{i}$ on some vertex $g \in G$. Then $p_{i}^{\prime}$ is placed on $f(g)$. 
\end{enumerate} 

Suppose that $(g_{1}, \ldots, g_{\ell}) \mapsto (h_{1}, \ldots, h_{\ell})$ have been pebbled. In Version I, Duplicator wins at the given round if this map satisfies the initial coloring condition of WL Version I: (a) for all $i, j, m \in [\ell]$, $g_{i}g_{j} = g_{m} \iff h_{i}h_{j} = h_{m}$, and (b) for all $i, j \in [\ell]$, $g_{i} = g_{j} \iff h_{i} = h_{j}$. In Version II, Duplicator wins at the given round if the map $(g_{1}, \ldots, g_{\ell}) \mapsto (h_{1}, \ldots, h_{\ell})$ extends to an isomorphism of the generated subgroups $\langle g_{1}, \ldots, g_{\ell} \rangle$ and $\langle h_{1}, \ldots, h_{\ell} \rangle.$ Brachter \& Schweitzer established that for $J \in \{I, II\}$, $(k,r)$-WL Version J is equivalent to version J of the $(k+1)$-pebble, $r$-round pebble game \cite{WLGroups}. 

\subsection{Weisfeiler--Leman as a Parallel Algorithm} \label{sec:WLparallel}

Grohe \& Verbitsky \cite{GroheVerbitsky} previously showed that for fixed $k$, the classical $k$-dimensional Weisfeiler--Leman algorithm for graphs can be effectively parallelized. More precisely, each iteration (including the initial coloring) can be implemented using a logspace uniform $\textsf{TC}^{0}$ circuit. As they mention \cite[Remark~3.4]{GroheVerbitsky}, their implementation works for any first-order structure, including groups. However, because for groups we have different versions of WL, we explicitly list out the resulting parallel complexities, which differ slightly between the versions.

\begin{itemize}
\item \textbf{WL Version I:} Let $(g_{1}, \ldots, g_{k})$ and $(h_{1}, \ldots, h_{k})$ be two $k$-tuples of group elements. We may test in $\textsf{AC}^{0}$ whether (a) for all $i, j, m \in [k]$, $g_{i}g_{j} = g_{m} \iff h_{i}h_{j} = h_{m}$, and (b) $g_{i} = g_{j} \iff h_{i} = h_{j}$. So we may decide if two $k$-tuples receive the same initial color in $\textsf{AC}^{0}$. Comparing the multiset of colors at the end of each iteration (including after the initial coloring), as well as the refinement steps, proceed identically as in \cite{GroheVerbitsky}. Thus, for fixed $k$, each iteration of $k$-WL Version I can be implemented using a logspace uniform $\textsf{TC}^{0}$. 

\item \textbf{WL Version II:} Let $(g_{1}, \ldots, g_{k})$ and $(h_{1}, \ldots, h_{k})$ be two $k$-tuples of group elements. We may use the marked isomorphism test of Tang \cite{TangThesis} to test in $\textsf{L}$ whether the map sending $g_{i} \mapsto h_{i}$ for all $i \in [k]$ extends to an isomorphism of the generated subgroups $\langle g_{1}, \ldots, g_{k} \rangle$ and $\langle h_{1}, \ldots, h_{k} \rangle$. So we may decide whether two $k$-tuples receive the same initial color in $\textsf{L}$. Comparing the multiset of colors at the end of each iteration (including after the initial coloring), as well as the refinement steps, proceed identically as in \cite{GroheVerbitsky}. Thus, for fixed $k$, the initial coloring of $k$-WL Version II is $\textsf{L}$-computable, and each refinement step is $\textsf{TC}^{0}$-computable.
\end{itemize}

\subsection{Complexity Classes} \label{sec:Complexity}
We assume familiarity with the complexity classes $\textsf{P}, \textsf{NP}, \textsf{L}$, $\textsf{NL}$, $\textsf{NC}^{k}, \textsf{AC}^{k}$, and $\textsf{TC}^{k}$- we defer the reader to standard references \cite{ComplexityZoo, AroraBarak}. The complexity class $\textsf{SAC}^{k}$ is defined analogously to $\textsf{AC}^{k}$, except that the $\textsf{AND}$ gates have bounded fan-in (while the OR gates may still have unbounded fan-in). The complexity class $\textsf{FOLL}$ is the set of languages decidable by uniform circuit families with \textsf{AND}, \textsf{OR}, and \textsf{NOT} gates of depth $O(\log \log n)$, polynomial size, and unbounded fan-in. It is known that $\textsf{AC}^{0} \subsetneq \textsf{FOLL} \subsetneq \textsf{AC}^{1}$, and it is open as to whether $\textsf{FOLL}$ is contained in $\textsf{NL}$ \cite{BKLM}. 

The complexity class $\textsf{MAC}^{0}$ is the set of languages decidable by constant-depth uniform circuit families with a polynomial number of \textsf{AND}, \textsf{OR}, and \textsf{NOT} gates, and at most one $\textsf{Majority}$ gate. The class $\textsf{MAC}^{0}$ was introduced (but not so named) in \cite{VotingPolynomials}, where it was shown that $\textsf{MAC}^{0} \subsetneq \textsf{TC}^{0}$. This class was subsequently given the name $\textsf{MAC}^{0}$ in \cite{LearnabilityAC0}.

For a complexity class $\mathcal{C}$, we define $\beta_{i}\mathcal{C}$ to be the set of languages $L$ such that there exists an $L' \in \mathcal{C}$ such that $x \in L$ if and only if there exists $y$ of length at most $O(\log^{i} |x|)$ such that $(x, y) \in L'$. For any $i, c \geq 0$, $\beta_{i}\textsf{FO}((\log \log n)^{c})$ cannot compute \algprobm{Parity} \cite{ChattopadhyayToranWagner}.

We will also allow circuits to compute functions by using multiple output gates. For function complexity classes $\mathcal{C}_{1}, \mathcal{C}_{2}$, the complexity class $\mathcal{C}_{1}(\mathcal{C}_{2})$ is the class of $h = g \circ f$, where $g$ 
is $\mathcal{C}_{1}$-computable and 
$f$ 
is $\mathcal{C}_{2}$-computable . For instance, $\beta_{1}\textsf{MAC}^{0}(\textsf{FOLL})$ is the set of functions $h = g \circ f$, where $f$ is $\textsf{FOLL}$-computable and $g$ is $\beta_{1}\textsf{MAC}^{0}$-computable.

The function class $\textsf{FP}$ is the class of polynomial-time computable functions and $\textsf{FL}$ is the class of logspace-computable functions.

\section{Weisfeiler--Leman for coprime extensions}

In this section, we consider groups that admit a Schur--Zassenhaus decomposition of the form $G = H \ltimes N$, where $N$ is Abelian, and $H$ is $O(1)$-generated and $|H|$ and $|N|$ are coprime. Qiao, Sarma, and Tang \cite{QST11} previously exhibited a polynomial-time isomorphism test for this family of groups, as well as the family where $H$ and $N$ are arbitrary Abelian groups of coprime order. This was extended by Babai \& Qiao \cite{BQ} to groups where $|H|$ and $|N|$ are coprime, $N$ is Abelian, and $H$ is an arbitrary group given by generators in any class of groups for which isomorphism can be solved efficiently. Among the class of such coprime extensions where $H$ is $O(1)$-generated and $N$ is Abelian, we are able to improve the parallel complexity to $\cc{L}$ via WL Version II. 

\subsection{Additional preliminaries for groups with Abelian normal Hall subgroup} \label{sec:PrelimsQST}
Here we recall additional preliminaries needed for our algorithm in the next section. None of the results in this section are new, though in some cases we have rephrased the known results in a form more useful for our analysis.

Recall that a \emph{Hall} subgroup of a group $G$ is a subgroup $N$ such that $|N|$ is coprime to $|G/N|$. When a Hall subgroup is normal, we refer to the group as a coprime extension. Coprime extensions are determined entirely by the isomorphism types of $N$, $H$ and their actions: 

\begin{lemma}[{Taunt \cite{Taunt1955}}]  \label{LemmaSemidirect}
Let $G = H \ltimes_{\theta} N$ and $\hat{G} = \hat{H} \ltimes_{\hat{\theta}} \hat{N}$. If $\alpha : H \to \hat{H}$ and $\beta : N \to \hat{N}$ are isomorphisms such that for all $h \in H$ and all $n \in N$,
\[
\hat{\theta}_{\alpha(h)}(n) = (\beta \circ \theta_{h} \circ \beta^{-1})(n),
\]
then the map $(h, n) \mapsto (\alpha(h), \beta(n))$ is an isomorphism of $G \cong \hat{G}$. Conversely, if $G$ and $\hat{G}$ are isomorphic and $|H|$ and $|N|$ are coprime, then there exists an isomorphism of this form.
\end{lemma}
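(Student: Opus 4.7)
My plan is to handle the two directions separately, with the forward direction being a routine computation and the converse leaning on the Schur--Zassenhaus theorem together with uniqueness of normal Hall subgroups.

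For the forward direction, I would simply define $\varphi\colon G \to \hat{G}$ by $\varphi(h,n) = (\alpha(h), \beta(n))$ and verify directly that it respects the semidirect product multiplication. Both products $\varphi((h_1,n_1)(h_2,n_2))$ and $\varphi(h_1,n_1)\varphi(h_2,n_2)$ expand, using the formula $(h,n)(h',n') = (hh', \theta_{(h')^{-1}}(n)n')$, so that equality reduces to checking
\[
\beta(\theta_{h_2^{-1}}(n_1)) = \hat{\theta}_{\alpha(h_2^{-1})}(\beta(n_1)),
\]
which is exactly the hypothesis $\hat{\theta}_{\alpha(h)} = \beta \circ \theta_h \circ \beta^{-1}$ applied at $h = h_2^{-1}$ and evaluated at $\beta(n_1)$. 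Bijectivity is clear from $\alpha$ and $\beta$ being bijections, so $\varphi$ is an isomorphism.

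For the converse, suppose $\varphi\colon G \to \hat{G}$ is an isomorphism and $|H|, |N|$ are coprime. The first step is to arrange that $\varphi$ restricts to an isomorphism $N \to \hat{N}$. Since $\varphi(N)$ is normal in $\hat{G}$ of order $|N| = |\hat{N}|$, coprime to its index, it is a normal Hall subgroup of $\hat{G}$. Normal Hall subgroups of a fixed order are unique (if $K, K'$ are both such, then $KK'$ is a normal $\pi$-subgroup with $|KK'|$ dividing $|\hat{G}|$ only to the same power of primes in $\pi$, forcing $K = K' = KK'$). Hence $\varphi(N) = \hat{N}$. The second step is to arrange that $\varphi$ also sends $H$ to $\hat{H}$: $\varphi(H)$ is a complement of $\hat{N}$ in $\hat{G}$, and by Schur--Zassenhaus it is conjugate to $\hat{H}$ via some $\hat{g} \in \hat{G}$. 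Replacing $\varphi$ by $c_{\hat{g}} \circ \varphi$ (which still sends $N$ to $\hat{N}$ since $\hat{N}$ is normal) we may assume $\varphi(H) = \hat{H}$.

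Now set $\alpha = \varphi|_H$ and $\beta = \varphi|_N$; these are isomorphisms by construction. The compatibility condition comes out of $\varphi$ being a homomorphism applied to the conjugation relation $\theta_h(n) = hnh^{-1}$: on one hand $\varphi(hnh^{-1}) = \beta(\theta_h(n))$ since $hnh^{-1} \in N$, while on the other $\varphi(h)\varphi(n)\varphi(h)^{-1} = \alpha(h)\beta(n)\alpha(h)^{-1} = \hat{\theta}_{\alpha(h)}(\beta(n))$. Equating the two gives $\beta \circ \theta_h = \hat{\theta}_{\alpha(h)} \circ \beta$, i.e.\ the desired identity. The only genuinely nontrivial ingredient is the Schur--Zassenhaus step (together with the uniqueness of normal Hall subgroups), which is precisely where the coprimality hypothesis is used; everything else is bookkeeping.
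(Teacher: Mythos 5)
Your proof is correct. Note that the paper does not actually prove \Lem{LemmaSemidirect} itself (it is cited to Taunt), so the natural comparison is with the paper's proof of the specialized version, \Lem{LemTaunt}. Your forward direction is the same routine verification. In the converse, your route and the paper's share the main ingredients---coprimality forces the image of $N$ to be the normal Hall subgroup (you argue via uniqueness of normal Hall $\pi$-subgroups, the paper via ``$N$ is characteristic,'' which is the same fact), and the compatibility identity falls out of applying $\varphi$ to $\theta_h(n) = hnh^{-1}$. The one genuine difference is your extra step of composing $\varphi$ with a conjugation, supplied by the conjugacy part of Schur--Zassenhaus, so that $\varphi(H) = \hat{H}$ before setting $\alpha = \varphi|_H$. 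This step is actually needed for the statement as given (otherwise $\varphi|_H$ lands in some complement of $\hat{N}$ rather than in $\hat{H}$), and it is precisely the point that the paper's proof of \Lem{LemTaunt} glosses over when it sets $\alpha = \varphi|_H$ directly; there the gap is harmless because conjugation can be absorbed (inner automorphisms of the Abelian module act trivially), but your explicit fix is the cleaner way to get the literal conclusion of \Lem{LemmaSemidirect}. One small remark: you implicitly use $|N| = |\hat{N}|$ and $|H| = |\hat{H}|$; this is the intended reading of the statement (both groups are coprime extensions with matching Hall decompositions), but it is worth flagging since the lemma as phrased only asserts coprimality of $|H|$ and $|N|$.
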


\begin{remark}
\Lem{LemmaSemidirect} can be significantly generalized to arbitrary extensions where the subgroup is characteristic. When the characteristic subgroup is Abelian, this is standard in group theory, and has been useful in practical isomorphism testing (see, e.g., \cite{Holt2005HandbookOC}). In general, the equivalence of group extensions deals with both \algprobm{Action Compatibility} and \algprobm{Cohomology Class Isomorphism}. Generalizations of cohomology to non-Abelian coefficient groups was done by Dedecker in the 1960s (e.g. \cite{dedecker}) and Inassaridze at the turn of the 21st century \cite{inassaridze}. Unaware of this prior work on non-Abelian cohomology at the time, Grochow \& Qiao re-derived some of it in the special case of $H^2$---the cohomology most immediately relevant to group extensions and the isomorphism problem---and showed how it could be applied to isomorphism testing \cite[Lemma~2.3]{GQcoho}, generalizing Taunt's Lemma. In the setting of coprime extensions, the Schur--Zassenhaus Theorem provides that the first and second cohomology is trivial. Thus, in our setting we need only consider \algprobm{Action Compatibility}.
\end{remark}

A $\Z H$-module is an abelian group $N$ together with an action of $H$ on $N$, given by a group homomorphism $\theta\colon H \to \Aut(N)$. We refer to these as ``$H$-modules.'' A \emph{submodule} of an $H$-module $N$ is a subgroup $N' \leq N$ such that the action of $H$ on $N'$ sends $N'$ into itself, and thus the restriction of the action of $H$ to $N'$ gives $N'$ the structure of an $H$-module compatible with that on $N$. Given a subset $S \subseteq N$, the smallest $H$-submodule containing $S$ is denoted $\langle S \rangle_H$, and is referred to as the $H$-submodule \emph{generated by} $S$. An $H$-module generated by a single element is called \emph{cyclic}. Note that a cyclic $H$-module $N$ need not be a cyclic Abelian group.

Two $H$-modules $N,N'$ are isomorphic (as $H$-modules), denoted $N \cong_H N'$, if there is a group isomorphism $\varphi\colon N \to N'$ that is $H$-equivariant, in the sense that $\varphi(\theta(h)(n)) = \theta'(h)(\varphi(n))$ for all $h \in H, n \in N$. An $H$-module $N$ is \emph{decomposable} if $N \cong_H N_1 \oplus N_2$ where $N_1, N_2$ are nonzero $H$-modules (and the direct sum can be thought of as a direct sum of Abelian groups); otherwise $N$ is \emph{indecomposable}. An equivalent characterization of $N$ being decomposable is that there are nonzero $H$-submodules $N_1, N_2$ such that $N = N_1 \oplus N_2$ as Abelian groups (that is, $N$ is generated as a group by $N_1$ and $N_2$, and $N_1 \cap N_2 = 0$). The Remak--Krull--Schmidt Theorem says that every $H$-module decomposes as a direct sum of indecomposable modules, and that the multiset of $H$-module isomorphism types of the indecomposable modules appearing is independent of the choice of decomposition, that is, it depends only on the $H$-module isomorphism type of $N$. We may thus write 
\[
N \cong_H N_1^{\oplus m_1} \oplus N_2^{\oplus m_2} \oplus \dotsb \oplus N_k^{\oplus m_k}
\]
unambiguously, where the $N_i$ are pairwise non-isomorphic indecomposable $H$-modules. When we refer to the multiplicity of an indecomposable $H$-module as a direct summand in $N$, we mean the corresponding $m_i$.\footnote{For readers familiar with (semisimple) representations over fields, we note that the multiplicity is often equivalently defined as $\dim_\F \text{Hom}_{\F H}(N_i, N)$. However, when we allow $N$ to be an Abelian group that is not elementary Abelian, we are working with $(\Z/p^k \Z)[H]$-modules, and the characterization in terms of hom sets is more complicated, because one indecomposable module can be a submodule of another, which does not happen with semisimple representations. }

The version of Taunt's Lemma that will be most directly useful for us is:
\begin{lemma} \label{LemTaunt}
Suppose that $G_i = H \ltimes_{\theta_i} N$ for $i=1,2$ are two semi-direct products with $|H|$ coprime to $|N|$. Then $G_1 \cong G_2$ if and only if there is an automorphism $\alpha \in \Aut(H)$ such that each indecomposable $\Z H$-module appears as a direct summand in $(N, \theta_1)$ and in $(N, \theta_2 \circ \alpha)$ with the same multiplicity.
\end{lemma}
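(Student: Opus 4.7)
The plan is to reformulate Lemma~\ref{LemmaSemidirect} in the language of $\mathbb{Z}H$-modules and then invoke the Remak--Krull--Schmidt theorem. The key observation is that the compatibility condition $\hat{\theta}_{\alpha(h)}(n) = (\beta \circ \theta_h \circ \beta^{-1})(n)$ appearing in Lemma~\ref{LemmaSemidirect}, when rewritten as $\beta(\theta_1(h)(n')) = \theta_2(\alpha(h))(\beta(n'))$, is precisely the statement that $\beta \colon N \to N$ is a $\mathbb{Z}H$-module isomorphism from $(N,\theta_1)$ to $(N, \theta_2 \circ \alpha)$, where the second module's $H$-action is $h \cdot n := \theta_2(\alpha(h))(n)$. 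So the existence of $\alpha$ and $\beta$ satisfying that compatibility condition is equivalent to the existence of some $\alpha \in \Aut(H)$ for which $(N,\theta_1) \cong_H (N, \theta_2 \circ \alpha)$.

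For the $(\Leftarrow)$ direction, assume the multiplicity condition holds for some $\alpha \in \Aut(H)$. By Remak--Krull--Schmidt, agreement of the multisets of indecomposable $\mathbb{Z}H$-summands implies $(N,\theta_1) \cong_H (N,\theta_2 \circ \alpha)$ via some $H$-equivariant group isomorphism $\beta \colon N \to N$. The pair $(\alpha,\beta)$ then satisfies the hypothesis of Lemma~\ref{LemmaSemidirect}, so the map $(h,n) \mapsto (\alpha(h), \beta(n))$ witnesses $G_1 \cong G_2$.

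For the $(\Rightarrow)$ direction, assume $G_1 \cong G_2$. Since $|H|$ and $|N|$ are coprime, the converse portion of Lemma~\ref{LemmaSemidirect} yields isomorphisms $\alpha \colon H \to H$ and $\beta \colon N \to N$ satisfying the compatibility condition. By the reformulation above, $\beta$ is a $\mathbb{Z}H$-module isomorphism $(N,\theta_1) \cong_H (N,\theta_2 \circ \alpha)$, and Remak--Krull--Schmidt then forces the two $H$-modules to have identical multisets of indecomposable summands, which is the desired multiplicity condition.

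The main obstacle is essentially bookkeeping: one must be careful that precomposing $\theta_2$ with $\alpha$ (rather than, say, conjugating the whole isomorphism data by $\alpha$) is what correctly captures the freedom to reshuffle $H$ via an automorphism before comparing module structures. Once this translation is made precise, the lemma is a direct consequence of Lemma~\ref{LemmaSemidirect} together with Remak--Krull--Schmidt; no additional group-theoretic input is required, and in particular the coprimality of $|H|$ and $|N|$ is used only through Lemma~\ref{LemmaSemidirect} (so that the converse there applies and so that the $\mathbb{Z}H$-module $N$ is actually a well-defined object of study independent of extension data).
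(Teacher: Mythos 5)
Your proposal is correct and follows essentially the same route as the paper: the compatibility condition of Lemma~\ref{LemmaSemidirect} is read as saying $\beta$ is a $\Z H$-module isomorphism $(N,\theta_1) \cong_H (N,\theta_2\circ\alpha)$, and Remak--Krull--Schmidt converts this to the multiplicity statement. The only cosmetic difference is that for the forward direction the paper re-derives the relevant part of Lemma~\ref{LemmaSemidirect} inline ($N$ is characteristic by coprimality, so $\varphi$ restricts to $\alpha = \varphi|_H$, $\beta = \varphi|_N$), whereas you invoke its converse directly, which is equally valid.
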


The lemma and its proof are standard but we include it for completeness.

\begin{proof}
If there is an automorphism $\alpha \in \Aut(H)$ such that the multiplicity of each indecomposable $\Z H$-module as a direct summand of $(N, \theta_1)$ and $(N, \theta_2 \circ \alpha)$ are the same, then there is a $\Z H$-module isomorphism $\beta \colon (N, \theta_1) \to (N, \theta_2 \circ \alpha)$ (in particular, $\beta$ is an automorphism of $N$ as a group). Then it is readily verified that the map $(h,n) \mapsto (\alpha(h), \beta(n))$ is an isomorphism of the two groups.

Conversely, suppose that $\varphi \colon G_1 \to G_2$ is an isomorphism. Since $|H|$ and $|N|$ are coprime, $N$ is characteristic in $G_i$, so we have $\varphi(N) = N$. And by order considerations $\varphi(H)$ is a complement to $N$ in $G_2$. We have $\theta_1(h)(n) = hnh^{-1}$. Since $\varphi$ is an isomorphism, we have $\varphi(\theta_1(h)(n)) = \varphi(hnh^{-1}) = \varphi(h) \varphi(n) \varphi(h)^{-1} = \theta_2(\varphi(h))(\varphi(n))$. Thus $\theta_1(h)(n) = \varphi^{-1}(\theta_2(\varphi(h))(\varphi(n)))$. So we may let $\alpha = \varphi|_H$, and then we have that $(N,\theta_1)$ is isomorphic to $(N, \theta_2 \circ \varphi|_H)$, where the isomorphism of $H$-modules is given by $\varphi|_N$. The Remak--Krull--Schmidt Theorem then gives the desired equality of multiplicities.
\end{proof}

The following lemma is needed for the case when $N$ is Abelian, but not elementary Abelian. A $(\Z/p^k \Z)[H]$-module is a $\Z H$-module $N$ where the exponent of $N$ (the LCM of the orders of the elements of $N$) divides $p^k$. 

\begin{lemma}[{see, e.\,g., Thev\'{e}naz \cite{Thevenaz}}] \label{LemThevenaz}
Let $H$ be a finite group. If $p$ is coprime to $|H|$, then any indecomposable $(\Z/p^k \Z)[H]$-module is generated (as an $H$-module) by a single element.
\end{lemma}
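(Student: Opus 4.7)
The plan is to exploit that when $\gcd(p, |H|) = 1$, the order $|H|$ is a unit in $R := \Z/p^k\Z$, which yields a Maschke-type structure theorem for $R[H]$. Specifically, I would establish a decomposition
\[
R[H] \;\cong\; \prod_{i=1}^t M_{n_i}(R_i),
\]
where each $R_i$ is a finite commutative local ring whose maximal ideal is $(p)$ and whose residue field $R_i/pR_i$ is a finite extension of $\F_p$. In particular, each $R_i$ is a chain ring: its only ideals are $(p^j)$ for $0 \le j \le k$.

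To obtain this decomposition, I would start from the classical Wedderburn decomposition $\F_p[H] \cong \prod_i M_{n_i}(K_i)$ for finite fields $K_i/\F_p$, which is provided by ordinary Maschke using $\gcd(p,|H|) = 1$. Because the ideal $pR[H] \subseteq R[H]$ is nilpotent (its $k$-th power vanishes), every idempotent in the quotient $R[H]/pR[H] = \F_p[H]$ lifts to an idempotent in $R[H]$, with orthogonality preserved. Lifting the primitive central idempotents of $\F_p[H]$ and analyzing the resulting corner rings $e_i R[H] e_i$ then produces the asserted product decomposition, which is the standard consequence of $R[H]$ being a separable (indeed Azumaya) $R$-algebra.

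Now let $N$ be any indecomposable $R[H]$-module. Each central idempotent in the above product acts on $N$ as $0$ or $1$, and indecomposability forces exactly one of them to act as $1$. Hence $N$ is an indecomposable module over a single factor $M_{n_i}(R_i)$. By Morita equivalence with $R_i$, indecomposable $M_{n_i}(R_i)$-modules correspond to indecomposable $R_i$-modules; and since $R_i$ is a chain ring, its finitely generated indecomposable modules are precisely the cyclic quotients $R_i/(p^j)$ for $1 \le j \le k$. Under Morita, the corresponding $M_{n_i}(R_i)$-module is the column module $(R_i/(p^j))^{n_i}$, which is generated as an $M_{n_i}(R_i)$-module by the first standard basis vector $e_1$ — the matrix units move $e_1$ to every basis vector and $R_i$ scales. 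Pulling back through the isomorphism exhibits $N$ as cyclic over $R[H]$. Finally, being cyclic over $R[H]$ is the same as being cyclic over $H$ in the module sense, since $R[H]\cdot x$ is exactly the additive subgroup generated by the $H$-orbit $H\cdot x$.

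The chief obstacle is justifying the structure theorem $R[H] \cong \prod_i M_{n_i}(R_i)$. This rests on lifting idempotents through the nilpotent ideal $pR[H]$, which is standard but nontrivial ring-theoretic input (equivalent to invoking the theory of separable/Azumaya algebras, available because $|H| \in R^\times$). A more elementary alternative is to try to show directly that $N/pN$ is simple as an $\F_p[H]$-module for $N$ indecomposable and then conclude via Nakayama's lemma applied to $R[H]\cdot x + pN = N$, with $x$ lifting a generator of $N/pN$. However, proving the simplicity of $N/pN$ for indecomposable $N$ ultimately requires essentially the same idempotent-lifting input, so the structure-theorem route appears to be the cleanest.
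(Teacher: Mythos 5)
Your route is sound, but it is genuinely different from the paper's. The paper proves this lemma by a one-line citation of Thev\'enaz \cite[Cor.~1.2]{Thevenaz}: there are cyclic $(\Z/p^k\Z)[H]$-modules $M_1,\dotsc,M_n$ (lifting the simple $\F_p[H]$-modules) such that every indecomposable is a quotient $M_i/p^jM_i$, and quotients of cyclic modules are cyclic. You instead re-derive the underlying structure theory: $(\Z/p^k\Z)[H]\cong \prod_i M_{n_i}(R_i)$ with each $R_i$ a finite chain ring, followed by Morita reduction to chain-ring modules, and finally the observation that cyclicity over the group ring coincides with cyclicity as an $H$-module (which is exactly the paper's usage, since the $H$-submodule generated by $x$ is the additive span of the orbit $Hx$). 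This is correct as a plan and effectively recovers Thev\'enaz's classification rather than quoting it; what it buys is self-containedness (modulo standard ring theory) and an explicit description of the indecomposables, at the cost of the nontrivial inputs you identify: idempotent/matrix-unit lifting through the nilpotent ideal $p\cdot(\Z/p^k\Z)[H]$, or the separable/Azumaya formalism. Two fine points to nail down if you write this out: lift the primitive central idempotents \emph{inside the center} (which is spanned by class sums and reduces mod $p$ onto the center of $\F_p[H]$), since an arbitrary idempotent lift of a central idempotent need not be central; and justify that the corner rings are chain rings --- this follows because they are local with nilpotent maximal ideal generated by the central element $p$, so every one-sided ideal is a power of $(p)$ (commutativity, i.e., that they are Galois rings, is true but not actually needed, as serial/uniserial module theory already gives that the indecomposables are the cyclic quotients). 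With those details filled in, your argument establishes the lemma; the paper's citation is simply the shorter path to the same classification.
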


\begin{proof}
Thev\'{e}naz \cite[Cor.~1.2]{Thevenaz} shows that there are cyclic $(\Z/p^k \Z)[H]$-modules $M_1, \dotsc, M_n$, each with underlying group of the form $(\Z/p^k \Z)^{d_i}$ for some $d_i$,  such that each indecomposable $(\Z/p^k \Z)[H]$-module is of the form $M_i / p^j M_i$ for some $i,j$, and for distinct pairs $(i,j)$ we get non-isomorphic modules. 
\end{proof}

\subsection{Coprime extensions with an $O(1)$-generated complement}
Our approach is to first pebble generators for the complement $H$, which fixes an isomorphism of $H$. As the isomorphism of $H$ is then fixed, we show that any subsequent bijection that Duplicator selects must restrict to $H$-module isomorphisms on each indecomposable $H$-submodule of $N$ that is a direct summand. For groups that decompose as a coprime extension of $H$ and $N$, the isomorphism type is completely determined by the multiplicities of the indecomposable $H$-module direct summands (\Lem{LemTaunt}). We then leverage the fact that, in the coprime case, indecomposable $H$-modules are generated by single elements (\Lem{LemThevenaz}), making it easy for Spoiler to pebble.

\begin{theorem}
Let $G = H \ltimes N$, where $N$ is Abelian, $H$ is $O(1)$-generated, and $\text{gcd}(|H|, |N|) = 1$. We have that $(O(1), O(1))$-WL Version II identifies $G$.
\end{theorem}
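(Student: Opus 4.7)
The plan is to execute a Spoiler strategy in the $O(1)$-pebble, $O(1)$-round WL Version~II game, combining \Lem{CheckCoprime}, Taunt's Lemma (\Lem{LemTaunt}), and \Lem{LemThevenaz}. Fix any group $K$ of order $|G|$ with $G \not\cong K$.

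First, by \Lem{CheckCoprime} we may assume that $K$ itself decomposes as $H \ltimes_{\theta'} N_K$, where $N_K$ is the characteristic Abelian Hall subgroup of $K$ of order $|N|$ (the set of elements of $K$ whose orders divide $|N|$), and that Duplicator's bijection $f$ must satisfy $f(N) = N_K$ setwise, or Spoiler wins in $O(1)$ pebbles and rounds. Spoiler next pebbles the $d = O(1)$ generators $h_1,\ldots,h_d$ of $H \leq G$; the Version~II winning condition forces Duplicator's images $\tilde h_i := f(h_i)$ to generate a subgroup $H_K \leq K$ such that $h_i \mapsto \tilde h_i$ extends to an isomorphism $\alpha : H \to H_K$, and by order considerations $H_K$ is a complement to $N_K$. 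This pebbling endows $N$ with the $\Z H$-action $\theta$ and $N_K$ with the $\Z H$-action $\theta' \circ \alpha$.

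By \Lem{LemTaunt}, since $G \not\cong K$, for \emph{every} choice of $\alpha \in \Aut(H)$---in particular the one induced by Duplicator's pebble choices---the $\Z H$-modules $(N,\theta)$ and $(N_K,\theta' \circ \alpha)$ are non-isomorphic. Applying \Lem{LemThevenaz} to each $p$-primary component of $N$ (each is $H$-invariant since $p \nmid |H|$) shows that every indecomposable summand of either module is cyclic as a $\Z H$-module. By Remak--Krull--Schmidt the multisets of indecomposable direct summands of the two modules must differ. For each cyclic $\Z H$-module type $M'$, let $c(L, M') := |\{x \in L : \langle x \rangle_H \cong_H M'\}|$. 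This is a $\Z H$-module invariant of $L$, and in the coprime setting it is a \emph{complete} invariant: reducing $p$-primary component by $p$-primary component and isotypic component by isotypic component (using \Lem{LemThevenaz}), it refines the classical fact that the multiset of element orders determines a finite Abelian group. Hence some $M'$ satisfies $c(N, M') \neq c(N_K, M')$.

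For the final move, Spoiler pebbles one element $n \in N$ drawn from whichever of the two classes $\{x \in N : \langle x \rangle_H \cong_H M'\}$ or $\{x \in N_K : \langle x \rangle_H \cong_H M'\}$ is larger, so that pigeonhole prevents Duplicator's bijection $f$ from pairing $n$ with any $x \in N_K$ of matching type. The generated subgroups on the two sides are then $H \ltimes \langle n \rangle_H$ and $H_K \ltimes \langle f(n) \rangle_H$; since their respective $\Z H$-module parts are non-isomorphic and are characteristic Abelian Hall subgroups of these smaller groups, the two subgroups are non-isomorphic under the induced pebble identification. The Version~II winning condition fails and Spoiler wins with $d + 1 = O(1)$ pebbles and $O(1)$ rounds. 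The main obstacle is justifying that $M' \mapsto c(L, M')$ is a complete invariant for $\Z H$-modules in the coprime case: since every indecomposable is cyclic by \Lem{LemThevenaz}, this should follow from an inclusion--exclusion argument tracking the $H$-action on each isotypic component of each $p$-primary piece, mirroring the classical argument for finite Abelian groups, and working this out cleanly is the main technical content of the formal proof.
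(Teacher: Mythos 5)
Your proposal is correct and follows essentially the same route as the paper's proof: reduce to $K = H \ltimes N$ via \Lem{CheckCoprime}, pebble the $O(1)$ generators of $H$ to fix the identification $\alpha$, and then use Taunt's Lemma (\Lem{LemTaunt}) with \Lem{LemThevenaz} to pebble a single element of $N$ whose cyclic $H$-module type Duplicator's bijection cannot match. The counting step you flag as the main technical content is precisely the step the paper compresses into one sentence---asserting via Taunt's Lemma the existence of some $n$ generating an indecomposable (cyclic) module with $\langle n \rangle_{H} \not\cong_{H} \langle f(n) \rangle_{f(H)}$---so your explicit pigeonhole over cyclic-type counts is a more detailed justification of the same final move rather than a different argument.
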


\begin{proof}
Let $K$ be a group of order $n$ such that $K \not \cong G$. We will first show that if there is no action such that $K$ decomposes as $H \ltimes N$, then Spoiler can win with $O(1)$ pebbles and $O(1)$ rounds. Let $f : G \to K$ be the bijection that Duplicator selects. As $N \leq G$, as a subset, is uniquely determined by its orders---it is precisely the set of all elements in $G$ whose orders divide $|N|$---we may assume that $K$ has a normal Hall subgroup of size $|N|$. For first, if for some $n \in N$, $|n| \neq |f(n)|$, Spoiler can pebble $n \mapsto f(n)$ and win immediately. By reversing the roles of $K$ and $G$, it follows that $K$ must have precisely $|N|$ elements whose orders divide $|N|$. Second, suppose that those $|N|$ elements do not form a subgroup. Then there are two elements $x,y \in f(N)$ such that $xy \notin f(N)$. At the first round, Spoiler pebbles $a := f^{-1}(x) \mapsto x$. Let $f' : G \to K$ be the bijection Duplicator selects at the next round. As $K$ has precisely $|N|$ elements of order dividing $|N|$, we may assume that $f'(N) = f(N)$ (setwise). Let $b \in N$ s.t. $f'(b) = y$. Spoiler pebbles $b \mapsto y$. Now as $N$ is a group, $ab \in N$. However, as $f(a)f'(b) \not \in f(N)$, $|ab| \neq |f(a)f'(b)|$. So the map $(a, b) \mapsto (x, y)$ does not extend to an isomorphism. Spoiler now wins.

Now we have that $f(N)$ is a subgroup of $K$, and because it is the set of all elements of these orders, it is characteristic and thus normal. Suppose that $f(N) \not \cong N$. We have two cases: either $f(N)$ is not Abelian, or $f(N)$ is Abelian but $N \not \cong f(N)$. Suppose first that $f(N)$ is not Abelian. Let $x \in f(N)$ such that $x \not \in Z(f(N))$, and let $g := f^{-1}(x) \in N$. Spoiler pebbles $g \mapsto x$. Let $f' : G \to K$ be the bijection that Duplicator selects at the next round. We may again assume that $f'(N) = f(N)$ (setwise), or Spoiler wins with two additional pebbles and two additional rounds. Now let $y \in f(N)$ such that $[x,y] \neq 1$. Let $h \in G$ such that $f'(h) = y$. Spoiler pebbles $h \mapsto y$. Now the map $(g, h) \mapsto (x, y)$ does not extend to an isomorphism, so Spoiler wins. Suppose instead that $f(N)$ is Abelian. As Abelian groups are determined by their orders, we have by the discussion in the first paragraph that Spoiler wins with $2$ pebbles and $2$ rounds.

So now suppose that $N \cong f(N) \leq K$ is a normal Abelian Hall subgroup, but that $f(N)$ does not have a complement isomorphic to $H$. We note that if $K$ contains a subgroup $H'$ that is isomorphic to $H$, then by order considerations, $H'$ and $f(N)$ would intersect trivially in $K$ and we would have that $K = H' \cdot f(N)$. That is, $K$ would decompose as $K = H \ltimes f(N)$. So as $f(N)$ does not have a complement in $K$ that is isomorphic to $H$, by assumption we have that $K$ does not contain any subgroup isomorphic to $H$. 
In this case, Spoiler pebbles $k := d(H)$ generators of $H$ in $G$. As $K$ has no subgroup isomorphic to $H$, Spoiler immediately wins after the generators for $H \leq G$ have been pebbled. 

Finally, suppose that $K = H \ltimes N$. Spoiler uses the first $k$ rounds to pebble generators $(g_{1}, \ldots, g_{k}) \mapsto (h_{1}, \ldots, h_{k})$ for $H$. As $K = H \ltimes N$, we may assume that the map $(g_{1}, \ldots, g_{k}) \mapsto (h_{1}, \ldots, h_{k})$ induces an isomorphism with a copy of $H \leq K$; otherwise, Spoiler immediately wins. Let $f : G \to K$ be the bijection that Duplicator selects. As $G, K$ are non-isomorphic groups of the form $H \ltimes N$, they differ only in their actions. Now the actions are determined by the multiset of indecomposable $H$-modules in $N$. As $|H|, |N|$ are coprime, we have by \Lem{LemThevenaz} that the indecomposable $H$-modules are cyclic. As $G \not \cong K$, we have by \Lem{LemmaSemidirect} that there exists $n \in N \leq G$ such that $\langle n \rangle_{H}$ is indecomposable, and $\langle n \rangle_{H}$ and $\langle f(n) \rangle_{f(H)}$ are inequivalent $H$-modules. Spoiler now pebbles $n \mapsto f(n)$. Thus, the following map
\[
(g_{1}, \ldots, g_{k}, n) \mapsto (h_{1}, \ldots, h_{k}, f(n))
\]

\noindent does not extend to an isomorphism. So Spoiler wins. 
\end{proof}

\begin{remark}
We see that the main places we used coprimality were: (1) that $N$ was characteristic, and (2) that all indecomposable $H$-modules (in particular, those appearing in $N$) were cyclic. 
\end{remark}

\section{A ``rank'' lemma}
\begin{definition} \label{DefRank}
Let $C \subseteq G$ be a subset of a group $G$ that is closed under taking inverses. We define the \emph{C-rank} of $g \in G$, denoted $\rk_C(g)$, as the minimum $m$ such that $g$ can be written as a word of length $m$ in the elements of $C$. If $g$ cannot be so written, we define $\rk_C(g) = \infty$.
\end{definition}

Our definition and results actually extend to subsets that are not closed under taking inverses, but we won't have any need for that case, and it would only serve to make the wording less clear.

\begin{remark}
Our terminology is closely related to the usage of ``$X$-rank'' in algebra and geometry, which generalizes the notions of matrix rank and tensor rank: if $X \subseteq V$ is a subset of an $\mathbb{F}$-vector space, then the $X$-rank of a point $v \in V$ is the smallest number of elements of $x \in X$ such that $v$ lies in their linear span. If we replace $X$ by the union $\mathbb{F}^* X$ of its nonzero scaled versions (which is unnecessary in the most common case, in which $X$ is the cone over a projective variety), then the $X$-rank in the sense of algebraic geometry would be the $\mathbb{F}X$-rank in our terminology above. For example, matrix rank is $X$-rank inside the space of $n \times m$ matrices under addition, where $X$ is the set of rank-1 matrices (which is already closed under nonzero scaling).
\end{remark}

We say that $C \subseteq G$ is distinguished by the version $J$ pebble game with $k$ pebbles and $r$ rounds if for all $g \in C$ and $g' \notin C$, Spoiler can win this pebble game, played on $G$ and $G$, starting from the configuration $g \mapsto g'$. In terms of WL colorings, this is equivalent to: after $r$ rounds of $(k+1)$-WL Version $J$ coloring, the colors of all tuples containing elements of $C$ are distinct from the colors of all tuples not containing elements of $C$. In particular, if $C \subseteq G$ is distinguished by $(k+1,r)$-WL, then for any group $H$ of the same order as $G$, we define $C_H$ to be the set of those elements in $H$ whose $(k+1,r)$-WL colors are among the $(k+1,r)$-WL colors of $C$. It follows that in the Version $J$, $k$-pebble game on $G$ and $H$, Duplicator must choose bijections $f\colon G \to H$ that map $C$ to $C_H$, or lose within an additional $r$ rounds.

\begin{lemma}[Rank lemma] \label{LemmaRank}
Let $k \geq 3$, and $J \in \{I,II\}$. Suppose $C \subseteq G$ is distinguished (as defined above) by the Version $J$ pebble game with $k$ pebbles and $r$ rounds; let $C_H$ be the corresponding subset of $H$ as above. In the pebble game played on $G$ and another group $H$, suppose that at the end of a round there is a pebble on $g \mapsto h$ such that $\rk_C(g) \neq \rk_{C_H}(h)$. Then Spoiler can win with $k$ additional pebbles (beyond the one on $g \mapsto h$) and $r + \log d + O(1)$ additional rounds, where $d =\rk_C(g) \leq \text{diam}(\text{Cay}(\langle C \rangle, C)) \leq |\langle C \rangle| \leq |G|$.
\end{lemma}

Our primary uses of this lemma in this paper are to show that if $C$ is distinguished by $(k,r)$-WL, then $\langle C \rangle$ is distinguished by $(k, r + \log n + O(1))$-WL. However, the preservation of $C$-rank itself, rather than merely the subgroup generated by $C$, seems potentially useful for future applications. In particular, \Lem{LemmaRank} shows that WL can identify verbal subgroups in $O(\log n)$ rounds, provided WL can readily identify each word.

One sees that the version of the Rank Lemma stated in the Introduction (Lemma~\ref{LemmaRankIntro}) follows immediately from the above: if Duplicator chooses a bijection $f$ that does not preserve the rank of $g$, Spoiler's first move is to pebble $g \mapsto f(g)$, and then Lemma~\ref{LemmaRank} applies (note the $k+1$ in the statement of Lemma~\ref{LemmaRankIntro}, compared to $k$ in Lemma~\ref{LemmaRank}, is to account for this first pebble).


\begin{proof}

Throughout, we will use $\rk(x)$ to denote $\rk_C(x)$ if $x \in G$, and $\rk_{C_H}(x)$ if $x \in H$.

Without loss of generality (by swapping the roles of $G$ and $H$ if needed), we may assume that $\rk(g) < \rk(h)$. If $\rk(g)=1$, then by assumption, since $C = \{g \in G: \rk(g)=1\}$, Spoiler wins with $k$ pebbles in $r$ rounds.

On the other hand, if $\rk(g) > 1$, let $f : G \to H$ be the bijection that Duplicator selects at the next round. For brevity let $q = \rk(g)$. Note that this case works even if $\rk(h)=\infty$, for then by our assumption that $\rk(g) < \rk(h)$, we still have $q = \rk(g)$ is finite. Now write $g = g_{1} \cdots g_{q}$, where for each $i$, $g_i \in C$. For $1 \leq i \leq j \leq q$, write $g[i, \ldots, j] := g_{i} \cdots g_{j}$. We consider the following cases.
\begin{itemize}
\item \textbf{Case 1:} Suppose first that for all $x \in G$ with $\rk(x) < q$, we have $\rk(x) = \rk(f(x))$. In this case, Spoiler pebbles $g[2, \ldots, q] \mapsto f(g[2, \ldots, q])$. Let $f' : G \to H$ be the bijection that Duplicator selects at the next round; Spoiler pebbles $g_1 \mapsto f'(g_1)$.

If $\rk(g_{1}) = \rk(f'(g_{1})) = 1$, then $f'(g_{1}) \cdot f(g[2, \ldots, r+1]) \neq h$, since $\rk(f'(g_1)) = 1$ and $\rk(f(g[2, \ldots, q])) = q-1$, so their product has rank at most $q=\rk(g) < \rk(h)$. In this case, Spoiler 
wins immediately since the map on the pebbled elements $(g, g[2,\ldots,q], g_1) \mapsto (h, f(g[2, \ldots, q]), f'(g_1))$ does not satisfy the winning condition for Spoiler in the Version I (and hence, the Version II) game. 

If instead, $1=\rk(g_{1}) < \rk(f'(g_{1}))$, then Spoiler 
wins with $k-1$ additional pebbles and $r$ additional rounds by assumption. Note that once $g_{1} \mapsto f'(g_{1})$ has been pebbled, Spoiler can reuse the pebble on $g$; this is why we only need $k-1$ additional pebbles in this part rather than $k$ additional pebbles.

\item \textbf{Case 2:} Suppose instead that the hypothesis of Case 1 is not satisfied. Then there exists some $x \in \langle C \rangle$ with $\rk(x) \neq \rk(f(x))$ and $\rk(x) < q = \rk(g)$. We will now do a binary-search-like procedure to eventually reach an element of $C$ that is mapped outside of $C$. To make this clearer, we introduce two additional parameters as in binary search. Let $lo = 1$ and $hi=q$. In the next two rounds, Spoiler pebbles $x := g[lo, \ldots, lo + \lceil (hi-lo)/2 \rceil-1]$ and $y := g[lo + \lceil (hi-lo)/2 \rceil, \ldots, hi]$. Note that $\rk(x) = \lceil (hi-lo)/2 \rceil$ and $\rk(y) = (hi - lo) - \lceil(hi-lo)/2 \rceil + 1$, with $\rk(g) = \rk(x) + \rk(y)$, for if $x$ or $y$ could be written as a shorter word in $C$, then so could $g$. 

Let $f' : G \to H$ be the next bijection that Duplicator selects. If $h \neq f'(x)f'(y)$ then Spoiler immediately wins. On the other hand, if $h = f'(x) f'(y)$, 
then either
\[
\rk(x) < \rk(f'(x)) \qquad \text{ or } \qquad \rk(y) < \rk(f'(y)),
\]
since $\rk(h) > \rk(g) = \rk(x) + \rk(y)$. 

Without loss of generality, suppose that $\rk(x) < \rk(f'(x))$. Spoiler then updates $hi$ to $lo + \lceil (hi-lo)/2 \rceil-1$, and as $x \mapsto f'(x)$ is already pebbled, Spoiler iterates on this strategy as in binary search, reusing the pebbles on $g$ and $y$ in the next iteration, with $x$ now playing the role that $g$ played in the first iteration. Thus, iterating this procedure uses at most $3$ pebbles (and we have $3 \leq k$ by assumption). After $\log q + O(1)$ iterations, we reach the case where $hi=lo+1$, and we have pebbled an element $z \mapsto f''(z)$ such that $1 = \rk(z) < \rk(f''(z))$, and there are two additional pebbles placed that can be reused. 

By assumption, every element of $\langle C \rangle$ can be written as a word of length at most $d$ in the elements of $C$, so we have $q \leq d$, and thus only $\log d + O(1)$ rounds of the preceding iteration are required. (That $d \leq |\langle C \rangle|$ follows from Fact~\ref{FactWordLength}.)

Finally, since Spoiler has pebbled $z \in C$ mapping to an element outside of $C_H$, Spoiler implements the assumed strategy that identifies $C$ with $k$ pebbles and $r$ rounds. However, at the time Spoiler begins this strategy, two of the placed pebbles can be re-used, so the strategy only requires an additional $k-2$ pebbles, for a total of $k$ additional pebbles beyond the one that we started with on $g \mapsto h$. (We do not need to count $+1$ for the pebble on $z$, since the definition of ``distinguished'' includes ``starting from the position [$z \mapsto f''(z)$].'') \qedhere 
\end{itemize}

\end{proof}

\section{Direct products}

Brachter \& Schweitzer previously showed that Weisfeiler--Leman Version II can detect direct product decompositions in polynomial-time. Precisely, they showed that the WL dimension of a group $G$ is at most one more than the WL dimensions of the direct factors of $G$. We strengthen the result to control for rounds, showing that only additional $O(\log n)$ rounds are needed:

\begin{theorem} \label{ThmProduct}
Let $G = G_{1} \times \cdots \times G_{d}$ be a decomposition into indecomposable direct factors, let $k \geq 5$, and let $r := r(n)$. If $G$ and $H$ are not distinguished by $(k, r+O(\log n))$-WL Version II, then there exist direct factors $H_{i} \leq H$ such that $H = H_{1} \times \cdots \times H_{d}$ such that for all $i \in [d]$, $G_{i}$ and $H_{i}$ are not distinguished by $(k-1, r)$-WL Version II.
\end{theorem}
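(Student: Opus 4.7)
I argue the contrapositive: assuming $H$ either fails to decompose as $H_1 \times \cdots \times H_d$ or else admits such a decomposition in which some pair $(G_i, H_i)$ is distinguished by $(k-1,r)$-WL Version II, I show Spoiler wins the $(k+1)$-pebble, $r$-round Version II game on $(G,H)$. The uniqueness up to reordering of the indecomposable factor decomposition of $G$ is given by Remak--Krull--Schmidt, so the matching decomposition of $H$ is unambiguous whenever it exists.

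The first ingredient is that many canonical subgroups associated to the product structure are forced to be respected by Duplicator's bijections, up to Spoiler winning within budget. The center, commutator subgroup, and centralizers of pebbled tuples are each defined by short first-order conditions, so membership is detectable by $(O(1), O(1))$-WL Version II on their generating sets. Applying the Rank Lemma then promotes detection of the generating set to detection of the subgroup itself at the cost of one extra pebble and $O(\log n)$ additional rounds. This is exactly where the hypothesis $r(n) \in \Omega(\log n)$ is used: the $O(\log n)$ overhead for subgroup identification gets absorbed into $r$ rather than stacking multiplicatively.

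The second ingredient is Spoiler's strategy for exposing the factors of $H$. For each indecomposable factor $G_i \leq G$, Spoiler pebbles a small witness set whose centralizer in $G$ is $Z(G_i)\times\prod_{j\neq i}G_j$; by the Rank Lemma, Duplicator's bijection must map this centralizer onto the analogous subgroup of $H$. Intersecting the centralizers coming from different $i$, Spoiler pins down an image $H_i \leq H$ for each $G_i$, and the corresponding cosets yield a candidate product decomposition. If the $H_i$ fail to pairwise commute, fail to intersect trivially in pairs, or fail to jointly generate $H$, Spoiler exhibits a counterexample using $O(1)$ additional pebbles and $O(\log n)$ additional rounds, again via the Rank Lemma applied to commutators and to the generated subgroup $\langle H_1, \dots, H_d \rangle$. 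Otherwise, we obtain an honest decomposition $H = H_1 \times \cdots \times H_d$; if some $(G_i, H_i)$ is distinguished by $(k-1,r)$-WL Version II on the factors alone, Spoiler lifts that winning strategy to $G$ and $H$ using the fact that a pebble configuration inside $G_i$ lifts to a configuration in $G$ via the embedding determined by the already-pebbled witnesses, and symmetrically in $H$.

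The principal obstacle is keeping the round count at $O(\log n)$ rather than something like $O(\log^c n)$ or $O(d \log n)$. A naive inductive peeling of one factor at a time (in the style of the serial Brachter--Schweitzer proof) accumulates rounds per factor and would blow up the round complexity whenever $d$ is super-constant. The plan to avoid this is to carry out factor identification essentially in parallel across all $i\in[d]$: each centralizer-based step uses only $O(1)$ pebbles and $O(\log n)$ rounds via the Rank Lemma independently of $d$, and the final lift to a factor-level game happens only once, in the factor $i$ chosen by Spoiler. The delicate bookkeeping is ensuring that once Spoiler has pebbled the witnesses pinning down a given $H_i$, the remaining pebbles suffice to run the $(k-1,r)$-game inside $(G_i,H_i)$; this is handled by reusing witness pebbles after the factor isomorphism type is exposed, keeping the on-board count at $k$ while preserving enough structural information through the bijection-respecting identities established by the Rank Lemma.
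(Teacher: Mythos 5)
Your plan diverges from the paper's proof (which goes through Brachter--Schweitzer's machinery of maximal Abelian direct factors, splitting elements, component-wise filtrations, the centralizer-maximal sets $M_i$, full elements, and the non-commuting graph), and as written it has genuine gaps rather than being an alternative correct route. The central problem is that the subgroups $H_i$ you want to ``pin down'' are not canonical: only the subgroups $G_iZ(G)$ (resp.\ $H_iZ(H)$) are invariant under automorphism, by Remak--Krull--Schmidt, so no WL-detectability or Rank Lemma argument can force Duplicator to respect the factors $G_i$ themselves. Your centralizer-witness construction in fact only recovers $G_iZ(G)$ (intersecting the centralizers $Z(G_j)\times\prod_{\ell\neq j}G_\ell$ over $j\neq i$ yields $G_iZ(G)$, not $G_i$), and the step from ``$G_iZ(G)$ and $H_iZ(H)$ are not distinguished'' to ``$G_i$ and $H_i$ are not distinguished'' is precisely the semi-Abelian reduction (the analogue of \Lem{SemiAbelianDirectProduct}), which needs the identification of the maximal Abelian direct factor via splitting elements and a careful coset-representative/word argument; your proposal treats this lift as routine (``Spoiler lifts that winning strategy \ldots via the embedding determined by the already-pebbled witnesses''), but that is where the real work lies. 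Relatedly, Abelian indecomposable factors are invisible to any centralizer-based witness scheme (they sit inside $Z(G)$), so e.g.\ distinguishing a $\Z/4\Z$ factor from a $(\Z/2\Z)^2$ factor structure in the central part is not addressed at all by your strategy, whereas the paper handles it with the splitting-element lemmas.

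There is also a resource-accounting gap. Your witness sets need centralizer $C_{G_i}(S)=Z(G_i)$ inside an indecomposable factor that need not be $O(1)$-generated, so $|S|$ may grow (centralizer chains can have length $\Theta(\log n)$), and you want such witnesses for up to $d=\Theta(\log n)$ factors; with only $k+1=O(1)$ pebbles these cannot be held simultaneously. The Rank Lemma does not rescue this: it constrains Duplicator only while the distinguishing set is WL-detectable (i.e.\ an isomorphism-invariant, pebble-free color class) or while the relevant pebbles remain on the board, so ``reusing witness pebbles after the factor isomorphism type is exposed'' forfeits exactly the constraints you need for the later factor-level game. The paper avoids this by working only with canonically WL-detectable sets that need no pebbles to be respected---the sets $M_i$ of Definition~\ref{MaximalCentralizerDef} (detected via the Rank Lemma in \Lem{LemBSCanonicalDecompositionA}), the full elements, and the connected components of the non-commuting graph restricted to $\mathcal{F}_G$---and then invokes \Lem{ModifiedLem624} and \Lem{SemiAbelianDirectProduct} to pass from $G_iZ(G)$ versus $H_iZ(H)$ down to $G_i$ versus $H_i$. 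To repair your argument you would need, at minimum, a substitute for the semi-Abelian reduction and a factor-identification mechanism that is pebble-free and works for Abelian factors; at that point you are essentially reconstructing the paper's proof.
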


The structure and definitions in this section closely follow those of \cite[Sec.~6]{BrachterSchweitzerWLLibrary} for ease of comparison.

\subsection{Abelian and Semi-Abelian Case}

\begin{definition}[{\cite[Def.~6.5]{BrachterSchweitzerWLLibrary}}]
Let $G$ be a group. We say that $x \in G$ \textit{splits} from $G$ if there exists a complement $H \leq G$ such that $G = \langle x \rangle \times H$.
\end{definition}

We recall the following technical lemma \cite[Lemma~6.6]{BrachterSchweitzerWLLibrary} that characterizes the elements that split from an Abelian $p$-group.

\begin{lemma}[{\cite[Lemma~6.6]{BrachterSchweitzerWLLibrary}}]
Let $A$ be a finite Abelian $p$-group, and let $A = A_{1} \times \cdots \times A_{m}$ be an arbitrary cyclic decomposition. Then $a = (a_{1}, \ldots, a_{m}) \in A$ splits from $A$ if and only if there exists some $i \in [m]$ such that $|a| = |a_{i}|$ and $a_{i} \in A_{i} \setminus (A_{i})^{p}$. In particular, $x$ splits from $A$ if and only if there does not exist a $y \in A$ such that $|xy^{p}| < |x|$.
\end{lemma}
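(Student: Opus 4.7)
The plan is to prove the two directions of the main biconditional separately, and then derive the ``in particular'' statement as an easy corollary. Throughout I will use additive notation, noting that $|xy^p|$ becomes $|x + py|$. Write $|a| = p^s$ and identify each $A_i$ with $\mathbb{Z}/p^{e_i}\mathbb{Z}$, so that $(A_i)^p = pA_i$ has index $p$ in $A_i$.

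For the direction $(\Leftarrow)$, suppose some index $i$ has $|a| = |a_i|$ and $a_i \notin pA_i$. Since a non-$p$-th-power element of a cyclic $p$-group is a generator, $\langle a_i \rangle = A_i$ and hence $|A_i| = p^s$. Setting $K := \bigoplus_{j \neq i} A_j$, I will check directly that $A = \langle a \rangle \oplus K$: the intersection is trivial because the $i$-th coordinate of $ka$ is $k a_i$, which vanishes only when $p^s \mid k$ (so $ka = 0$), and order considerations give $|\langle a \rangle| \cdot |K| = p^s \cdot (|A|/p^s) = |A|$.

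For $(\Rightarrow)$ I prove the contrapositive. Assume every index $i$ satisfies $|a_i| < p^s$ or $a_i \in pA_i$, and partition into $I_1 := \{i : |a_i| = p^s,\; a_i \in pA_i\}$ and $I_2 := \{i : |a_i| < p^s\}$. For each $i \in I_1$ choose $b_i \in A_i$ with $p b_i = a_i$, and let $b \in A$ have these $b_i$ as its $I_1$-coordinates and zero elsewhere; set $c := a - pb$, which is supported on $I_2$ and therefore satisfies $p^{s-1} c = 0$ because each $|a_i|$ for $i \in I_2$ divides $p^{s-1}$. Suppose toward contradiction that $A = \langle a \rangle \oplus H$, and write $b = ka + h_b$ and $c = \ell a + h_c$ with $k, \ell \in \mathbb{Z}$ and $h_b, h_c \in H$. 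Substituting into $a = pb + c$ and using $\langle a \rangle \cap H = \{0\}$ yields $(1 - pk - \ell) a = 0$, while $p^{s-1} c = 0$ similarly forces $p \mid \ell$. Thus $1 - pk - \ell \equiv 1 \pmod{p}$ yet is divisible by $p^s \geq p$, which is the desired contradiction.

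For the ``in particular'' statement, first observe that if $a_i$ generates $A_i$ then every $y_i \in A_i$ can be written $y_i = t_i a_i$, so $a_i + p y_i = (1 + p t_i) a_i$; since $\gcd(1 + p t_i, p) = 1$, this element still generates $A_i$ and in particular $|a_i + p y_i| = |a_i|$. Hence if some $y \in A$ achieves $|a + py| < |a|$, then every $i$ with $|a_i| = |a|$ must have $a_i \in pA_i$, so by the first part $a$ does not split. Conversely, if $a$ does not split, then by the first part $a_i \in pA_i$ for every $i$ with $|a_i| = |a|$, and taking $y$ with $y_i := -b_i$ on such $i$ (using the $b_i$ above) and $y_i := 0$ elsewhere zeros out all maximal-order coordinates of $a + py$, giving $|a + py| < |a|$. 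The main technical hurdle is the algebraic bookkeeping in the contradiction argument for $(\Rightarrow)$; everything else follows directly from the structure of cyclic $p$-groups.
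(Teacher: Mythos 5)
Your proof is correct. The paper does not actually prove this lemma---it is quoted verbatim from Brachter \& Schweitzer \cite[Lemma~6.6]{BrachterSchweitzerWLLibrary}---so there is no in-paper argument to compare against; your two-direction argument (the explicit complement $\bigoplus_{j \neq i} A_j$ for sufficiency, and for necessity the decomposition $a = pb + c$ with $p^{s-1}c = 0$, which in a putative splitting $A = \langle a \rangle \oplus H$ forces $(1-pk-\ell)a = 0$ with $1-pk-\ell$ a unit mod $p$) is a correct, self-contained substitute, and your derivation of the ``in particular'' clause from the coordinate characterization is also sound. The only caveat is the degenerate case $a = 1$ (i.e.\ $s = 0$): there the contrapositive argument, and indeed the stated coordinate characterization itself, do not literally apply since the identity splits trivially, so you should note explicitly that $a$ is assumed nontrivial (as is implicit in the original statement and in all its uses in the paper).
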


We utilize this lemma to show that WL Version II can detect elements that split from $A$.

\begin{lemma} \label{LemmaSylowSplit}
Let $A, B$ be Abelian groups, $p$ a prime, and $A_p \leq A$ and $B_p \leq B$ their respective Sylow $p$-subgroups. Let $f : A \to B$ be the bijection Duplicator selects. If $f(A_p) \neq B_p$, or if $x \in A_p$ splits from $A_p$, but $f(x) \in B_p$ does not split from $B_p$, then Spoiler can win the WL Version II game with $2$ pebbles and $2$ rounds.
\end{lemma}

\begin{proof}
If $f(A_p) \neq B_p$, then there is some $x \in A_p$ such that $f(x) \notin B_p$. But then $|x|$ is a power of $p$, while $|f(x)|$ is not, so Spoiler pebbles $x \mapsto f(x)$ and immediately wins. Thus we may assume $f(A_p) = B_p$.

Next, suppose $x \in A_p$ splits from $A_p$ but $f(x)$ does not split from $B_p$. Spoiler begins by pebbling $x \mapsto f(x)$. If $|x| \neq |f(x)|$ then Spoiler already wins, so we may now assume that $|x|=|f(x)|$. Let $f' : A \to B$ be the bijection that Duplicator selects at the next round. As $f(x)$ does not split from $B$, there exists $z \in B$ such that $|f(x) \cdot z^{p}| < |f(x)|$. Let $y = (f')^{-1}(z) \in A$. Spoiler pebbles $y \mapsto f'(y) = z$. Now $|xy^p|=|x| = |f(x)| \neq |f(x) \cdot z^p|$. So Spoiler immediately wins.
\end{proof}

To characterize when an element splits in a general Abelian group $A$, we begin by considering the decomposition of $A$ into its Sylow subgroups: $A = P_{1} \times \cdots \times P_{m}$. Now $x = (x_{1}, \ldots, x_{m}) \in A$ splits from $A$ if and only if for each $i \in [m]$, $x_{i}$ is either trivial or splits from $P_{i}$. See, e.g., \cite[Lemma~6.8]{BrachterSchweitzerWLLibrary}. The next lemma shows that this group-theoretic structure is useful in the pebble game.

\begin{lemma} \label{LemmaSplitAbelian}
Let $A, B$ be Abelian groups. Let $A = P_{1} \times \cdots \times P_{m}$ and $B = Q_{1} \times \cdots \times Q_{m}$, where the $P_{i}$ are the Sylow subgroups of $A$ and the $Q_{i}$ are the Sylow subgroups of $B$, where for each $i$, $P_i$ and $Q_i$ are $p_i$-subgroups for the same prime $p_i$. Let $f : A \to B$ be the bijection that Duplicator selects. Let $x = (x_{1}, \ldots, x_{m})$ be the decomposition of $x$, where $x_{i} \in P_{i}$, and let $f(x) = (y_{1}, \ldots, y_{m})$, where $y_{i} \in Q_{i}$. Suppose that Spoiler pebbles $x \mapsto f(x)$. Let $f' : A \to B$ be the bijection that Duplicator selects at the next round.
\begin{enumerate}[label=(\alph*)]
\item If $f'(x_{i}) \neq y_{i}$, then Spoiler can win with $1$ additional pebble and $1$ additional round.

\item If $x \in A$ splits from $A$, but $f(x)$ does not split from $B$, then Spoiler can win with $3$ additional pebbles and $3$ additional rounds (including the round at which the bijection $f'$ was selected).
\end{enumerate}
\end{lemma}

\begin{proof}
\begin{enumerate}[label=(\alph*)]
\item Suppose there exists an $i \in [m]$ such that $f'(x_{i}) \neq y_{i}$. Spoiler pebbles $x_{i} \mapsto f'(x_{i})$. Suppose that $P_{i}, Q_{i}$ are Sylow $p$-subgroups of $A, B$ respectively. As $x_{i} \in P_{i}$, we have that $\langle x \cdot x_{i}^{-1} \rangle$ has order coprime to $p$. However, as $f(x_{i}) \neq y_{i}$, $\langle f(x) \cdot f(x_{i})^{-1} \rangle$ has order divisible by $p$. So $|x \cdot x_{i}^{-1}| \neq |f(x) \cdot f(x_{i})^{-1}|$. Thus, Spoiler wins at the end of this round.

\item We recall that nilpotent groups are direct products of their Sylow subgroups. Furthermore, for a given prime divisor $p$, the Sylow $p$-subgroup of a nilpotent group is unique and contains all the elements whose order is a power of $p$. Thus, each Sylow subgroup of a nilpotent group is characteristic as a set. So now by (a), we may assume that $f'(x_{i}) = y_{i}$ for all $i$, or Spoiler can win with 1 pebble and 1 additional round. From what we have noted before the lemma, $y$ does not split from $Q$ iff there is an $i$ such that $y_i$ does not split from $Q_i$. Let $i \in [m]$ such that $x_{i}$ splits from $P_{i}$, but $f'(x_{i}) = y_{i}$ does not split from $Q_{i}$. Spoiler pebbles $x_{i} \mapsto f'(x_{i}) = y_{i}$. Now by \Lem{LemmaSylowSplit}, applied to $x_{i} \mapsto f'(x_{i})$, Spoiler wins with $2$ more pebbles (for a total of $3$ additional pebbles beyond the one on $x$) and $2$ more rounds (for a total of $3$ additional rounds starting from and including the round at which $f'$ was selected). \qedhere
\end{enumerate}
\end{proof}

We now show that Duplicator must select bijections that preserve both the center and commutator subgroups setwise. Here is our first application of the Rank Lemma~\ref{LemmaRank}, which was not present in \cite{BrachterSchweitzerWLLibrary}. We begin with the following standard definition.

\begin{definition}
For a group $G$ and $g \in G$, the \textit{commutator width} of $g$, denoted $\cw(g)$, is its $\{[x,y] : x,y \in G\}$-rank (see Definition~\ref{DefRank}). The commutator width of $G$, denoted $\cw(G)$, is the maximum commutator width of any element of $[G,G]$.
\end{definition}

\begin{lemma} \label{IdentifyCenterCommutator}
Let $G, H$ be finite groups of order $n$. Let $f : G \to H$ be the bijection that Duplicator selects in the Version II pebble game. 
\begin{enumerate}[label=(\alph*)]
\item If $f(Z(G)) \neq Z(H)$, then Spoiler can win with $2$ pebbles and $2$ rounds.
\item If there exist $x, y \in G$ such that $f([x,y])$ is not a commutator $[h,h']$ for any $h,h' \in H$ (that is, $\cw(f([x,y])) > 1$), then Spoiler can win with $3$ pebbles and $3$ rounds.
\item If there exists $g \in G$ such that $\cw(g) \neq \cw(f(g))$, then Spoiler can win with $4$ pebbles and $O(\log \cw(G)) \leq O(\log n)$ rounds.
\end{enumerate}
\end{lemma}

Brachter \& Schweitzer previously showed that $2$-WL Version II identifies $Z(G)$, and $3$-WL Version II identifies the commutator $[G,G]$ \cite{BrachterSchweitzerWLLibrary}. Here, using our Rank Lemma~\ref{LemmaRank} for commutator width, we obtain that $3$-WL Version II identifies the commutator in $O(\log n)$ rounds.

\begin{proof}[Proof of \Lem{IdentifyCenterCommutator}]
\noindent 
\begin{enumerate}[label=(\alph*)]
\item Let $x \in Z(G)$ such that $f(x) \not \in Z(H)$. Spoiler begins by pebbling $x \mapsto f(x)$. Let $f' : G \to H$ be the bijection that Duplicator selects at the next round. Let $y \in H$ such that $f'(x)$ and $y$ do not commute. Let $a := (f')^{-1}(y) \in G$. Spoiler pebbles $a \mapsto f'(a) = y$ and wins.

\item Spoiler pebbles $[x, y] \mapsto f([x,y])$. At the next two rounds, Spoiler pebbles $x, y$. Regardless of Duplicator's choices, Spoiler wins.

\item Apply the Rank Lemma~\ref{LemmaRank} to the set of commutators. Suppose that there exist $x, y \in G$ such that $g = [x,y]$, but for all $x', y' \in H$, $f(g) \neq [x', y']$. Then by part (b), Spoiler can win with $3$ pebbles and $3$ rounds. So by the Rank Lemma~\ref{LemmaRankIntro}, if $\cw(g) \neq \cw(f(g))$, then Spoiler can win with $4$ pebbles and $O(\log n)$ rounds.
\end{enumerate}
\end{proof}

By \Lem{IdentifyCenterCommutator}, Duplicator must select bijections that preserve the center and commutator subgroups setwise (or Spoiler can win). A priori, these bijections need not restrict to isomorphisms on the center or commutator. We note, however, that we may easily decide whether two groups have isomorphic centers, as the center is Abelian. Precisely, by \cite[Corollary~5.3]{BrachterSchweitzerWLLibrary}, $(2,1)$-WL Version II identifies Abelian groups. Note that we need an extra round to handle the case in which Duplicator maps an element of $Z(G)$ to some element not in $Z(H)$. So $(2,2)$-WL Version II both distinguishes $Z(G)$ (as a subset of $G$) and identifies its isomorphism type.

We now turn to detecting elements that split from arbitrary groups. To this end, we recall the following lemma from \cite{BrachterSchweitzerWLLibrary}.

\begin{lemma}[{\cite[Lemma~6.9]{BrachterSchweitzerWLLibrary}}]
Let $G$ be a finite group and $z \in Z(G)$. Then $z$ splits from $G$ if and only if $z[G,G]$ splits from $G/[G,G]$ and $\langle z \rangle \cap [G,G] = 1$. 
\end{lemma}

%
%
As they did \cite[Corollary~6.10]{BrachterSchweitzerWLLibrary}, we apply this lemma to show that WL can detect the set of central elements that split from an arbitrary finite group, but now we also control the number of rounds:

\begin{lemma}[{Compare rounds cf. \cite[Corollary~6.10]{BrachterSchweitzerWLLibrary}\footnote{Their statement omits the hypothesis that $x \in Z(G)$; we believe this hypothesis is crucial to the proof, but they---and we---only apply this lemma to cases where $x$ is in the center anyway.}}] \label{LemmaGeneralSplit}
Let $G, H$ be finite groups. Let $f : G \to H$ be the bijection that Duplicator selects. Suppose that $x \in Z(G)$
splits from $G$, but $f(x)$ does not split from $H$. Then Spoiler can win with $4$ pebbles and $O(\log n)$ rounds.
\end{lemma}

\begin{proof}
By \Lem{IdentifyCenterCommutator}, we have that if $x \not \in [G, G]$ but $f(x) \in [H, H]$, then Spoiler can win with $4$ pebbles and $O(\log n)$ rounds. (We note that the round count in the preceding step is really the main innovation here, as it relies on our Rank Lemma. The rest is essentially as in \cite{BrachterSchweitzerWLLibrary}.) So suppose that $x \not \in [G, G]$ and $f(x) \not \in [H, H]$. It suffices to check whether $x[G,G]$ splits from $G/[G,G]$, but $f(x)[H, H]$ does not split from $H/[H,H]$. 
By \cite[Lemma~4.11]{BrachterSchweitzerWLLibrary}, 
it suffices to consider the pebble game on $(G/[G,G], H/[H,H])$; we note that their Lemma~4.11 is a round-by-round simulation, plus $O(1)$ additional rounds, so their simulation preserves rounds up to an additive constant.
Spoiler begins by pebbling $x \mapsto f(x)$ in the game on $(G, H)$, which by the proof of \cite[Lemma~4.11]{BrachterSchweitzerWLLibrary} corresponds to Spoiler pebbling $x[G,G] \mapsto f(x)[H,H]$ in the game on $(G/[G,G], H/[H,H])$. We now apply \Lem{LemmaSplitAbelian} to $G/[G,G]$ and $H/[H,H]$. \qedhere
\end{proof}

We now consider splitting in two special cases.

\begin{lemma}[{\cite[Lemma~6.11]{BrachterSchweitzerWLLibrary}}]
Let $U \leq G$ and $x \in Z(G) \cap U$. If $x$ splits from $G$, then $x$ splits from $U$.
\end{lemma}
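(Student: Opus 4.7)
The plan is to produce an explicit complement to $\langle x \rangle$ in $U$ by intersecting with a complement in $G$. Since $x$ splits from $G$, fix a subgroup $H \leq G$ with $G = \langle x \rangle \times H$; that is, every $g \in G$ has a unique expression $g = x^i h$ with $h \in H$, and $[\langle x \rangle, H] = 1$. Set $K := U \cap H$. I claim $U = \langle x \rangle \times K$.

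First, $\langle x \rangle \cap K \leq \langle x \rangle \cap H = \{1\}$, so the intersection is trivial. Next, to see that $U = \langle x \rangle \cdot K$, take any $u \in U$ and use the decomposition in $G$ to write $u = x^i h$ with $h \in H$. The key (and essentially only) point is that $x \in U$ by hypothesis, hence $x^i \in U$, and therefore $h = x^{-i} u \in U$ as well; thus $h \in U \cap H = K$. Finally, because $x \in Z(G) \cap U \subseteq Z(U)$, the subgroup $\langle x \rangle$ is central, hence normal, in $U$, and it commutes elementwise with $K$. Combined with the trivial intersection and the product decomposition just established, this gives $U = \langle x \rangle \times K$, so $x$ splits from $U$.

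There is no real obstacle here beyond the bookkeeping: the only substantive ingredient is that $x \in U$, which is precisely what lets one pull the $\langle x \rangle$-part out of $u \in U$ and deduce that the $H$-part also lies in $U$. The centrality of $x$ in $G$ (rather than merely in $U$) is not used beyond ensuring that $\langle x \rangle$ is central in $U$; what really matters is that $x \in U$ together with the existing direct decomposition of $G$.
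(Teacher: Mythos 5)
Your argument is correct: since $x\in U$, Dedekind-style bookkeeping gives $U = U\cap(\langle x\rangle H) = \langle x\rangle (U\cap H)$, and together with $\langle x\rangle\cap H = \{1\}$ and the fact that $\langle x\rangle$ is central (hence both factors are normal in $U$ and commute elementwise) this yields $U=\langle x\rangle\times(U\cap H)$. Note that the paper does not prove this lemma at all—it is quoted verbatim from Brachter \& Schweitzer \cite[Lemma~6.11]{BrachterSchweitzerWLLibrary}—and your proof is the standard one for that statement. Your closing observation is also accurate, and can even be sharpened: whenever $G=\langle x\rangle\times H$ the element $x$ automatically lies in $Z(G)$ (it commutes with $\langle x\rangle$ and with $H$), so the only hypothesis doing real work is $x\in U$.
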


\begin{lemma}[{\cite[Lemma~6.12]{BrachterSchweitzerWLLibrary}}] \label{BSLemma612}
Let $G = G_{1} \times G_{2}$, and let $z := (z_{1}, z_{2}) \in Z(G)$ be an element of order $p^{k}$ for some prime $p$. Then $z$ splits from $G$ if and only if there exists an $i \in \{1,2\}$ such that $z_{i}$ splits from $G_{i}$ and $|z_{i}| = |z|$.
\end{lemma}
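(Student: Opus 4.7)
My plan is to prove the two directions separately: the reverse direction is an explicit construction of a complement, while the forward direction passes through the abelianization to invoke \cite[Lemmas~6.6 and~6.9]{BrachterSchweitzerWLLibrary} in sequence.

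For the reverse implication ($\Leftarrow$), assume WLOG that $z_1$ splits from $G_1$ with $|z_1| = |z| = p^k$, and write $G_1 = \langle z_1 \rangle \times K_1$. I propose $H := K_1 \times G_2$ as a complement to $\langle z \rangle$ in $G$. The intersection is trivial because any $z^j = (z_1^j, z_2^j) \in H$ forces $z_1^j \in K_1 \cap \langle z_1 \rangle = \{1\}$, so $p^k \mid j$ and $z^j = 1$. For the product, given $(g_1, g_2) \in G$, write $g_1 = z_1^a k$ with $k \in K_1$; then $z^{-a}(g_1, g_2) = (k, z_2^{-a} g_2) \in H$, so $G = \langle z \rangle H$. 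Centrality of $z$ upgrades this to an internal direct product.

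For the forward implication ($\Rightarrow$), write $G = \langle z \rangle \times H$. Then $[G,G] = [H,H] \leq H$, so $\langle z \rangle \cap [G,G] = \{1\}$---a quantitative strengthening of the conclusion $z \notin [G,G]$ from \cite[Lemma~6.9]{BrachterSchweitzerWLLibrary} which guarantees that $\bar z := z[G,G]$ retains its full order $p^k$ in $\bar G := G/[G,G]$. That same lemma tells us $\bar z$ splits from $\bar G = \bar G_1 \times \bar G_2$, where $\bar G_i := G_i/[G_i, G_i]$; since $\bar G$ is Abelian and $\bar z$ is a $p$-element, a standard Sylow argument transfers this splitting to $\bar P := \bar P_1 \times \bar P_2$, the Sylow $p$-subgroup of $\bar G$. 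Now apply \cite[Lemma~6.6]{BrachterSchweitzerWLLibrary} to any cyclic decomposition of $\bar P$ refining the $\bar P_1 \times \bar P_2$ factorization: some coordinate $a_j$ of $\bar z$ satisfies $|a_j| = |\bar z| = p^k$ and $a_j \notin A_j^p$. WLOG $A_j \leq \bar P_1$, so $a_j$ is a coordinate of $\bar z_1$, and the chain $p^k = |a_j| \leq |\bar z_1| \mid |z_1| \mid |z| = p^k$ forces $|\bar z_1| = |z_1| = p^k$.

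To conclude, the same criterion in \cite[Lemma~6.6]{BrachterSchweitzerWLLibrary} shows $\bar z_1$ splits from $\bar P_1$, and hence (via the Sylow decomposition of the Abelian group $\bar G_1$) from $\bar G_1$. Since $\bar z_1 \neq 1$, we have $z_1 \notin [G_1,G_1]$; and $z_1 \in Z(G_1)$ is inherited from $z \in Z(G_1) \times Z(G_2)$, so the converse direction of \cite[Lemma~6.9]{BrachterSchweitzerWLLibrary} yields that $z_1$ splits from $G_1$, completing the argument. The step I expect to be most delicate is the order-preservation of $\bar z$ after quotienting by $[G,G]$: this is where the splitting hypothesis does real work via the full statement $\langle z \rangle \cap [G,G] = \{1\}$ rather than the bare $z \notin [G,G]$, and without it the Abelian $p$-group criterion would fail to produce a coordinate of the required order.
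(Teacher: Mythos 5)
Your proof is correct; note, however, that the paper you were compared against does not prove this statement at all---it is quoted verbatim from Brachter--Schweitzer \cite[Lemma~6.12]{BrachterSchweitzerWLLibrary} and used as a black box, so there is no in-paper argument to compare with, and your write-up is a sound self-contained reconstruction from the other imported ingredients. The reverse direction via the explicit complement $K_{1} \times G_{2}$ works: since $z$ is central and $\langle z \rangle \cap (K_{1} \times G_{2}) = \{1\}$ with $\langle z \rangle (K_{1} \times G_{2}) = G$, conjugation by any $zh$ fixes $K_{1} \times G_{2}$, so the product is indeed an internal direct product as you assert. The forward direction is also right, and you correctly isolate where the splitting hypothesis does real work: $[G,G] = [H,H] \leq H$ gives $\langle z \rangle \cap [G,G] = \{1\}$, so $\bar{z}$ retains order $p^{k}$ in $G/[G,G]$; this is exactly what lets the cyclic-decomposition criterion of \cite[Lemma~6.6]{BrachterSchweitzerWLLibrary}, applied to a decomposition refining $\bar{P}_{1} \times \bar{P}_{2}$, produce a coordinate of full order lying in one factor, after which the divisibility chain $p^{k} = |a_{j}| \mid |\bar{z}_{1}| \mid |z_{1}| \mid p^{k}$ pins $|z_{1}| = |z|$, and \cite[Lemma~6.9]{BrachterSchweitzerWLLibrary} (together with the Sylow-wise splitting criterion for Abelian groups) pulls the splitting of $\bar{z}_{1}$ back to $z_{1}$. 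The only case left implicit is the degenerate one $z = 1$ (where Lemma~6.6's criterion is vacuous), but there the statement holds trivially, so nothing is lost.
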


We now consider what Brachter \& Schweitzer call the semi-Abelian case: that is, where our groups have the form $H \times A$, where $H$ has no Abelian direct factors and $A$ is Abelian. (Of course every finite group can be written this way, possibly with $H$ or $A$ trivial; the point of the semi-Abelian case is essentially to reduce to the case where there are no (remaining) Abelian direct factors.)

\begin{theorem} [{Compare rounds cf. \cite[Lemma~6.13]{BrachterSchweitzerWLLibrary}}] 
\label{IdentifyAbelianDirectFactor}
Let $G_{1} = H \times A$, with a maximal Abelian direct factor $A$. Then the isomorphism class of $A$ is identified by $(4, O(1))$-WL Version II. That is, if $(4, O(1))$-WL fails to distinguish $G$ and $\widetilde{G}$, then $\widetilde{G}$ has a maximal Abelian direct factor isomorphic to $A$. 
\end{theorem}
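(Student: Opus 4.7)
The plan is to follow the Brachter--Schweitzer strategy for Lemma~6.13, but with careful attention to round complexity so that everything fits in $O(1)$ rounds rather than $O(\log n)$. Let $\widetilde{G}$ be any group not distinguished from $G_{1}$ by $(4,c)$-WL Version II, for a sufficiently large constant $c$ to be chosen below; the goal is to exhibit a maximal Abelian direct factor of $\widetilde{G}$ isomorphic to $A$.

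The first step is to pass to the center. By \Lem{IdentifyCenterCommutator}(a), any bijection $f\colon G_{1}\to \widetilde{G}$ chosen by Duplicator must satisfy $f(Z(G_{1}))=Z(\widetilde{G})$, or Spoiler wins with $2$ pebbles in $2$ rounds. Moreover, by the remark after \Lem{IdentifyCenterCommutator} (invoking the $(2,1)$-WL identification of Abelian groups plus an extra round to separate $Z$ from its complement in $G$), $(2,2)$-WL identifies the isomorphism type of $Z(G_{1})$. Since $Z(G_{1})=Z(H)\times A$, we obtain $Z(\widetilde{G})\cong Z(H)\times A$ as Abelian groups. Next, I would apply \Lem{LemmaSplitAbelian} to the Abelian groups $Z(G_{1})$ and $Z(\widetilde{G})$ to conclude that Duplicator's bijection, restricted to the center, respects both Sylow decompositions and the splitting relation \emph{inside} $Z(G_{1})$ in $O(1)$ rounds and $\leq 4$ pebbles.

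The core of the argument is to characterize the isomorphism type of $A$ from data that WL has already pinned down. The characterization I would use is exactly \Lem{BSLemma612}: within each Sylow $p$-part $P_{p}$ of $Z(G_{1})=Z(H)\times A$, an element $(h,a)$ of order $p^{k}$ splits from $G_{1}$ iff $h$ splits from $H$ with $|h|=p^{k}$, or $a$ splits from $A$ with $|a|=p^{k}$. Because $A$ is \emph{maximal} Abelian direct factor, no nontrivial element of $Z(H)$ splits from $H$, so the splitting $p$-power elements of $Z(G_{1})$ are exactly those $(h,a)$ for which $a$ splits from the $p$-Sylow of $A$ with $|a|=p^{k}$. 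This lets me read off the multiplicities of the cyclic $\Z/p^{k}\Z$ summands of $A$ as a difference between two invariants of $Z(G_{1})$: the invariants of $Z(G_{1})$ itself (known from Step~1) and the invariants of the ``non-$A$'' part $Z(H)\leq Z(G_{1})$. Carrying out the symmetric analysis for $\widetilde{G}$ produces a maximal Abelian direct factor $\widetilde{A}\leq \widetilde{G}$ with the same cyclic-summand multiplicities at every prime power, hence $\widetilde{A}\cong A$ by the classification of finite Abelian groups.

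The main obstacle I expect is the round count. The obvious way to enforce that split-detection at the level of $G_{1}$ (rather than inside $Z(G_{1})$) behaves correctly uses \Lem{LemmaGeneralSplit}, which requires $O(\log n)$ rounds because it goes through $[G_{1},G_{1}]$ via the Rank Lemma. To stay in $O(1)$ rounds I would avoid identifying $[G_{1},G_{1}]$ as a subgroup outright, and instead use only the \emph{local} commutator test \Lem{IdentifyCenterCommutator}(b) (3 pebbles, 3 rounds): whenever Spoiler needs to verify that some central element does not lie in $[G_{1},G_{1}]$ to certify splitting, Spoiler pebbles a witnessing commutator $[x,y]\mapsto f([x,y])$ and appeals to the Abelian splitting test inside $Z(G_{1})$ rather than to a global subgroup comparison. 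Combined with the Abelian identification of $Z(G_{1})$ and the $O(1)$-round Abelian split-detection of \Lem{LemmaSplitAbelian}, this should close the argument with $4$ pebbles and a constant number of rounds.
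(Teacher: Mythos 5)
Your group-theoretic skeleton is the same as the paper's: fix the center and its isomorphism type via \Lem{IdentifyCenterCommutator}(a) and the $(2,2)$-WL identification of Abelian groups, then use \Lem{BSLemma612} together with the observation that $H$ has no Abelian direct factor (maximality of $A$) to read the cyclic-summand data of $A$ off of which central $p$-power elements split. The gap is in the step where WL must be shown, within $O(1)$ rounds, to preserve the splitting predicate that actually carries this information. That predicate is ``$z$ splits from $G_{1}$'' (equivalently, $z \notin [G_{1},G_{1}]$ and $z[G_{1},G_{1}]$ splits from $G_{1}/[G_{1},G_{1}]$), not ``$z$ splits from the Abelian group $Z(G_{1})$'': the latter is an invariant of the center alone and cannot separate the $A$-coordinates from the $Z(H)$-coordinates (in $Q_{8}$, for instance, the central involution splits from $Z(Q_{8})$ but not from $Q_{8}$). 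Your proposed repair---using the local test \Lem{IdentifyCenterCommutator}(b) and ``pebbling a witnessing commutator'' to verify that a central element does not lie in $[G_{1},G_{1}]$---does not work: part (b) only forces elements that \emph{are} single commutators to map to commutators; non-membership in $[G_{1},G_{1}]$ has no single-commutator witness, and the case Spoiler must actually punish (say $z \notin [G_{1},G_{1}]$ while $f(z) \in [\widetilde{G},\widetilde{G}]$) requires exhibiting $f(z)$ as a product of possibly many commutators, which is precisely the commutator-width/Rank-Lemma argument of \Lem{IdentifyCenterCommutator}(c) and of \Lem{LemmaGeneralSplit}, costing $O(\log n)$ rounds. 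As written, your argument either silently falls back on \Lem{LemmaGeneralSplit}, yielding $(4,O(\log n))$ rather than $(4,O(1))$, or substitutes center-internal splitting for splitting from $G_{1}$, which discards exactly the information that identifies $A$.

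For comparison, the paper's proof does not take your commutator-witness route at all: after fixing $Z(G)\cong Z(\widetilde{G})$, it works homocyclic component by homocyclic component, invokes \Lem{BSLemma612} to characterize the central elements of order $p^{i}$ that split from $G$ via their projections onto $A_{i}$, and then constrains Duplicator using \Lem{LemmaSplitAbelian} applied to the center---an $O(1)$-round statement---without ever identifying $[G,G]$ as a subgroup. If you prefer to route the argument through the general ``splits from $G$'' detection, as your write-up does, the only tool available in the paper is \Lem{LemmaGeneralSplit}, and then the round bound you can honestly claim is $O(\log n)$ (which would still suffice for the downstream use in \Lem{SemiAbelianDirectProduct}, where $r \in \Omega(\log n)$ is assumed, but it is not the statement being proved). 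To close the $O(1)$-round claim you would need either the paper's center-level argument or a new $O(1)$-round mechanism for the splitting-from-$G$ predicate; the mechanism you propose is not one.
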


\begin{proof}
We adapt the proof of \cite[Lemma~6.13]{BrachterSchweitzerWLLibrary} to control for rounds. Let $\widetilde{G}$ be a group such that $(4, O(1))$-WL Version II fails to distinguish $G$ and $\widetilde{G}$. By \Lem{IdentifyCenterCommutator} and the subsequent discussion, we may assume that $Z(G) \cong Z(\widetilde{G})$ using $(2,2)$-WL Version II. As Abelian groups are direct products of their Sylow subgroups, it follows that $Z(G)$ and $Z(\widetilde{G})$ have isomorphic Sylow subgroups. Write $\widetilde{G} = \widetilde{H} \times \widetilde{A}$, where $\widetilde{A}$ is the maximal Abelian direct factor. As $Z(G) \cong Z(\widetilde{G})$, we write $Z$ for the Sylow $p$-subgroup of $Z(G) \cong Z(\widetilde{G})$. Consider the primary decomposition of $Z$:
\[
Z := Z_{1} \times \ldots \times Z_{m},
\]
where $Z_{i} \cong (\mathbb{Z}/p^{i}\mathbb{Z})^{e_{i}}$, for $e_{i} \geq 1$. For each $i \in [m]$, there exist subgroups $H_{i} \leq Z(H)$ and $A_{i} \leq A$ such that $Z_{i} \cong H_{i} \times A_{i}$. Similarly, there exist $\widetilde{H_{i}} \leq \widetilde{H}$ and $\widetilde{A_{i}} \leq \widetilde{A}$ such that $Z_{i} \cong \widetilde{H_{i}} \times \widetilde{A_{i}}$. As $Z(G) \cong Z(\widetilde{G})$, we have that $H_{i} \times A_{i} \cong \widetilde{H_{i}} \times \widetilde{A_{i}}$. It suffices to show that each $A_{i} \cong \widetilde{A_{i}}$. As $H$ does not have any Abelian direct factors, we have by \cite[Lemma~6.12]{BrachterSchweitzerWLLibrary} (reproduced as \Lem{BSLemma612} above) that a central element $x$ of order $p^{i}$ splits from $G$ if and only if the projection of $x$ onto $A_{i}$, denoted $A_{i}(x)$, has order $p^{i}$. The same holds for $\widetilde{G}$ and the $\widetilde{A_{i}}$ factors. By \Lem{LemmaSplitAbelian}, we may assume that Duplicator selects bijections $f : G \to \widetilde{G}$ such that if $x \in Z(G)$ splits from $Z_{i}$, then $f(x)$ splits from $f(Z_{i})$. Otherwise, Spoiler pebbles $x \mapsto f(x)$, and then following \Lem{LemmaSplitAbelian}(b), wins with $3$ additional pebbles (for a total of $4$ pebbles) and $3$ additional rounds (for a total of $4$ rounds). The result follows.
\end{proof}

We now recall the definition of a component-wise filtration, introduced by Brachter \& Schweitzer \cite{BrachterSchweitzerWLLibrary} to control the non-Abelian part of a direct product.

\begin{definition} [{\cite[Def.~6.14]{BrachterSchweitzerWLLibrary}}]
Let $G = L \times R$. A \textit{component-wise filtration} of $U \leq G$ with respect to $L$ and $R$ is a chain of subgroups $\{1\} = U_{0} \leq \cdots \leq U_{r} = U$, such that for all $i \in [r-1]$, we have that $U_{i+1} \leq U_{i}(L \times \{1\})$ or $U_{i+1} \leq U_{i}(\{1\} \times R)$. 

Now fix $J \in \{I, II\}$. 
We say that $U_{i}$ is $(k,r)$-WL$_{J}$-detectable if, whenever $g \in U_{i}, g' \not \in U_{i}$, $(k,r)$-WL Version $J$ will assign different colors to $(g, \ldots, g)$ vs. $(g', \ldots, g')$. The filtration is $(k,r)$-WL$_{J}$-detectable, provided all subgroups in the chain are $(k,r)$-WL$_{J}$-detectable.
\end{definition}

Brachter \& Schweitzer previously showed \cite[Lemma~6.15]{BrachterSchweitzerWLLibrary} that there exists a component-wise filtration of $Z(G)$ with respect to $H$ and $A$ that is detectable by $4$-WL Version I. We extend this result to control for rounds. The proof that such a filtration exists is identical to that of \cite[Lemma~6.15]{BrachterSchweitzerWLLibrary}; we get a bound on the rounds using our \Lem{LemmaGeneralSplit}, which is a round-controlled version of their Corollary~6.10. For completeness, we indicate the needed changes here.

\begin{lemma}[Compare rounds cf. {\cite[Lemma~6.15]{BrachterSchweitzerWLLibrary}}] \label{LemmaFiltration}
Let $G = H \times A$, with maximal Abelian direct factor $A$. There exists a component-wise filtration of $Z(G)$, with respect to $H$ and $A$, $\{1\} = U_{0} \leq \cdots \leq U_{r} = Z(G)$ that is $(4, O(\log n))$-WL$_{II}$-detectable.
\end{lemma}

\begin{proof}[{Proof outline, highlighting differences from \cite{BrachterSchweitzerWLLibrary}.}]
We follow the strategy of \cite[Lemma~6.15]{BrachterSchweitzerWLLibrary}. First, we highlight our key changes. Their proof has two parts: 
the fact that central $e$-th powers are detectable, and their Corollary~6.10. Using our \Lem{LemmaGeneralSplit} in place of their Corollary~6.10, we get 4 pebbles and $O(\log n)$ rounds, so all that is left to handle is central $e$-th powers. Suppose Duplicator selects a bijection $f : G \to H$ where $g = x^{e}$ for some $x \in Z(G)$ and $f(g)$ is not a central $e$th power. By \Lem{IdentifyCenterCommutator}(a), Duplicator must map the center to the center or Spoiler can win with 2 pebbles in 2 rounds, 
so we need only handle the condition of being an $e$-th power. At the first round, Spoiler pebbles $g \mapsto f(g)$. At the next round, Spoiler pebbles $x$ and wins. Thus Spoiler can win with 2 pebbles in 2 rounds. 
\end{proof}

\begin{proof}
We now turn to the details. Let $p_{1} < \cdots < p_{n}$ be the primes dividing $|G|$, and write $Z_{p_{i}}$ for the Sylow $p_{i}$-subgroup of $Z(G) = Z(H) \times A$. We now turn to recalling the characteristic filtration defined by Brachter and Schweitzer. Suppose we already have a component-wise filtration of
\[
U = Z_{p_{1}} \times \cdots \times Z_{p_{i-1}} \times \{z \in Z_{p_{i}} : |z| < p_{i}^{m}\}
\]
with respect to $H$ and $A$, that is $(4, O(\log n))$-WL$_{II}$-detectable. We now seek to extend such a component-wise filtration to $U\{z \in Z_{p_{i}} : |z| \leq p_{i}^{m} \}$. To simplify notation, let $p := p_{i}$, and let $N$ be maximal such that $p^{N}$ divides $|Z(G)|$. 

Set $V_{0} := \{ z \in Z_{p} : |z| < p_{i}^{m} \}$. For $j \geq 1$, define: 
\[
V_{j} := \langle \{ z^{p^{N-j}} : z \in Z_{p}, |z^{p^{N-j}}| \leq p^m \} \rangle V_{j-1}.
\]

\noindent Now define:
\[
W_{j} := \langle \{ z^{p^{N-j+m}} : z \in Z_{p}, |z| \leq p^{N-j+m}, \text{ and } z \text{ does not split from G} \} \rangle V_{j-1}.
\]

\noindent By construction, we have that:
\[
U = UV_{0} \leq UW_{1} \leq UV_{1} \leq \cdots \leq UW_{n} \leq UV_{n} = U\{z \in Z_{p_{i}} : |z| \leq p_{i}^{m} \}.
\]
Brachter and Schweitzer established in the proof of \cite[Lemma~6.15]{BrachterSchweitzerWLLibrary} that $V_{j-1} \leq W_{j} \leq V_{j}$ for all $j \geq 1$. By the discussion in the outline above, 
we have that each $W_j, V_j$ are $(4, O(\log n))$-WL$_{II}$ detectable in $G$. The result now follows.
\end{proof}

We now show that in the semi-Abelian case $G = H \times A$, with maximal Abelian direct factor $A$, the WL dimension of $G$ depends on the WL dimension of $H$.

\begin{lemma}[{Compare rounds to \cite[Lemma~6.16]{BrachterSchweitzerWLLibrary}}] \label{SemiAbelianDirectProduct}
Let $G = H \times A$ and $\widetilde{G} = \widetilde{H} \times \widetilde{A}$, where $A$ and $\widetilde{A}$ are maximal Abelian direct factors. Let $k \geq 5$ and $r := r(n)$. If $(k, r + O(\log n))$-WL Version II fails to distinguish $G$ and $\widetilde{G}$, then $(k-1, r)$-WL Version II fails to distinguish $H$ and $\widetilde{H}$.
\end{lemma}

\begin{proof}
By \Thm{IdentifyAbelianDirectFactor}, we may assume that $A \cong \widetilde{A}$ (otherwise, $(4, O(1))$-WL Version II distinguishes $G$ from $\widetilde{G}$). Consider the component-wise filtrations from the proof of \cite[Lemma~6.15]{BrachterSchweitzerWLLibrary}, $\{1\} = U_{0} \leq \cdots \leq U_{r} = Z(G)$ with respect to the decomposition $G = H \times A$ and $\{1\} = \widetilde{U_{0}} \leq \cdots \leq \widetilde{U_{r}} = Z(\widetilde{G})$ with respect to the decomposition $\widetilde{G} = \widetilde{H} \times \widetilde{A}$. 

Let $V_{i}, W_{i}, \widetilde{V_{i}}, \widetilde{W_{i}}$ be as defined in the proof of \cite[Lemma~6.15]{BrachterSchweitzerWLLibrary} and recalled in the proof of \Lem{LemmaFiltration}. We showed in the proof of \Lem{LemmaFiltration} that for any bijection $f : G \to \widetilde{G}$ Duplicator selects, $f(V_{i}) = \widetilde{V_{i}}$ and $f(W_{i}) = \widetilde{W_{i}}$, or Spoiler may win with $4$ pebbles and $O(\log n)$ rounds.

In the proof of \cite[Lemma~6.16]{BrachterSchweitzerWLLibrary}, Brachter \& Schweitzer established that for all $1 \neq x \in Z(H) \times \{1\}$ and all $1 \neq  y \in \{1\} \times A$, $\min\{ i : x \in U_{i} \} \neq \min\{ i : y \in U_{i}\}$. 
Furthermore, by \cite[Lemma~4.14]{BrachterSchweitzerWLLibrary}, we may assume that Duplicator selects bijections at each round that respect the subgroup chains and their respective cosets, without altering the number of rounds (their proof is round-by-round). It follows that whenever $g_{1}g_{2}^{-1} \in Z(H) \times \{1\}$, we have that $f(g_{1})f(g_{2})^{-1} \not \in \{1\} \times A$.

Furthermore, Brachter \& Schweitzer also showed in the proof of \cite[Lemma~6.16]{BrachterSchweitzerWLLibrary} that Duplicator must map $H \times \{1\}$ to a system of representatives modulo $\{1\} \times \widetilde{A}$. Thus, Spoiler can restrict the game to $H \times \{1\}$. Now if $(k-1, r)$-WL Version II distinguishes $H$ and $\widetilde{H}$, then Spoiler can ultimately reach a configuration $((h_{1}, 1), \ldots, (h_{k-1}, 1)) \mapsto (x_{1}, a_{1}), \ldots, (x_{k-1}, a_{k-1})$ such that the induced configuration over $(G/(\{1\} \times A), \widetilde{G}/(\{1\} \times \widetilde{A}))$ fulfills the winning condition for Spoiler. That is, considered as elements of $G/(\{1\} \times A)$ (resp., $\widetilde{G}/(\{1\} \times \widetilde{A})$), the map $(h_1, \dotsc, h_{k-1}) \mapsto (x_1, \dotsc, x_{k-1})$ does not extend to an isomorphism. This implies that the pebbled map $((h_{1}, 1), \ldots, (h_{k-1}, 1)) \mapsto ((x_{1}, a_{1}), \ldots, (x_{k-1}, a_{k-1}))$ in the original groups (rather than their quotients) does not extend to an isomorphism. For suppose $f$ is any bijection extending the pebbled map. By the above, without loss of generality, $f$ maps $H \times \{1\}$ to a system of coset representatives of $\{1\} \times \widetilde{A}$, that is, if Duplicator can win, Duplicator can win with such a map. Let $\overline{f}$ be the induced bijection on the quotients $G/(\{1\} \times A) \to \tilde{G} / (\{1\} \times \tilde{A} )$. Since the pebbled map on the quotients does not extend to an isomorphism, there is a word $w$ such that $\overline{f}(w(h_1, \dotsc, h_{k-1})) \neq w(x_1, \dotsc, x_{k-1})$. But then when we consider $f$ restricted to $H \times \{1\}$, we find that $f(w((h_1,1), \dotsc, (h_{k-1},1))) = f( (w(h_1, \dotsc, h_{k-1}), 1) ) \neq (w(x_1, \dotsc, x_{k-1}), w(a_1, \dotsc, a_{k-1}))$, because their $H$ coordinates are different.
\end{proof}

Lem \ref{SemiAbelianDirectProduct} yields the following corollary.

\begin{corollary} \label{cor:DasSharmaGeneralized}
Let $G = H \times A$, where $H$ is identified by $(O(1), O(\log n))$-WL and does not have an Abelian direct factor, and $A$ is Abelian. Then $(O(1), O(\log n))$-WL identifies $G$. In particular, isomorphism testing of $G$ and an arbitrary group $\widetilde{G}$ is in $\textsf{TC}^{1}$.
\end{corollary}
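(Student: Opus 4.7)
The plan is to combine the identification of the Abelian direct factor (Theorem~\ref{IdentifyAbelianDirectFactor}) with the dimension transfer of Lemma~\ref{SemiAbelianDirectProduct}, and then apply the parallel implementation of WL Version II described in Section~2.4.

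Let $k_0$ be the constant such that $H$ is identified by $(k_0, O(\log n))$-WL Version II, and let $\widetilde{G}$ be any group that is not isomorphic to $G$. First I would argue that by Theorem~\ref{IdentifyAbelianDirectFactor}, if $(4, O(1))$-WL Version II fails to distinguish $G$ from $\widetilde{G}$, then $\widetilde{G}$ has a maximal Abelian direct factor $\widetilde{A} \cong A$, so we may write $\widetilde{G} = \widetilde{H} \times \widetilde{A}$ with $\widetilde{H}$ having no Abelian direct factor. Next, by the Remak--Krull--Schmidt uniqueness of indecomposable direct decompositions, the hypothesis $G \not\cong \widetilde{G}$ combined with $A \cong \widetilde{A}$ forces $H \not\cong \widetilde{H}$.

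Now I would invoke the hypothesis: since $H$ is identified by $(k_0, O(\log n))$-WL Version II, this algorithm distinguishes $H$ from $\widetilde{H}$. Setting $k := \max(k_0, 5)$ (so that the hypothesis $k \geq 5$ of Lemma~\ref{SemiAbelianDirectProduct} is satisfied, and also $k$-WL still distinguishes $H$ from $\widetilde{H}$ by monotonicity in dimension), the contrapositive of Lemma~\ref{SemiAbelianDirectProduct} yields that $(k-1, O(\log n))$-WL Version II distinguishes $G$ from $\widetilde{G}$. Since $\widetilde{G}$ was arbitrary and $k-1$ is a constant depending only on $k_0$, we conclude that $G$ is identified by $(O(1), O(\log n))$-WL Version II.

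For the complexity bound, I would appeal to Section~2.4: the initial coloring of $k$-WL Version II is $\textsf{L}$-computable and each subsequent refinement step is $\textsf{TC}^0$-computable for any fixed $k$. Running $O(\log n)$ rounds therefore yields a circuit in $\textsf{L} \cup \textsf{TC}^1 \subseteq \textsf{TC}^1$, giving the desired $\textsf{TC}^1$ isomorphism test against an arbitrary group $\widetilde{G}$. The only mild subtlety is the dimension bookkeeping when $k_0 < 5$, which is resolved by simply padding the dimension to $5$; no step in the argument presents a real obstacle, since the heavy lifting (identifying the Abelian direct factor and transferring dimension through the semi-Abelian decomposition) has already been done in Theorem~\ref{IdentifyAbelianDirectFactor} and Lemma~\ref{SemiAbelianDirectProduct}.
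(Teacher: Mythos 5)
Your proposal is correct and follows essentially the same route as the paper: the paper's proof simply cites Lemma~\ref{SemiAbelianDirectProduct} (whose proof internally invokes Theorem~\ref{IdentifyAbelianDirectFactor}) and the Grohe--Verbitsky parallel implementation, which is exactly the chain you spell out in more detail. Your added bookkeeping (establishing $\widetilde{A}\cong A$, deducing $H\not\cong\widetilde{H}$, padding the dimension to $5$, and tracking the $O(\log n)$ rounds through the $\textsf{L}$-computable initial coloring plus $\textsf{TC}^{0}$ refinement steps) is sound and, if anything, more careful about the round count than the paper's own two-line proof.
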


\begin{proof}
By \Lem{SemiAbelianDirectProduct}, as $H$ is identified by $(O(1), O(\log n))$-WL, we have that $G$ is identified by $(O(1), O(\log n))$-WL. As only $O(\log n)$ rounds are required, we apply the parallel WL implementation due to Grohe \& Verbitsky \cite{GroheVerbitsky} to obtain the bound of $\textsf{TC}^{1}$ for isomorphism testing. 
\end{proof}

\begin{remark}
Das \& Sharma \cite{DasSharma} previously exhibited a nearly-linear time algorithm for groups of the form $H \times A$, where $H$ has size $O(1)$ and $A$ is Abelian. \Cor{cor:DasSharmaGeneralized} considers a more general family of groups, including when $H$ is $O(1)$-generated. 
In addition to extending to a larger family of groups, for groups $H \times A$ where $|H|=O(1)$ and $A$ is Abelian, although \Cor{cor:DasSharmaGeneralized} does not improve the runtime relative to Das \& Sharma's result (in fact, the runtime is worse), it does establish a new parallel upper bound for isomorphism testing.
\end{remark}

\subsection{Additional preliminaries}

We now introduce some additional preliminaries.

\begin{definition}
Let $G_{1}, G_{2}$ be groups, and let $Z_{i} \leq Z(G_{i})$ be central subgroups. Given an isomorphism $\varphi : Z_{1} \to Z_{2}$, the \textit{central product} of $G_{1}$ and $G_{2}$ with respect to $\varphi$ is:
\[
G_{1} \times_{\varphi} G_{2} = (G_{1} \times G_{2}) / \{ (g, \varphi(g^{-1})) : g \in Z_{1} \}.
\]

A group $G$ is the \textit{(internal) central product} of subgroups $G_{1}, G_{2} \leq G$, provided that $G = G_{1}G_{2}$ and $[G_{1}, G_{2}] = \{1\}$. 
\end{definition}

In general, a group may have several inherently different central decompositions. On the other hand, indecomposable \emph{direct product} decompositions are unique in the following sense.

\begin{lemma}[{See, e.g., \cite[3.3.8]{Robinson1982}}] \label{LemmaDirectProdDecomp}
Let $G = G_{1} \times \ldots \times G_{m} = H_{1} \times \ldots \times H_{n}$ be two direct decompositions of $G$ into directly indecomposable factors. Then $n = m$, and there exists a permutation $\sigma \in \text{Sym}(n)$ such that for all $i$, $G_{i} \cong H_{\sigma(i)}$ and $G_{i}Z(G) = H_{\sigma(i)}Z(G)$.
\end{lemma}

By the preceding lemma, the multiset of subgroups $\{\!\!\{ G_{i}Z(G) \}\!\!\}$ is invariant under automorphisms of $G$. 

\begin{definition}[{\cite[Def.~6.3]{BrachterSchweitzerWLLibrary}}] \label{def:directlyInduced}
We say that a central decomposition $\{H_1, H_2\}$ of $G = H_{1}H_{2}$ is \textit{directly induced} if there exist subgroups $K_{i} \leq H_{i}$ ($i = 1, 2$) such that $G = K_{1} \times K_{2}$ and $H_{i} = K_{i}Z(G)$.
\end{definition}

\begin{lemma}[{\cite[Lemma~6.4]{BrachterSchweitzerWLLibrary}}]
Let $k \geq 4, r \geq 1$. Let $G_{1}, G_{2}, H_{1}, H_{2}$ be groups such $G_{i}$ and $H_{i}$ are not distinguished by $(k, r)$-WL. Then $G_{1} \times G_{2}$ and $H_{1} \times H_{2}$ are not distinguished by $(k, r)$-WL.
\end{lemma}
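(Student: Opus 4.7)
The plan is to use the $(k+1)$-pebble, $r$-round Version~II game characterization of Brachter \& Schweitzer and lift Duplicator's hypothesized component strategies to a product strategy. By hypothesis, for each $i \in \{1,2\}$ Duplicator has a winning strategy $\sigma_i$ in the $(k+1)$-pebble Version~II game on $(G_i, H_i)$ for $r$ rounds. I will combine these into a winning strategy $\sigma$ for Duplicator on $(G_1 \times G_2,\ H_1 \times H_2)$.

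The construction is the obvious product one. Whenever Duplicator must choose a bijection $f : G_1 \times G_2 \to H_1 \times H_2$, he runs $\sigma_1$ and $\sigma_2$ in parallel to obtain bijections $f_i : G_i \to H_i$ and outputs the product bijection $f(g_1, g_2) := (f_1(g_1), f_2(g_2))$. Each pebble placement or pickup by Spoiler at slot $j$ on a pair $(g_1, g_2) \mapsto (f_1(g_1), f_2(g_2))$ in the product game is interpreted as the $j$-th pebble being placed or picked up on $g_i \mapsto f_i(g_i)$ in each component game. One pebble in the product game therefore induces one pebble in each component game, so the pebble and round budgets in each component are respected, and Duplicator wins each component sub-game.

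To verify that the Version~II winning condition is preserved in the product game, suppose the pebbled pairs at some round are $(g_1^{(j)}, g_2^{(j)}) \mapsto (h_1^{(j)}, h_2^{(j)})$ for $j = 1, \dots, \ell$, where by construction $h_i^{(j)} = f_i(g_i^{(j)})$. Since $\sigma_i$ is winning, the map $g_i^{(j)} \mapsto h_i^{(j)}$ extends to an isomorphism
\[
\alpha_i : \langle g_i^{(1)}, \dots, g_i^{(\ell)} \rangle \;\longrightarrow\; \langle h_i^{(1)}, \dots, h_i^{(\ell)} \rangle.
\]
The product $\alpha_1 \times \alpha_2$ is then an isomorphism of the two ambient direct products. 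Restricting it to the subgroup $K := \langle (g_1^{(j)}, g_2^{(j)}) : j \in [\ell]\rangle \leq G_1 \times G_2$ gives an injective homomorphism that sends each generator $(g_1^{(j)}, g_2^{(j)})$ to $(h_1^{(j)}, h_2^{(j)})$, and therefore maps $K$ isomorphically onto $\langle (h_1^{(j)}, h_2^{(j)}) : j \in [\ell]\rangle \leq H_1 \times H_2$. So the Version~II condition holds, and $\sigma$ is winning.

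The only subtlety worth flagging is that the subgroup $K$ generated by the pebbled pairs is in general a \emph{proper} subgroup of $\langle g_1^{(j)} : j\rangle \times \langle g_2^{(j)} : j\rangle$, so one cannot simply identify $K$ with a direct product of generated component subgroups. However, since $\alpha_1 \times \alpha_2$ is an isomorphism of the ambient products, its restriction to any subgroup is automatically an isomorphism onto its image, which sends our generating set for $K$ to the corresponding generating set on the $H$-side. There is thus no genuine obstacle; the result reflects the fact that the Version~II winning condition behaves naturally under direct products.
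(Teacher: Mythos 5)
Your proof is correct. The paper itself does not reprove this statement: it cites Brachter \& Schweitzer's Lemma~6.4 and merely remarks that their proof goes through when rounds are tracked, and the product-strategy argument you give (Duplicator plays $f_1 \times f_2$, each product pebble inducing one pebble per component game) is exactly the standard argument behind that citation, with the round- and pebble-accounting made explicit --- which is precisely what the paper needs. One small wording fix: $\alpha_1 \times \alpha_2$ is an isomorphism of $\langle g_1^{(1)},\dots,g_1^{(\ell)}\rangle \times \langle g_2^{(1)},\dots,g_2^{(\ell)}\rangle$ onto $\langle h_1^{(1)},\dots,h_1^{(\ell)}\rangle \times \langle h_2^{(1)},\dots,h_2^{(\ell)}\rangle$, not of the ambient direct products $G_1 \times G_2$ and $H_1 \times H_2$ (the $\alpha_i$ are only defined on the generated subgroups); since the subgroup $K$ generated by the pebbled pairs sits inside the former product, your restriction argument is unaffected, so this is cosmetic rather than a gap.
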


\begin{remark}
The statement of \cite[Lemma~6.4]{BrachterSchweitzerWLLibrary} does not mention rounds; however, the proof holds when considering rounds.
\end{remark}

\subsection{General case}

Following the strategy in \cite{BrachterSchweitzerWLLibrary}, we reduce the general case to the semi-Abelian case, in which groups are of the form $H \times A$, where $A$ is Abelian and $H$ does not have any Abelian direct factors. Consider a direct decomposition $G = G_{1} \times \ldots \times G_{d}$, where each $G_{i}$ is directly indecomposable. The multiset of subgroups $\{\!\!\{ G_{i}Z(G) \}\!\!\}$ is independent of the choice of decomposition (\Lem{LemmaDirectProdDecomp}). We first show that Weisfeiler--Leman detects $\bigcup_{i} G_{i}Z(G)$. Next, we utilize the fact that the connected components of the non-commuting graph on $G$, restricted to $\bigcup_{i} G_{i}Z(G)$, correspond to the subgroups $G_{i}Z(G)$. 

\begin{definition}
For a group $G$, the \textit{non-commuting graph} $X_{G}$ has vertex set $G \backslash Z(G)$,\footnote{We have changed our definition of the non-commuting graph relative to that in \cite{BrachterSchweitzerWLLibrary} to only have vertices corresponding to elements outside the center, in agreement with \cite{ABDOLLAHI2006468} and prior literature on the non-commuting graph going back to when Erd\H{o}s first defined it; the rest of the results go through with this modified definition as stated.} and an edge $\{g, h\}$ precisely when $[g, h] \neq 1$.
\end{definition}

\begin{proposition}[{\cite[Proposition~2.1]{ABDOLLAHI2006468}}]
If $G$ is non-Abelian, then $X_G$ is connected.
\end{proposition}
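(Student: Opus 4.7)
I would prove this by showing that any two non-central elements are connected by a path of length at most $2$ in $X_G[G \setminus Z(G)]$. Fix $a, b \in G \setminus Z(G)$. If $[a,b] \neq 1$, then $\{a,b\}$ is already an edge and we are done, so assume $[a,b] = 1$. Since $a,b \notin Z(G)$, the centralizers $C_G(a), C_G(b)$ are proper subgroups of $G$, so we can pick $x \in G \setminus C_G(a)$ and $y \in G \setminus C_G(b)$. Note $x,y \notin Z(G)$ automatically. If $[b,x] \neq 1$ then $a - x - b$ is a path in $X_G[G \setminus Z(G)]$; symmetrically, if $[a,y] \neq 1$ then $a - y - b$ is a path. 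So the only remaining case is $[b,x] = 1$ and $[a,y] = 1$ (together with the standing assumption $[a,b]=1$).

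In this remaining case, I would consider the element $z := xy$ and argue that $z$ is adjacent to both $a$ and $b$ in the non-commuting graph. For the first: using $[a,y]=1$ we get $a(xy) = axy$ and $(xy)a = x(ya) = x(ay) = (xa)y$, so $a$ commutes with $xy$ iff $ax = xa$, which fails by choice of $x$; hence $[a,xy] \neq 1$. The symmetric computation using $[b,x]=1$ gives $[b,xy] \neq 1$. Moreover $z \notin Z(G)$ because $[a,z] \neq 1$, and $z \notin \{a,b\}$ because $z = a$ would force $[a,z]=1$ (and similarly for $b$). So $a - z - b$ is a valid path in $X_G[G \setminus Z(G)]$.

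\textbf{Potential obstacle.} The only nontrivial point is the commutator bookkeeping in the last case, which is purely mechanical once one writes out $axy$ versus $xya$ using the two commuting relations. Everything else is just an exhaustive case split based on which pairs among $\{a,b,x,y\}$ commute, together with the elementary fact that a proper centralizer omits some element of $G$. No appeal to deeper structure theory is needed, so I expect this to be a short, self-contained argument of two or three short paragraphs.
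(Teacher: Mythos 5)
Your argument is correct: the case split together with the element $z = xy$ (non-commuting with $a$ via $[a,y]=1$ and with $b$ via $[b,x]=1$) gives a common neighbor, so the induced graph is connected, indeed of diameter at most $2$. Note that the paper itself does not prove this proposition at all --- it is quoted from \cite[Proposition~2.1]{ABDOLLAHI2006468} --- and your proof is essentially the standard argument given there, so there is nothing further to reconcile.
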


Our goal now is to first construct a canonical central decomposition of $G$ that is detectable by WL. This decomposition will serve to approximate $\bigcup_{i} G_{i}Z(G)$ from below.

\begin{definition}[{\cite[Definition~6.19]{BrachterSchweitzerWLLibrary}}] \label{MaximalCentralizerDef}
Let $G$ be a finite, non-Abelian group. Let $M_{1}$ be the set of non-central elements $g$ whose centralizers $C_{G}(g)$ have maximal order among all non-central elements. For $i \geq 1$, define $M_{i+1}$ to be the union of $M_i$ and the set of non-central\footnote{We believe this to be the correct definition, by which the remaining results go through. See Footnote~\ref{fn:noncentral}.} elements $g \in G \setminus \langle M_{i} \rangle$ that have maximal centralizer order $|C_{G}(g)|$ amongst the non-central elements in $G \setminus \langle M_{i} \rangle$. Let $M := M_{\infty}$ be the stable set resulting from this procedure.

Consider the subgraph $X_{G}[M]$, and let $X_{1}, \ldots, X_{m}$ be the connected components. Set $N_{i} := \langle X_{i} \rangle$. We refer to $N_{1}, \ldots, N_{m}$ as the \textit{non-Abelian components} of $G$. 
\end{definition}

Brachter \& Schweitzer previously established the following \cite{BrachterSchweitzerWLLibrary}.

\begin{lemma}[{\cite[Lemma~6.20]{BrachterSchweitzerWLLibrary}}] \label{LemBSCanonicalDecomposition}
In the notation of Definition \ref{MaximalCentralizerDef}, we have the following.
\begin{enumerate}[label=(\alph*)]
\item $M$ is $3$-WL$_{II}$-detectable.
\item $G = N_{1} \cdots N_{m}$ is a central decomposition of $G$. For all $i$, $Z(G) \leq N_{i}$ and $N_{i}$ is non-Abelian. In particular, $M$ generates $G$.\footnote{\label{fn:noncentral}We clarify a minor confusion in the definitions and proof from \cite{BrachterSchweitzerWLLibrary}. In the proof of their 6.20(2), they write that by definition the process of defining $M$ doesn't stop until $G = \langle M \rangle$, but they also say that by construction, no central elements are added to $M$, despite their definition allowing central elements in the construction of $M_i$ for $i > 1$. We believe only allowing non-central elements in the construction of $M_i$ to be correct and make the rest of the results go through; in particular, we cannot allow central elements in $M_i$ for $i \geq 1$ or some of the ``non-abelian components'' would actually be cyclic subgroups of the center. Nonetheless, although we do not get $G = \langle M \rangle$ by definition/construction anymore, their argument that $M_i = M_iZ(G)$ for all $i$ is still correct, and thus $Z(G) \leq \langle M_1 \rangle \leq \langle M \rangle$, and the process does stop by definition once $G \backslash \langle M_i \rangle \leq Z(G)$. Therefore indeed when the process stops we have $G = \langle M \rangle$. }

\item If $G = G_{1} \times \ldots \times G_{d}$ is an arbitrary direct decomposition, then for each $i \in [m]$, there exists a unique $j \in [d]$ such that $N_{i} \subseteq G_{j}Z(G)$. Collect all such $i$ for one fixed $j$ in an index set $I_{j}$. Then
\[
N_{j_{1}}N_{j_{2}} \cdots N_{j_{\ell}} = G_{j}Z(G),
\]
where $I_{j} = \{ j_{1}, \ldots, j_{\ell} \}$. 
\end{enumerate}
\end{lemma}

We note that \Lem{LemBSCanonicalDecomposition}(b)--(c) are purely group-theoretic statements. For our purposes, it is necessary, however, to adapt \Lem{LemBSCanonicalDecomposition}(a) to control for rounds. This is our second use of our Rank Lemma~\ref{LemmaRank}, this time applied to the set $M_i$ from Definition \ref{MaximalCentralizerDef}. We in fact use the Rank Lemma a few times as stated, and in the course of the proof we prove a ``relative modulo $\langle M_{i-1} \rangle$'' version of the Rank Lemma as well; although such a relative version of the rank lemma modulo a characteristic subgroup would be interesting to state in general, we use several features specific to the case at hand that dissuade us from doing so here.

\begin{lemma} \label{LemBSCanonicalDecompositionA}
Let $G$ and $H$ be finite non-Abelian groups, let $M_{i,G}$ (resp., $M_{i,H}$) denote the sets from Definition \ref{MaximalCentralizerDef} for $G$ (resp., $H$). Let $f : G \to H$ be a bijection that Duplicator selects. If for some $i$, $\rk_{M_{i,G}}(g) \neq \rk_{M_{i,H}}(f(g))$, then Spoiler can win with $3$ pebbles and $O(\log n)$ rounds.
\end{lemma}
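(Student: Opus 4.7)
The plan is to prove this by induction on $i$, applying the Rank Lemma~\ref{LemmaRank} to $M_i$ at each stage. The key auxiliary ingredient is centralizer-size preservation: if $|C_G(g)| \neq |C_H(f(g))|$, then Spoiler pebbles $g \mapsto f(g)$ and exploits the cardinality mismatch between $C_G(g)$ and $C_H(f(g))$ at the next round to find an element $x$ commuting with $g$ whose image fails to commute with $f(g)$ (or symmetrically), winning with $2$ pebbles and $2$ rounds.

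For the base case $i = 1$, centralizer-size preservation together with center preservation (Lemma~\ref{IdentifyCenterCommutator}(a)) immediately yields that $M_1$ is $(2, O(1))$-WL$_{II}$-detectable: if $g \in M_{1,G}$ but $f(g) \notin M_{1,H}$, then either $f(g) \in Z(H)$, or $|C_H(f(g))|$ differs from $|C_G(g)|$, or $|C_G(g)|$ fails to be maximal in $H$ (handled by a symmetric swap using some $h^* \in M_{1,H}$). In every case Spoiler wins with $2$ pebbles and $O(1)$ rounds. Applying the Rank Lemma then shows that $M_1$-rank is preserved with $3$ pebbles and $O(\log n)$ rounds, so in particular $\langle M_1 \rangle$ is detected as the set of elements of finite $M_1$-rank.

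For the inductive step, assume the lemma holds for all $j < i$, so that $\langle M_{i-1}\rangle$ is detected. To detect $M_i$, consider $g \in M_{i,G}$ with $f(g) \notin M_{i,H}$. If $g \in M_{i-1,G}$, the inductive hypothesis applies directly. Otherwise $g \in M_{i,G} \setminus M_{i-1,G}$, so $g \notin \langle M_{i-1,G}\rangle$ and $|C_G(g)|$ is the maximum centralizer order over $G \setminus \langle M_{i-1,G}\rangle$. Two subcases arise: (a) if $f(g) \in \langle M_{i-1,H}\rangle$, then $\rk_{M_{i-1,G}}(g) = \infty$ while $\rk_{M_{i-1,H}}(f(g)) < \infty$, a mismatch dispatched by the inductive hypothesis; (b) if $f(g) \notin \langle M_{i-1,H}\rangle$, then $|C_H(f(g))|$ is strictly below the maximum centralizer size over $H \setminus \langle M_{i-1,H}\rangle$, and combining centralizer preservation with a symmetric swap using any $h^* \in M_{i,H} \setminus M_{i-1,H}$ achieving that maximum produces a centralizer mismatch Spoiler exploits. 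A final application of the Rank Lemma to $M_i$ completes the step.

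The main obstacle is managing the pebble budget across the induction. A naive accounting in which each use of the Rank Lemma adds a pebble would cause the pebble count to grow without bound in $i$. The resolution is that Spoiler's strategy must be presented as a \emph{single} unified pebble-game strategy that reuses pebbles across the centralizer check, the $M_{i-1}$-rank check, and the final rank argument for $M_i$, so that the number of pebbles on the board never exceeds $3$ at any time; the round count accumulates only logarithmically because each branch of the argument contributes at most $O(\log n)$ rounds and the branches are used sequentially rather than nested.
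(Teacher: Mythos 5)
Your proposal follows essentially the same route as the paper's proof: induction on $i$, with the $(2,2)$-pebble centralizer-order argument establishing detectability of $M_1$ (and of each $M_i$ once $\langle M_{i-1}\rangle$ is preserved), followed by an application of the Rank Lemma~\ref{LemmaRank} to $M_i$-rank at each stage. The extra case analysis in your inductive step and your remark on reusing pebbles to keep the count at $3$ simply make explicit what the paper's ``iterate on the above argument replacing $1$ with $i$'' leaves implicit, so the two arguments coincide in substance.
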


\begin{proof}
Let $M_{i, G}, M_{G}$ be the sets in $G$ as in Definition \ref{MaximalCentralizerDef}, and let $M_{i, H}, M_{H}$ be the corresponding sets in $H$. We show that $f(M_{i, G}) = M_{i, H}$ and $f(\langle M_{i,G}\rangle) = \langle M_{i,H} \rangle$ for all $i$. These statements imply that $f(M_{G}) = M_{H}$. The proof proceeds by induction over $i$ to show that if the preceding does not hold, then Spoiler can win with 3 pebbles in $\log |\langle M_{i,G} \rangle|) + O(i)$ rounds. Within the inductive step for each $i$, we prove a ``quotient'' version of the  Rank Lemma (Lemma~\ref{LemmaRank}), applied to $M_i$-rank modulo $\langle M_{i-1} \rangle$. Toward that end, we note that each $M_i$ is indeed closed under taking inverses,  since $C_G(g) = C_G(g^{-1})$.


We first handle the case of $M_1$. If $|C_{G}(g)| \neq |C_{H}(f(g))|$, then Duplicator may win with $2$ pebbles and $2$ rounds. Without loss of generality, suppose that $|C_{G}(g)| > |C_{H}(f(g))|$. Spoiler pebbles $g \mapsto f(g)$. Let $f' : G \to H$ be the bijection that Duplicator selects at the next round. Now there exists $x \in C_{G}(g)$ such that $f'(x) \not \in C_{H}(f(g))$. Spoiler pebbles $x \mapsto f'(x)$ and wins immediately.

Thus $M_{1,G}$ is distinguished by 2 pebbles in 2 rounds. (Since $M_1$ is a characteristic subset of any group, and we have shown that for any two finite non-Abelian groups $G,H$, Duplicator must map $M_{1,G}$ to $M_{1,H}$ or Spoiler wins using 2 pebbles in 2 rounds, we may apply the latter statement to the case $G = H$ in order to satisfy the definition of a subset being ``distinguished.'' We will use this same kind of argument in the future without further comment.) By the Rank Lemma~\ref{LemmaRank} applied to $M_1$-rank, we get that the subgroup $\langle M_{1,G} \rangle$ is distinguished by 2 pebbles in $\log |\langle M_{1,G} \rangle| + O(1)$ rounds (if Duplicator ever selects a bijection $f$ with $f(\langle M_{1,G} \rangle) \neq \langle M_{1,H} \rangle$, Spoiler can win with 3 pebbles in $\log |\langle M_{1,G} \rangle| + O(1)$ rounds).

\newcommand{\relrk}{\text{relrk}}
Now let $i > 1$, and suppose that $M_{i-1,G}$ and $\langle M_{i-1,G} \rangle$ are both distinguished by 2 pebbles in at most $\log |\langle M_{i-1,G} \rangle | + c(i-1)$ rounds for some constant $c$. (Here we give $c$ a name rather than hiding it with big-Oh notation, because we will have various $O(\cdot)$'s floating around and want to keep careful track of which are which.) 

First we claim that $M_{i,G}$ is distinguished by 2 pebbles in $\log |\langle M_{i-1,G} \rangle | + c(i-1)$ (n.b. the subscript $i-1$) rounds. Suppose the pebble game is started from the position $g \mapsto h$ for some $g \in M_{i,G}$ and $h \notin M_{i,H}$. If $h \in \langle M_{i-1,H} \rangle$, then by reversing the roles of $G$ and $H$ and using our inductive assumption, we get that Spoiler can win with 2 pebbles in $\log |\langle M_{i-1,H} \rangle | + c(i-1)$ rounds. If $h \notin \langle M_{i-1,H} \rangle$, then we have $h$ is not in $M_{i,H}$ iff $|C_H(h) | \neq |C_G(g)|$. In the latter case, as above, Spoiler can win with 2 pebbles and 2 rounds. Thus in either case, as long as $c \geq 2$ (which we can indeed choose), Spoiler can win with 2 pebbles in $\log | \langle M_{i-1,G} \rangle | + c(i-1)$ rounds, and therefore $M_{i,G}$ is distinguished using 2 pebbles in $\log | \langle M_{i-1,G} \rangle | + c(i-1)$ rounds. 

In the remainder of the proof we will show that the subgroup $\langle M_{i,G} \rangle$ is distinguished by 2 pebbles in $\log|\langle M_{i,G} \rangle| + ci$ rounds.\footnote{\label{fn:gains} We note that if we applied the Rank Lemma here directly, we would instead get 2 pebbles and $\log|\langle M_{i,G} \rangle| + O(1)$ plus the rounds to distinguish $\langle M_{i-1,G} \rangle$, and when applied inductively this would lead to a bound of $O(\log^2 n)$ instead of $O(\log n)$ rounds.} For this part of the proof, we define the \emph{relative rank} of $g \in G$ as the minimum $r$ such that $g$ can be written as $g = g_1 g_2 \dotsc g_r m$ where $g_j \in M_{i,G}$ (just the set, not $\langle M_{i,G} \rangle$) for all $j=1,\dotsc,r$ and $m \in \langle M_{i-1, G} \rangle$ (the subgroup, not just the set). We define elements of $\langle M_{i-1,G} \rangle$ to have relative rank 0. For $g \notin \langle M_{i,G} \rangle$ we define the relative rank to be $\infty$; it is readily seen that an element has finite relative rank iff it is in $\langle M_{i,G} \rangle$. We use the notation $\relrk(g)$ for the relative rank of $g$. We define relative rank in $H$ analogously, using $M_{i,H}$ and $M_{i-1,H}$ in place of $M_{i,G}$ and $M_{i-1,G}$, respectively; we re-use the same notation $\relrk(h)$ for relative rank in $H$.

We will need the following two facts about relative rank:
\begin{enumerate}

\item[R1] For all $x, y \in G$ (resp. $H$), we have $\relrk(xy) \leq \relrk(x) + \relrk(y)$. If either relative rank is infinite the conclusion is vacuously true. Otherwise, suppose $x = x_1 \dotsb x_r m$ and $y = y_1 \dotsb y_s m'$ with $m,m' \in \langle M_{i-1} \rangle$; $x_j,y_j \in M_{i}$ for all $j$; and $r = \relrk(x), s = \relrk(y)$. Then $xy = x_1 \dotsb x_r y_1 \dotsb y_s (y^{-1}my) m'$, where we have $y^{-1} m y$ is in $\langle M_{i-1} \rangle$ since the latter is a normal subgroup.

\item[R2] For all $x \in \langle M_i \rangle$, we have $\relrk(x) \leq |\langle M_i \rangle| / |\langle M_{i-1} \rangle|$. (This is where the crucial gains come from, as the $M_i$-rank could in principle be as large as $|\langle M_i \rangle|$; see also Footnote~\ref{fn:gains}.) For suppose we have $x = x_1 \dotsb x_r m$ with $m \in \langle M_{i-1} \rangle$ and $r > |\langle M_i \rangle| / |\langle M_{i-1} \rangle|$. Then among the partial products $x_1, x_1 x_2, x_1 x_2 x_3, \dotsc, x_1 \dotsb x_r$, some coset of $\langle M_{i-1} \rangle$ is encountered twice; suppose that $x_1 \dotsb x_j$ and $x_1 \dotsb x_k$ with $j < k$ are in the same coset of $\langle M_{i-1} \rangle$. Then we can rewrite $x_1 \dotsb x_k$ as $x_1 \dotsb x_j m'$ for some $m' \in \langle M_{i-1} \rangle$. Finally, since $\langle M_{i-1} \rangle$ is normal, we then get $m' x_{k+1} \dotsb x_r = x_{k+1} \dotsb x_r m''$ for a unique $m'' \in \langle M_{i-1} \rangle$, and thus 
\[
x = x_1 \dotsb x_k x_{k+1} \dotsb x_r m = (x_1 \dotsb x_j m') x_{k+1} \dotsb x_r m = x_1 \dotsb x_j x_{k+1} \dotsb x_r m'' m,
\]
so $\relrk(x) < r$. 
\end{enumerate}


Returning to our pebble game starting from the position $g \mapsto h$, without loss of generality (by swapping the roles of $G$ and $H$ if needed), we may assume that $\relrk(g) < \relrk(h)$. If $\relrk(g)=0$ (i.\,e., $g \in \langle M_{i-1,G} \rangle$) then by the inductive hypothesis, then Spoiler can win with 2 additional pebbles in $\log |\langle M_{i-1, G} \rangle| + c(i-1)$ rounds.

On the other hand, if $\relrk(g) > 0$, let $f \colon G \to H$ be the bijection that Duplicator selects at the next round. For brevity let $q = \relrk(g)$. Note that this case works even if $\relrk(h)=\infty$, for then by our assumption that $\relrk(g) < \relrk(h)$, we still have $q = \relrk(g)$ is finite. Now write $g = g_{1} \cdots g_{q} m$, where for each $j$, $g_j \in M_{i,G}$ and we have $m \in \langle M_{i-1,G} \rangle$. For $1 \leq j \leq k \leq q$, write $g[j, \ldots, k] := g_{j} \cdots g_{k}$; continuing this notation, write $g[j,\ldots,q+1] = g_j g_{j+1} \dotsc g_q m$ (that is, we essentially consider $m$ to be at index $q+1$). We consider the following cases.
\begin{itemize}
\item \textbf{Case 1:} Suppose first that for all $x \in G$ with $\relrk(x) < q$, we have $\relrk(x) = \relrk(f(x))$. In this case, Spoiler pebbles $g[2, \ldots, q+1] \mapsto f(g[2, \ldots, q+1])$. Let $f' : G \to H$ be the bijection that Duplicator selects at the next round; Spoiler now pebbles $g_1 \mapsto f'(g_1)$. 

If $\relrk(g_{1}) = \relrk(f'(g_{1})) = 1$, then $f'(g_{1}) \cdot f(g[2, \ldots, q+1]) \neq h$, since $\relrk(f'(g_1)) = 1$ and $\relrk(f(g[2, \ldots, q+1])) \leq q-1$, so their product has rank at most $q=\relrk(g) < \relrk(h)$ by fact R1 above. Spoiler now wins since the map on the pebbled elements $(g, g[2,\ldots,q+1], g_1) \mapsto (h, f(g[2, \ldots, q+1]), f'(g_1))$ does not satisfy the winning condition for Spoiler in the Version I (and hence, the Version II) game. 

On the other hand, suppose instead that $1=\relrk(g_{1}) < \relrk(f'(g_{1}))$. Note that $\relrk(f'(g_1)) > 1$ implies that $f'(g_1)$ is not in $M_{i,H}$ (relative rank is $\leq 1$ iff an element is in $M_{i,H} \cdot \langle M_{i-1,H} \rangle$). But since $g_1$ is, by definition, an element of $M_{i,G}$, from the fact that $M_{i,G}$ is distinguished, we have that Spoiler can win with $2$ additional pebbles and $\log|\langle M_{i-1, G}\rangle| + c(i-1)$ additional rounds. Note that only 2 pebbles are needed rather than 3, since the original pebble on $g$ can be reused at this point.

\item \textbf{Case 2:} Suppose instead that the hypothesis of Case 1 is not satisfied. Then there exists some $x \in \langle M_{i,G} \rangle$ with $\relrk(x) \neq \relrk(f(x))$ and $\relrk(x) < q = \relrk(g)$. We will now do a binary-search-like procedure to eventually reach either (a) an element of $M_{i,G}$ that is mapped outside of $M_{i,H}$ or (b) an element of $\langle M_{i-1,G} \rangle$ that is mapped outside of $\langle M_{i-1,H} \rangle$. To make this clearer, we introduce two additional parameters as in binary search. Let $lo = 1$ and $hi=q+1$. In the next two rounds, Spoiler pebbles $x := g[lo, \ldots, lo + \lceil (hi-lo)/2 \rceil-1]$ and $y := g[lo + \lceil (hi-lo)/2 \rceil, \ldots, hi]$. Note that $\relrk(x) = \lceil (hi-lo)/2 \rceil$ and $\relrk(g) = \relrk(x) + \relrk(y)$, for if $x$ or $y$ could be written as a shorter word in $M_{i,G}$, then so could $g$, by fact R1. 

Let $f' : G \to H$ be the next bijection that Duplicator selects. If $h \neq f'(x)f'(y)$ then Spoiler immediately wins. On the other hand, if $h = f'(x) f'(y)$, 
then either
\[
\relrk(x) < \relrk(f'(x)) \qquad \text{ or } \qquad \relrk(y) < \relrk(f'(y)),
\]
since $\relrk(h) > \relrk(g) = \relrk(x) + \relrk(y)$. 

Without loss of generality, suppose that $\relrk(x) < \relrk(f'(x))$. Spoiler then updates $hi$ to $lo + \lceil (hi-lo)/2 \rceil-1$, and as $x \mapsto f'(x)$ is already pebbled, Spoiler iterates on this strategy as in binary search, reusing the pebbles on $g$ and $y$ in the next iteration, with $x$ now playing the role that $g$ played in the first iteration. Thus, iterating this procedure uses at most $3$ pebbles (the one that started on $g$ plus 2 more). After at most $\log q + O(1)$ iterations, we reach the case where $hi=lo+1$, and we have pebbled either (a) an element $z \mapsto f''(z)$ such that $z \in M_{i,G}$ and $\relrk(f''(z)) > 1$ or (b) an element $m \mapsto f''(m)$ with $m \in \langle M_{i-1,G} \rangle$ and $\relrk(f''(m)) > 0$. At this point, there are two pebbles in play that can be reused.

By fact R2, every element of $\langle M_{i,G} \rangle$ has relative rank at most $|\langle M_{i,G} \rangle | / |\langle M_{i-1, G} \rangle|$, and thus only $\log \frac{|\langle M_{i,G} \rangle |}{|\langle M_{i-1, G} \rangle|} + O(1)$ rounds of the preceding iteration are required. Choose $c$ large enough so that it is greater than the preceding $O(1)$. 

Finally, since both $M_{i,G}$ and $\langle M_{i-1,G} \rangle$ are distinguished by 2 pebbles in $\log |\langle M_{i-1,G} \rangle| + c(i-1)$ rounds, and we have either (a) pebbled an element of $M_{i,G}$ mapped outside of $M_{i,H}$ or (b) pebbled an element of $\langle M_{i-1,G} \rangle$ mapped outside of $\langle M_{i-1,H} \rangle$, in either case Spoiler can reuse the aforementioned 2 pebbles to win in $\log |\langle M_{i-1, G} \rangle| + c(i-1)$ additional rounds. 

Thus $\langle M_{i,G} \rangle$ is distinguished by 2 pebbles in at most 
\[
\log \frac{|\langle M_{i,G} \rangle |}{|\langle M_{i-1, G} \rangle|} + c + \log |\langle M_{i-1, G} \rangle| + c(i-1)
\]
total rounds. The preceding expression readily simplifies to $\log |\langle M_{i,G} \rangle| + ci$, as claimed.
\end{itemize}

%
%

Note that, since the $M_i$ are constructed so that $\langle M_i \rangle$ strictly contains $\langle M_{i-1} \rangle$, the maximum value of $i$ is at most $\log|G| = \log n$. By the above induction, we then get that each $M_i$ is distinguished by 2 pebbles $\log |\langle M_i \rangle| + ci$ rounds, where the latter is at most $\log n + c \log n = O(\log n)$. Finally, we apply the Rank Lemma~\ref{LemmaRank} to get that $M_i$-rank must be preserved or Spoiler can win with 3 pebbles and $O(\log n) + \log n + O(1) = O(\log n)$ rounds.
\end{proof}


\begin{definition} \label{def:full}
Let $G = N_{1} \cdots N_{m}$ be the decomposition into non-Abelian components, and let $G = G_{1} \times \cdots \times G_{d}$ be an arbitrary direct product decomposition. We say that $x \in G$ is \textit{full} for $(G_{j_{1}}, \ldots, G_{j_{r}})$, if 
\[
\{ i \in [m] : [x, N_{i}] \neq 1 \} = \bigcup_{\ell=1}^{r} I_{j_{\ell}},
\]

\noindent where the $I_{j_{\ell}}$ are as in \Lem{LemBSCanonicalDecomposition}(c). For all $x \in G$, define $C_{x} := \prod_{[x, N_{i}] = \{1\}} N_{i}$ and $N_{x} = \prod_{[x, N_{i}] \neq \{1\}} N_{i}.$
\end{definition}

We now recall some technical lemmas from \cite{BrachterSchweitzerWLLibrary}.

\begin{remark}[{\cite[Observation~6.22]{BrachterSchweitzerWLLibrary}}]
For an arbitrary collection of indices $J \subseteq [m]$, the group elements $x \in G$ that have $C_{x} = \prod_{i \in J} N_{i}$ are exactly those elements of the form $x = z \prod_{i \in J} n_{i}$ with $z \in Z(G)$ and $n_{i} \in N_{i} \setminus Z(G)$. In particular, full elements exist for every collection of non-Abelian direct factors and any direct decomposition, and they are exactly given by products over non-central elements from the corresponding non-Abelian components.
\end{remark}

\begin{lemma}[{\cite[Lemma~6.23]{BrachterSchweitzerWLLibrary}}] \label{BSLem623}
Let $G$ be non-Abelian, and let $G = G_{1} \times \cdots \times G_{d}$ be an indecomposable direct decomposition. For all $x \in G$, we have a central decomposition $G = C_{x}N_{x}$, with $Z(G) \leq C_{x} \cap N_{x}$. The decomposition is directly induced (Def.~\ref{def:directlyInduced}) if and only if $x$ is full for a collection of direct factors of $G$.
\end{lemma}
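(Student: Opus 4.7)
The plan is to prove the two assertions separately, with \Lem{LemBSCanonicalDecomposition} providing the essential structural input.

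For the central decomposition $G = C_x N_x$ with $Z(G) \leq C_x \cap N_x$: by \Lem{LemBSCanonicalDecomposition}(b), the $N_i$ form a central decomposition of $G$ and each satisfies $Z(G) \leq N_i$. Partitioning the components by whether $[x, N_i] = 1$ then gives $G = C_x N_x$ with $[C_x, N_x] = 1$ (since distinct $N_i, N_j$ commute by the central decomposition), and the center inclusion is automatic once each of $C_x, N_x$ is nontrivial. The degenerate cases in which one of the products is empty are handled by the convention that the empty product carries $Z(G)$ (equivalently, both $C_x$ and $N_x$ are taken to include the trivially-commuting $Z(G)$).

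For the biconditional, the forward direction should be a direct calculation. If $x$ is full for $(G_{j_1}, \ldots, G_{j_r})$, \Lem{LemBSCanonicalDecomposition}(c) lets us rewrite
\[
N_x = \prod_{\ell=1}^r \prod_{i \in I_{j_\ell}} N_i = \Bigl(\prod_{\ell=1}^r G_{j_\ell}\Bigr) Z(G),
\]
and similarly $C_x = \bigl(\prod_{j \notin \{j_1, \ldots, j_r\}} G_j\bigr) Z(G)$. Taking $K_2 := G_{j_1} \times \cdots \times G_{j_r}$ and $K_1$ as the product of the remaining $G_j$'s then exhibits $G = C_x N_x$ as directly induced by $G = K_1 \times K_2$.

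For the converse, I would suppose that $G = K_1 \times K_2$ directly induces the decomposition, so $C_x = K_1 Z(G)$ and $N_x = K_2 Z(G)$, and then apply Remak--Krull--Schmidt to align $K_1 \times K_2$ with $G_1 \times \cdots \times G_d$, yielding a partition $[d] = J_1 \sqcup J_2$ with $K_t Z(G) = \prod_{j \in J_t} G_j Z(G)$ for $t = 1, 2$. Converting via \Lem{LemBSCanonicalDecomposition}(c) gives $N_x = \prod_{j \in J_2} \prod_{i \in I_j} N_i$, and uniqueness of this expression (the $N_i$ being canonical connected components of $X_G[M]$, each properly containing $Z(G)$ and corresponding to a direct factor of $G/Z(G)$) forces $\bigcup_{j \in J_2} I_j = \{i : [x, N_i] \neq 1\}$, which is precisely the condition that $x$ be full for $(G_j)_{j \in J_2}$. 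The main obstacle will be this converse direction: both the Remak--Krull--Schmidt alignment (which only identifies factors up to central adjustment) and the canonicity argument establishing that distinct subsets of $\{N_i\}$ yield distinct products within $G$, which should follow by passing to $G/Z(G)$ where the images of the $N_i$ form an honest internal direct product.
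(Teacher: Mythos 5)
This lemma is stated in the paper with a citation only---it is imported verbatim from Brachter \& Schweitzer \cite[Lemma~6.23]{BrachterSchweitzerWLLibrary}, and the paper gives no proof of its own---so your proposal can only be judged on its own terms, and on those terms it is essentially correct. The central-decomposition claim and the forward direction go through exactly as you write, using \Lem{LemBSCanonicalDecomposition}(b)--(c) together with the observation that Abelian direct factors are central (so they have $I_j = \emptyset$ and are absorbed into $Z(G)$), and your empty-product convention is precisely what the statement $Z(G) \leq C_x \cap N_x$ forces for $x \in Z(G)$.

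For the converse, the two points you flag are both real but both close quickly. First, the Remak--Krull--Schmidt alignment needs the subgroup \emph{equality} $L_i Z(G) = G_{\sigma(i)}Z(G)$, not merely abstract isomorphism; this is what Robinson's 3.8.3 actually provides (the matching is realized by a central automorphism $\alpha$, and $\alpha(g)g^{-1} \in Z(G)$ gives $\alpha(L_i)Z(G) = L_iZ(G)$), and it is also what the paper's remark about the multiset of subgroups $G_iZ(G)$ being invariant tacitly uses; refining $K_1, K_2$ into indecomposables then yields your partition $[d] = J_1 \sqcup J_2$ with $K_tZ(G) = \prod_{j \in J_t}G_jZ(G)$. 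Second, the ``honest internal direct product'' claim you leave as ``should follow'' is true by a two-line centralizer argument you should supply: if $g \in N_i \cap \prod_{j \neq i}N_j$, then since $[N_i,N_j]=1$ for all $j \neq i$, $g$ centralizes both $N_i$ and $\prod_{j\neq i}N_j$, hence all of $G = N_1\cdots N_m$ by \Lem{LemBSCanonicalDecomposition}(b), so $g \in Z(G)$. Since $Z(G) \leq N_i$ and each $N_i$ is non-Abelian, the images $\overline{N_i}$ in $G/Z(G)$ are nontrivial and satisfy $\overline{N_i} \cap \prod_{j\neq i}\overline{N_j} = 1$, so distinct subsets $S \subseteq [m]$ give distinct subgroups $\prod_{i\in S}N_i$; this forces $\{i : [x,N_i]\neq 1\} = \bigcup_{j\in J_2}I_j$, i.e.\ $x$ is full for $(G_j)_{j\in J_2}$, completing your argument.
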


\begin{lemma}[{Compare rounds cf. \cite[Lemma~6.24]{BrachterSchweitzerWLLibrary}}] \label{ModifiedLem624}
Let $G = G_{1} \times G_{2}$. For $k \geq 4$, assume that $(k,r)$-WL Version II detects $G_{1}Z(G)$ and $G_{2}Z(G)$. Let $H$ be a group such that $(k,r + O(\log n))$-WL Version II does not distinguish $G$ and $H$. Then for $i \in \{1,2\}$, there exist subgroups $H_{i} \leq H$ such that $H = H_1 \times H_2$ and $(k,r)$-WL does not distinguish $G_{i}Z(G)$ and $H_{i}Z(H)$.
\end{lemma}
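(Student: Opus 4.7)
The plan is to follow the structural outline of \cite[Lemma~6.24]{BrachterSchweitzerWLLibrary}, but with careful bookkeeping of rounds; the hypothesis $r \in \Omega(\log n)$ should absorb the $O(\log n)$ overhead from applications of our Rank Lemma.

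The first step is to fix notation. Since $(k,r)$-WL Version II detects each set $G_i Z(G)$, any bijection $f : G \to H$ that Duplicator plays must map $G_i Z(G)$ setwise onto a canonical subset $A_i \subseteq H$, or Spoiler wins within the allotted pebble and round budget. Applying \Lem{LemmaRank} to $G_i Z(G)$ as a generating set for itself then shows that $A_i$ is in fact a subgroup of $H$, at the cost of $k+1$ pebbles and $\max\{r, O(\log n)\} = O(r)$ rounds. Simultaneously, \Lem{IdentifyCenterCommutator} forces $f(Z(G)) = Z(H)$, so $Z(H) \leq A_1 \cap A_2$.

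The second step is to verify that $\{A_1, A_2\}$ is a directly induced central decomposition of $H$, which by \Lem{BSLem623} amounts to exhibiting a full element. Elementwise commutation $[G_1 Z(G), G_2 Z(G)] = 1$ forces $[A_1, A_2] = 1$: any failure is witnessed by a pair $a \in A_1, b \in A_2$ with $[a,b] \neq 1$, which Spoiler pebbles via preimages in $G$, winning in $O(1)$ additional rounds by a standard commutator argument. The group generation $G = G_1 Z(G) \cdot G_2 Z(G)$ forces $H = A_1 \cdot A_2$ by the same token. To confirm direct inducedness, I would pebble a full element $x \in G$ for $(G_1, G_2)$ (such elements exist by \cite[Observation~6.22]{BrachterSchweitzerWLLibrary}); fullness is captured by the commutation pattern of $x$ with the non-Abelian components $N_i$, and these components are preserved by $f$ via \Lem{LemBSCanonicalDecomposition} and \Lem{LemBSCanonicalDecompositionA} at a cost of $3$ pebbles and $O(\log n)$ rounds. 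Hence $f(x)$ is full for $(A_1, A_2)$, and \Lem{BSLem623} supplies subgroups $H_i \leq A_i$ with $H = H_1 \times H_2$ and $H_i Z(H) = A_i$.

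The third step is transferring the indistinguishability to the factors: since every bijection that Duplicator selects on $G$ restricts to a bijection $G_i Z(G) \to H_i Z(H) = A_i$, and the Version II winning condition for pebbles lying entirely within these subsets depends only on the subgroup they generate (which stays inside $G_i Z(G)$ or $H_i Z(H)$), any winning Duplicator strategy on $(G, H)$ restricts to a winning Duplicator strategy on $(G_i Z(G), H_i Z(H))$ with the same pebble and round budget.

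The main obstacle I anticipate is the joint bookkeeping: several detectability facts (subgroup structure of the $A_i$ from the Rank Lemma, centrality of the intersection, preservation of non-Abelian components, and preservation of full elements) must each fit inside the same $(k+1)$-pebble, $O(r)$-round game. The hypothesis $r \in \Omega(\log n)$ is precisely what prevents the Rank Lemma applications from dominating the round count.
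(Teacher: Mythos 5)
Your Step 2 has a genuine circularity. You propose to establish that $\{A_1,A_2\}$ is a \emph{directly induced} central decomposition of $H$ by pebbling a full element $x$ of $G$, arguing that $f$ preserves the non-Abelian components and the commutation pattern of $x$ with them, and concluding ``hence $f(x)$ is full for $(A_1,A_2)$,'' after which \Lem{BSLem623} would hand you the direct factors $H_i$. But fullness is defined \emph{relative to a direct decomposition of the ambient group}: $f(x)$ is full only if the set of non-Abelian components of $H$ not commuting with $f(x)$ is exactly a union of the index sets $I'_j$ attached to indecomposable direct factors of $H$. Preserving the commutation pattern only tells you \emph{which} components $f(x)$ fails to commute with; whether that collection of components groups together into direct factors of $H$ is precisely the content of the lemma you are trying to prove (and, a priori, $H$ need not decompose compatibly at all --- in that case your argument supplies no winning strategy for Spoiler). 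Indeed, in the paper the implication runs the other way: the preservation-of-fullness statement (\Lem{LemmaPreservesFullness}) is \emph{proved from} \Lem{ModifiedLem624} together with \Lem{BSLem623}, so invoking fullness preservation here assumes what is to be shown. (Note also that ``full for $(A_1,A_2)$'' is not even well-formed, since $A_i \supseteq Z(H)$ and the $A_i$ are not known to be direct factors.)

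The paper's proof avoids this by never touching full elements. After fixing $\widetilde{H_i} := f(G_iZ(G))$ (your $A_i$), it writes $\widetilde{H_i} = R_i \times B_i$ with $B_i$ a maximal Abelian direct factor, and proves directly that $R_1 \cap R_2 = \{1\}$: a nontrivial intersection would force the existence of a central element of prime-power order that splits from $Z(H)$ but from neither $\widetilde{H_1}$ nor $\widetilde{H_2}$ (via \Lem{BSLemma612}), a configuration absent from $G_1Z(G), G_2Z(G)$, so Spoiler wins by the round-controlled splitting lemma (\Lem{LemmaGeneralSplit}). Combining this with \Thm{IdentifyAbelianDirectFactor} (matching maximal Abelian direct factors), a choice of $R_1R_2$ inside a fixed complement $R$ of $B$ in $H$, and an order count $|R_1|\,|R_2|\,|B| = |H|$, one gets $H = R_1 \times R_2 \times B = (R_1\times B_1)\times(R_2\times B_2)$ with $H_iZ(H) = \widetilde{H_i}$. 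Your Steps 1 and 3 (detection of the $A_i$ as subgroups containing $Z(H)$, and restricting Duplicator's strategy to the factors) are fine in outline, but the heart of the lemma --- producing the direct decomposition of $H$ --- needs an argument of the splitting/counting kind above rather than an appeal to fullness.
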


\begin{proof}
The proof is largely identical to that of \cite[Lemma~6.24]{BrachterSchweitzerWLLibrary}. We adapt their proof to control for rounds. 

As $(k,r)$-WL detects $G_{1}Z(G)$ and $G_{2}Z(G)$, we have for any two bijections $f, f' : G \to H$ chosen by Duplicator, that $f(G_{i}Z(G)) = f'(G_{i}Z(G))$ for $i \in \{1,2\}$, or Spoiler wins with $k+1$ pebbles and $r$ rounds. It follows that there exist subgroups of $\widetilde{H_{i}} \leq H$ such that $f(G_{i}Z(G)) = \widetilde{H_{i}}$ (or Spoiler wins with $k+1$ pebbles and $r$ rounds). As $Z(G) \leq G_{i}Z(G)$, we have necessarily that $Z(H) \leq \widetilde{H_{i}}$ or Spoiler can win with 2 pebbles and 2 rounds by \Lem{IdentifyCenterCommutator}(a). Consider the decompositions $Z(G) = Z(G_{1}) \times Z(G_{2})$ and $G_{i}Z(G) = G_{i} \times Z(G_{i+1 \mod 2})$. By \Lem{BSLemma612}, we have that if $x$ splits from $Z(G)$, then $x$ also splits from $G_{1}Z(G)$ or $G_{2}Z(G)$. 

Write $\widetilde{H_{i}} = R_{i} \times B_{i}$, where $B_{i}$ is a maximal Abelian direct factor of $\widetilde{H_{i}}$.

\begin{quotation}
\noindent \textbf{Claim 1:} For all choices of $R_{i}, B_{i}$, it holds that $R_{1} \cap R_{2} = \{1\}$. Otherwise, Spoiler can win with $4$ additional pebbles and $O(\log n)$ additional rounds.

\begin{proof}[Proof of Claim 1]
By assumption, $\widetilde{H_{1}} \cap \widetilde{H_{2}} = Z(H)$. So $R_{1} \cap R_{2} \leq Z(H)$. Suppose to the contrary that there exists a nontrivial $z \in R_{1} \cap R_{2}$. Then there exists a $z \in R_1 \cap R_2$ such that $|z| = p$ for some prime $p$. Then there also exists a central $p$-element $w$ that splits from $Z(H)$ and where $z \in \langle w \rangle$. 
Write $w = (r_{i}, b_{i})$ with respect to the chosen direct decomposition for $\widetilde{H_{i}}$.  As $z \in \langle w \rangle$, we have that there is some $m$ so that $w^{m} = z \in R_{1} \cap R_{2}$. So $w^{m} \neq 1$. Furthermore, we may write $w^{m} = (r_{1}^{m}, 1) = (r_{2}^{m}, 1)$. As $w$ has $p$-power order, we have as well that $|b_{i}| < |r_{i}|$ for each $i \in \{1,2\}$ and therefore $|w|=|r_i|$. Now $w$ does not split from $\widetilde{H_{i}}$; otherwise, by \Lem{BSLemma612}, we would have that $r_{i}$ splits from $R_{i}$. However, neither $R_{1}$ nor $R_{2}$ admit Abelian direct factors.

It follows that $w$ has $p$-power order in $\widetilde{H_{1}} \cap \widetilde{H_{2}}$, splits from $Z(H)$, but does not split from $\widetilde{H_{1}}$ or $\widetilde{H_{2}}$. Such elements do not exist in $G$: if $x \in G_1Z(G) \cap G_2Z(G)$ has $p$-power order and splits from $Z(G)$, we claim it must also split from at least one of $\widetilde{G_1}$ or $\widetilde{G_2}$. For $G_1 Z(G) \cap G_2 Z(G) = Z(G) = Z(G_1) \times Z(G_2)$, so by \Lem{BSLemma612}, if $x$ splits from $Z(G)$, then $x = (x_1, x_2)$ with $x_i \in Z(G_i)$ for $i=1,2$, and there exists $i$ such that $|x| = |x_i|$ and $x_i$ splits from $Z(G_i)$. But now consider $x$ as an element of $\widetilde{G}_{3-i} = G_{3-i} \times Z(G_i)$. By \Lem{BSLemma612} again, but now applied to $G_{3-i} \times Z(G_i)$, we find that since $|x|=|x_i|$ and $x_i$ splits from $Z(G_i)$, $x$ splits from $G_{3-i} \times Z(G_i) = \widetilde{G}_{3-i}$.
Thus, in this case, we have by \Lem{LemmaGeneralSplit} that Spoiler can win with $4$ additional pebbles and $O(\log n)$ additional rounds.
\end{proof}
\end{quotation}

\noindent We next consider maximal Abelian direct factors $A \leq G$ and $B \leq H$. Write $H = R \times B$. By \Thm{IdentifyAbelianDirectFactor}, we may assume that $A \cong B$ (or Spoiler can win with $5$ additional pebbles and $O(1)$ additional rounds). We now argue that $R_{1}$ and $R_{2}$ can be chosen such that $R_{1}R_{2} \cap B = \{1\}$. For $i \in \{1,2\}$, we may write:
\[
\widetilde{H_{i}} = \langle (r_{1}, b_{1}), \ldots, (r_{t}, b_{t}) \rangle \leq R \times B.
\] 

As $B \leq Z(H) \leq \widetilde{H_{i}}$, we have that:
\[
\widetilde{H_{i}} = \langle (r_{1}, 1), (1, b_{1}), \ldots, (r_{t}, 1), (b_{t}, 1) \rangle = \langle (r_{1}, 1), \ldots, (r_{t}, 1)\rangle \times B.
\]

It follows that we may choose $R_{1}R_{2} \leq R$. By Claim 1, we have that $R_{1} \cap R_{2} = \{1\}$. So $R_{1}R_{2}B = R_{1} \times R_{2} \times B \leq H$. As $(k, r+O(\log n))$-WL fails to distinguish $G$ and $H$, we have by \Thm{IdentifyAbelianDirectFactor} that $|R_{1}| \cdot |R_{2}| \cdot |B| = |H|$. So in fact, $H = R_{1} \times R_{2} \times B$, which we may write as $(R_{1} \times B_{1}) \times (R_{2} \times B_{2})$, where $B_{i} \leq H_{i}$ are chosen such that $B = B_{1} \times B_{2}$ and $B_{i}$ is isomorphic to a maximal Abelian direct factor of $G_{i}$. Furthermore, we have that $R_{i}Z(H) = \widetilde{H_{i}}$, by construction. The result follows.
\end{proof}

\begin{lemma} \label{LemmaPreservesFullness}
Let $G = N_1 \cdots N_{m}$ and $H = Q_{1} \cdots Q_{m}$ be the decompositions of $G$ and $H$ into non-Abelian components (Def.~\ref{MaximalCentralizerDef}). Let $G = G_{1} \times \ldots \times G_{d}$ be a decomposition into indecomposable direct factors. Let $f : G \to H$ be the bijection that Duplicator selects. If there exists $x \in G$ that is full (Def.~\ref{def:full}) for $(G_{j_{1}}, \ldots, G_{j_{\ell}})$, but for all collections $(H_{j_1}, \ldots, H_{j_m})$ of indecomposable direct factors of $H$, $f(x)$ is not full for that collection of direct factors, 
 then Spoiler may win with $5$ pebbles and $O(\log n)$ rounds.
 
Furthermore, if there exists $x \in G$ is full for exactly one directly indecomposable direct factor of $G$ and $f(x)$ is not full for exactly one indecomposable direct factor of $H$, then Spoiler can win with $5$ pebbles and $O(\log n)$ rounds.
\end{lemma}

\begin{proof}
Spoiler begins by pebbling $x \mapsto f(x)$. Let $f' : G \to H$ be the bijection Duplicator selects at the next round. By \Lem{LemBSCanonicalDecompositionA}, we may assume that $f'(N_{x}) = N_{f(x)}$ and $f'(C_{x}) = C_{f(x)}$, or Spoiler wins with $3$ pebbles and $O(\log n)$ rounds. In particular, there exists a constant $t$ such that $(4, t \log(n))$-WL detects $C_{x}Z(G)$ and $N_{x}Z(G)$. Thus the hypotheses of \Lem{ModifiedLem624} are satisfied by taking $r := t\log(n)$. As the central decomposition $G = C_{x}N_{x}$ is directly induced, we have that by \Lem{ModifiedLem624}, the central decomposition $H = C_{f(x)}N_{f(x)}$ has to be directly induced or Spoiler can win with $5$ pebbles and $O(\log n)$ rounds. So by \Lem{BSLem623}, we have that $f(x)$ is full for some collection of direct factors of $H$.

To see the ``furthermore'', note that $x$ is full for exactly one indecomposable direct factor iff $C_x$ is inclusion-maximal among all $C_{x'}$. If $x$ has this property but $f(x)$ does not, then Spoiler pebbles $x \mapsto f(x)$. In the next round, let $f'$ be the bijection chosen by Duplicator. Since Duplicator must choose bijections that map $C_x$ to $C_{f(x)}$ but $C_{f(x)}$ is not inclusion-maximal, there is some $y' \in H$ such that $y'$ is full for some collection of indecomposable direct factors of $H$, $C_{y'}$ properly contains $C_{f(x)}$, and there is some $z' \in C_{y'}$ that commutes with $y'$ but not with $f(x)$. Spoiler pebbles $y := (f')^{-1}(y') \mapsto y'$. Let $f''$ be the bijection chosen by Duplicator on the next round, then Spoiler pebbles $z := (f'')^{-1}(z') \mapsto z'$. 

Now, by the same argument as above, Duplicator's bijection $f''$ must map $C_y$ to $C_{y'}$ or Spoiler wins with 1 more pebble (Duplicator can reuse the two pebbles on $x$ and $z$ to implement the strategy of \Lem{LemBSCanonicalDecompositionA}) and $O(\log n)$ more rounds. Since $C_{y'} \geq C_{f(x)}$, we must have $C_y \geq C_x$. But then by maximality of $C_x$, we have $C_{y} = C_x$. Therefore $z$ is in $C_x$ while $z'$ does not commute with $f(x)$, so Spoiler wins immediately.
\end{proof}

\begin{corollary} \label{CorApproximateBelow}
Let $G = G_{1} \times \ldots \times G_{d}$ be a decomposition of $G$ into directly indecomposable factors. Let $H$ be arbitrary. Let $\mathcal{F}_{G} \subseteq G$ be the set of elements that are full for exactly one indecomposable direct factor of $G$, 
and define $\mathcal{F}_{H}$ analogously.\footnote{We note that \cite{BrachterSchweitzerWLLibrary} use $\mathcal{F}$ for these, and $\mathcal{F}_G$ for $\bigcup_{g \in \mathcal{F}} N_g = \bigcup_i G_i Z(G)$. We use $\mathcal{F}_G$ as written to indicate which group it is in, and we do not introduce special notation for $\bigcup_{g \in \mathcal{F}} N_g = \bigcup_i G_i Z(G)$.} If Duplicator does not select a bijection $f : G \to H$ satisfying:
\[
f\left( \bigcup_{g \in \mathcal{F}_{G}} N_{g} \right) = \bigcup_{h \in \mathcal{F}_{H}} N_{h},
\]
then Spoiler can win using $5$ pebbles and $O(\log n)$ rounds.
\end{corollary}

\begin{proof}
By \Lem{LemmaPreservesFullness}, we may assume that $f(\mathcal{F}_{G}) = \mathcal{F}_{H}$ (or Spoiler wins with $5$ pebbles and $O(\log n)$ rounds). Now suppose that for some $g \in \mathcal{F}_G$ that there exists an $x \in N_{g}$ such that $f(x) \not \in N_{h}$ for any $h \in \mathcal{F}_{H}$. Spoiler pebbles $x \mapsto f(x)$. Let $f' : G \to H$ be the bijection Duplicator selects at the next round. Again, we may assume that $f(\mathcal{F}_{G}) = \mathcal{F}_{H}$ (or Spoiler wins with 4 additional pebbles and $O(\log n)$ additional rounds; note that although \Lem{LemmaPreservesFullness} uses 5 pebbles, here we only need 4 more as Spoiler can reuse the pebble that was previous on $x \mapsto f(x)$). Spoiler now pebbles $g \mapsto f'(g)$. Now on any subsequent bijection where these two pebbles have not moved, Duplicator cannot map $N_{g} \mapsto N_{f'(g)}$. So by \Lem{LemBSCanonicalDecompositionA}, Spoiler wins with $4$ additional pebbles and $O(\log n)$ rounds.
\end{proof}

\begin{theorem}
Let $k \geq 5$, and let $r := r(n)$. Let $G = G_1 \times \cdots \times G_d$ be a decomposition into indecomposable direct factors. If $G$ and $H$ are not distinguished by $(k, r + O(\log n))$-WL Version II, then there exist indecomposable direct factors $H_{i} \leq H$ such that $H = H_{1} \times \cdots \times H_{d}$; and for all $i \in [d]$, $G_i$ and $H_i$ are not distinguished by $(k-1,r)$-WL Version II. Furthermore, $G$ and $H$ have isomorphic maximal Abelian direct factors, and $(k-1, r)$-WL Version II fails to distinguish $G_{i}Z(G)$ from $H_{i}Z(H)$, for all $i \in [d]$.
\end{theorem}

\begin{proof}
We may assume that $H$ is non-Abelian as well, or Spoiler can win with 2 pebbles in 2 rounds, by pebbling a pair of non-commuting elements of $G$. Let $f : G \to H$ be the bijection that Duplicator selects. By \Cor{CorApproximateBelow}, we may assume that $f(\mathcal{F}_{G}) = \mathcal{F}_{H}$. It follows that $H$ must admit a decomposition $H = H_{1} \times \ldots \times H_{\ell}$, where the $H_{j}$ factors are directly indecomposable and $\bigcup_{h \in \mathcal{F}_H} N_h = \bigcup_{j} H_{j}Z(H) \subseteq H$, which we again note is indistinguishable from $\bigcup_i G_i Z(G)$. Let $X_{G}$ be the non-commuting graph of $G$, and let $X_{H}$ be the non-commuting graph of $H$. Recall from \cite[Proposition~2.1]{ABDOLLAHI2006468} that as $G, H$ are non-Abelian, $X_{G}$ and $X_{H}$ are connected.

As different direct factors centralize each other, we obtain that for each non-singleton connected component $K$ of $X_{G}[\bigcup_i G_i Z(G)]$, there exists a unique indecomposable direct factor $G_{i}$ such that $K = G_{i}Z(G) \setminus Z(G)$. Thus, $G_{i}Z(G) = \langle K \rangle$. Again by \cite[Proposition~2.1]{ABDOLLAHI2006468}, all such non-Abelian direct factors appear in this way. 

We note that the claims in the preceding paragraph applies to $H$ as well. So if $(k,r + O(\log n))$-WL Version II does not distinguish $G$ and $H$, there must exist a bijection between the connected components of $X_{G}[\bigcup_i G_i Z(G)]$ and $X_{H}[\bigcup_i H_i Z(G)]$. Namely, we may assume that $G$ and $H$ admit a decomposition into $\ell = d$ directly indecomposable factors, and that these subgroups are indistinguishable by $(k-1,r)$-WL: for we have a correspondence (after an appropriate reordering of the factors) between $G_{i}Z(H)$ and $H_{i}Z(H)$, where $G_{i}Z(H)$ and $H_{i}Z(H)$ are not distinguished by $(k,r)$-WL. By \Lem{SemiAbelianDirectProduct}, we have that $(k-1,r)$-WL Version II does not distinguish $G_{i}$ from $H_{i}$. By \Thm{IdentifyAbelianDirectFactor}, $G$ and $H$ must have isomorphic maximal Abelian direct factors. So when $G_{i}, H_{i}$ are Abelian, we even have $G_{i} \cong H_{i}$.
\end{proof}

\section{Weisfeiler--Leman for semisimple groups}

In this section, we show that Weisfeiler--Leman can be fruitfully used as a tool to improve the parallel complexity of isomorphism testing of groups with no Abelian normal subgroups, also known as semisimple or Fitting-free groups. The main result of this section is:

\begin{theorem} \label{thm:ListSemisimpleIsomorphisms}
Let $G$ be a semisimple group, and let $H$ be arbitrary. We can test isomorphism between $G$ and $H$ using an $\textsf{SAC}$ circuit of depth $O(\log n)$ and size $n^{\Theta(\log \log n)}$. Furthermore, all such isomorphisms can be listed in this bound.
\end{theorem}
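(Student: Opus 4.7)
The plan is to follow the individualize-and-refine paradigm of Babai, Codenotti, and Qiao~\cite{BCGQ}: enumerate all candidate isomorphisms between the socles of $G$ and $H$ in parallel, and for each candidate use Weisfeiler--Leman to test in $\textsf{L}$ whether the prescribed socle isomorphism extends to the full group. As noted in the remark following \Thm{ThmSemisimpleIntro}, the socle-enumeration step from \cite{BCGQ} already parallelizes easily, so the key improvement lies in carrying out the extension test via WL rather than by permutation-group membership testing.

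First I would use the fact that for a Fitting-free $G$, the socle is characteristic and decomposes as $\Soc(G) = T_1^{m_1} \times \cdots \times T_r^{m_r}$ with the $T_i$ pairwise non-isomorphic non-Abelian simple groups. Since every finite simple group is $2$-generated (the one and only place CFSG enters, and only in the proof of correctness), an isomorphism $\varphi\colon \Soc(G) \to \Soc(H)$ is specified by a matching of isotypic blocks together with the images of two generators for each simple factor. The number of such candidates is $\prod_i m_i!\cdot |\Aut(T_i)|^{m_i}$; using $|T_i|^{m_i} \leq n$, $|\Aut(T_i)| \leq \poly(|T_i|)$, and $m_i \leq \log_{60}(n)$, this product is $n^{O(\log\log n)}$, matching the claimed size. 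Each $\varphi$ is encoded in $O(\log n \cdot \log\log n)$ bits, and the enumeration is realized by an unbounded-fan-in OR at the top of the circuit.

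On each branch, with $\varphi$ fixed, the plan is to exploit the fact that $C_G(\Soc(G)) = 1$ for semisimple $G$: every $g \in G$ is uniquely determined by the inner automorphism it induces on $\Soc(G)$. The extension test then reduces to checking, for each $g \in G$, whether there is a (necessarily unique) $h \in H$ satisfying $h\varphi(s)h^{-1} = \varphi(gsg^{-1})$ for all $s$ in a generating set of $\Soc(G)$, and then verifying that the resulting map $g \mapsto h$ is a bijective homomorphism. I would implement this by coloring $G$ and $H$ so that each $s \in \Soc(G)$ and its image $\varphi(s) \in \Soc(H)$ share a unique color, then running colored $O(1)$-WL Version II on the pair. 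The Brachter--Schweitzer refinement together with Tang's marked-isomorphism subroutine from Section~2.4 lets the check go through in $\textsf{L}$ on each branch.

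The main obstacle will be this last step: carefully arguing that the WL-based extension test detects every non-extension while still fitting in $\textsf{L}$ (in particular, that only $O(1)$ pebbles and $O(\log n)$ rounds are needed after the coloring is installed, so that no $\mathsf{NC}$-level permutation-group machinery is invoked). Once that is in place, combining the quasipolynomial OR at the top of the circuit with the $\textsf{L}$ subroutine on each branch yields an $\textsf{SAC}$ circuit of depth $O(\log n)$ and size $n^{\Theta(\log\log n)}$. Listing isomorphisms follows immediately: every isomorphism $G \to H$ restricts to a unique socle isomorphism and is in turn determined by that restriction via the faithful conjugation action, so on each accepting branch it suffices to output the associated $\varphi$.
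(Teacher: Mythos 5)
Your overall skeleton is the same as the paper's: enumerate the $n^{O(\log\log n)}$ candidate socle isomorphisms with a quasipolynomial-width OR at the top, and on each branch decide in $\textsf{L}$ (hence inside $\textsf{SAC}^1$) whether the prescribed map $\Soc(G)\to\Soc(H)$ extends, with uniqueness of the extension giving the listing claim. The candidate count and the circuit accounting are fine. However, as written the proposal has a genuine gap at exactly the step that constitutes the heart of the paper's proof: you explicitly defer (``the main obstacle'') the argument that the WL-based extension test is both \emph{correct} (detects every non-extension) and \emph{cheap} (constant dimension, $O(1)$ or $O(\log n)$ rounds, so $\textsf{L}$-computable per branch). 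The paper's proof does not just install a coloring and appeal to Brachter--Schweitzer refinement plus Tang's subroutine in a generic way; it individualizes two generators of each simple direct factor consistently with the candidate bijection $\psi$ on $\Fac(\Soc(G))$, observes that two refinement rounds then give every socle element a singleton color class in each group, and uses this to force any Duplicator bijection to agree with the prescribed map on the socle, after which Spoiler pebbles a pair $g, s$ with $f(gsg^{-1})\neq f(g)f(s)f(g)^{-1}$ and wins; hence $(2,4)$-WL Version~II after individualization decides extension, and this is what is $\textsf{L}$-computable. Without that (or an equivalent) argument, your claim that ``the check goes through in $\textsf{L}$'' is an assertion, not a proof. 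Two further omissions: you assume access to the decomposition $\Soc(G)=T_1^{m_1}\times\cdots\times T_r^{m_r}$ but never say how the circuit computes $\Fac(\Soc(G))$ and $\Fac(\Soc(H))$ (the paper needs dedicated $\textsf{L}$ subroutines via normal closures for this), and you never handle the branches where $H$ is not semisimple or $\Soc(H)\not\cong\Soc(G)$ (the paper dispatches these with $O(1)$-dimensional, $O(1)$-round WL before the enumeration).

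One constructive remark: the direct test you sketch via $C_G(\Soc(G))=1$ --- for each $g$, search for the unique $h$ with $h\varphi(s)h^{-1}=\varphi(gsg^{-1})$ for all $s\in\Soc(G)$, then recompute images on the fly to verify bijectivity and multiplicativity --- can itself be implemented in $\textsf{L}$ by brute force over pairs and triples of group elements, and it is essentially the permutational-isomorphism criterion of \Lem{CharacterizeSemisimple}. Had you committed to that route and verified the logspace implementation (including why uniqueness of $h$ is legitimate to invoke, which uses that $H\cong G$ is semisimple whenever an extension exists), you would have a complete per-branch test that bypasses WL entirely; instead the proposal routes through WL and leaves precisely that verification open.
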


The previous best complexity upper bounds were $\mathsf{P}$ for testing isomorphism \cite{BCQ}, and $\mathsf{DTIME}(n^{O(\log \log n)})$ for listing isomorphisms \cite{BCGQ}.

We start with what we can observe from known results about direct products of simple groups. Brachter \& Schweitzer previously showed that $3$-WL Version II identifies direct products of finite simple groups. A closer analysis of their proofs \cite[Lemmas~5.20 \& 5.21]{BrachterSchweitzerWLLibrary} show that only $O(1)$ rounds are required. Thus, we obtain the following. 

\begin{corollary}[{cf. Brachter \& Schweitzer \cite[Lemmas~5.20 \& 5.21]{BrachterSchweitzerWLLibrary}}]
Isomorphism between a direct product of non-Abelian simple groups and an arbitrary group can be decided in $\mathsf{L}$.
\end{corollary}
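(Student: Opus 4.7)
The plan is to combine the two ingredients already cited: the sharpened Brachter--Schweitzer result that $(3, O(1))$-WL Version II identifies any direct product of non-Abelian finite simple groups, together with the parallel implementation analysis of WL Version II given in Section 2.4. Once these are combined, the $\mathsf{L}$ upper bound should essentially fall out.

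First I would describe the algorithm. Given Cayley tables for $G$ (a direct product of non-Abelian simple groups) and an arbitrary candidate $H$ of the same order, we run $(3, O(1))$-WL Version II on each group separately and then compare the two resulting multisets of stable colors on $3$-tuples. Because $G$ is identified by $(3, O(1))$-WL Version II, these multisets agree if and only if $G \cong H$. If $H$ fails to be a direct product of non-Abelian simple groups the colorings must differ; if it happens to be such a product, the identification result guarantees that equality of colorings matches isomorphism.

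Next I would check the parallel/space complexity stage by stage using Section 2.4. The initial $3$-coloring of WL Version II is determined by the marked isomorphism test on subgroups generated by at most three elements, which is in $\mathsf{L}$ by Tang~\cite{TangThesis}; there are only polynomially many $3$-tuples to color, so the initial coloring is in $\mathsf{L}$. Each of the $O(1)$ subsequent refinement rounds is in $\mathsf{TC}^0$ by Grohe--Verbitsky's parallel WL implementation applied to groups, hence in $\mathsf{L}$. Finally, sorting and comparing the two terminal multisets of colors is also $\mathsf{TC}^0 \subseteq \mathsf{L}$.

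The only real point to verify is composition, and this is where I would expect the mildest obstacle to lie, although I do not anticipate a genuine difficulty. We are composing an $\mathsf{L}$ initial coloring stage with a constant number of $\mathsf{TC}^0 \subseteq \mathsf{L}$ refinement stages, and $\mathsf{L}$ is closed under constant-depth composition (equivalently $\mathsf{L}^{\mathsf{L}} = \mathsf{L}$). Because both the WL dimension $k = 3$ and the number of rounds are fixed constants independent of $n$, neither the depth of composition nor the size of intermediate data depends on $n$, and so the whole pipeline remains in $\mathsf{L}$. This is essentially bookkeeping: one must make sure that the output format of each stage (a coloring of $G^3$) is readable within logarithmic workspace by the next stage, which is immediate since colors can be represented by $O(\log n)$-bit names and the refinement rule only inspects polynomially many tuples at a time.
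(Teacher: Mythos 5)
Your proposal is correct and follows essentially the same route as the paper: the corollary is obtained by combining the round-controlled version of Brachter--Schweitzer's Lemmas~5.20--5.21 (that $(3,O(1))$-WL Version II identifies direct products of non-Abelian simple groups) with the Section~2.4 analysis that the Version II initial coloring is $\mathsf{L}$-computable (via Tang's marked isomorphism test) and each of the constantly many refinement/comparison steps is $\mathsf{TC}^0 \subseteq \mathsf{L}$, with $\mathsf{L}$ closed under constant-depth composition. The only cosmetic caveat is that the colorings of the two groups should be computed jointly (as in Grohe--Verbitsky) so the color names are directly comparable, which is the bookkeeping point you already flag.
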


Our parallel machinery also immediately lets us extend a similar result to direct products of \emph{almost} simple groups (a group $G$ is almost simple if there is a non-Abelian simple group $S$ such that $\Inn(S) \leq G \leq \Aut(S)$; equivalently, if $\Soc(G)$ is non-Abelian simple).

\begin{corollary}
Isomorphism between a direct product of almost simple groups and an arbitrary group can be decided in $\mathsf{TC}^{1}$. 
\end{corollary}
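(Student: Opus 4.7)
The plan is to show that every almost simple group is identified by $(O(1), O(1))$-WL Version II, and then invoke our direct product theorem (\Thm{ThmProduct}) together with the Grohe--Verbitsky parallel WL implementation. Since almost simple groups have simple socle, they are directly indecomposable: any nontrivial direct decomposition $G = A \times B$ would give $\Soc(G) = \Soc(A) \times \Soc(B)$, contradicting simplicity of $\Soc(G)$. Hence \Thm{ThmProduct} applies: if each almost simple factor is identified by $(k, O(\log n))$-WL Version II for some $k \geq 5$, then the full direct product is identified by $(k+1, O(\log n))$-WL Version II. Each WL Version II round is $\mathsf{TC}^{0}$-computable by Section~2.4, so $O(\log n)$ rounds yield a $\mathsf{TC}^{1}$ isomorphism test.

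The substantive step is therefore showing that $(O(1), O(1))$-WL Version II identifies an individual almost simple group $G$ with socle $T = \Soc(G)$. Extending the argument of Brachter--Schweitzer for simple groups \cite[Lemmas~5.20--5.21]{BrachterSchweitzerWLLibrary}, one first argues that $T$ is detected setwise as a characteristic subgroup and that its isomorphism type (non-Abelian simple) is pinned down using $O(1)$ pebbles and $O(1)$ rounds. Appealing to CFSG, $T$ is $2$-generated; Spoiler pebbles a $2$-generating set $\{t_1, t_2\}$, which in Version II forces $t_i \mapsto f(t_i)$ to extend to an isomorphism $T \to \Soc(H)$, or Spoiler wins immediately. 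The crucial observation is that for an almost simple group, $C_G(T) = 1$: the centralizer is normal in $G$, intersects $T$ in $Z(T) = 1$, and $T = \Soc(G)$ is the unique minimal normal subgroup, so no nontrivial normal subgroup can avoid $T$. Hence conjugation embeds $G \hookrightarrow \Aut(T)$, and each $g \in G$ is uniquely determined by $(g t_1 g^{-1}, g t_2 g^{-1}) \in T \times T$.

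Combining these pieces, if Duplicator selects a bijection $f$ with $f(g) = h$ but $f(g t_i g^{-1}) \neq h \, f(t_i) \, h^{-1}$ for some $i$, then Spoiler pebbles $g$ and $g t_i g^{-1}$ and wins in $O(1)$ additional rounds via the Version II winning condition, since the pebbled map fails to extend to an isomorphism of the generated subgroups. Otherwise the restriction of $f$ to $G$ agrees with the unique group isomorphism to $H$ extending the socle isomorphism, so $G \cong H$. The main obstacle I anticipate is carefully managing the pebble budget so that Spoiler retains pebbles on $t_1, t_2$ while placing the extra pebbles needed to expose a conjugation inconsistency; reusing pebbles in the manner of Section~4's proofs should keep everything within $O(1)$ pebbles.
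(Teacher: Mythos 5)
Your proposal is correct in substance but takes a genuinely different route from the paper. The paper's proof is a one-liner: almost simple groups are $3$-generated (citing the known result \cite{DVL}), hence identified by $(O(1),O(1))$-WL Version II --- pebbling a generating tuple forces, via the Version II winning condition, a marked isomorphism onto a subgroup of $H$ of full order --- and then \Thm{ThmProduct} plus the Grohe--Verbitsky implementation gives $\mathsf{TC}^1$, exactly as you do in your first paragraph. You instead reprove the identification of an individual almost simple group from the $2$-generation of its simple socle $T$ (CFSG), the fact that $C_G(T)=1$, and the embedding $G \hookrightarrow \Aut(T)$, punishing conjugation-inconsistencies in the pebble game; this is closer in spirit to the semisimple analysis of Section~5 and avoids invoking the $3$-generation of almost simple groups, at the cost of a longer argument. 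Two points in your sketch deserve tightening. First, the win in the conjugation step comes immediately from the Version II generated-subgroup condition once $g$ alone is pebbled: since $g t_i g^{-1} \in \langle t_1,t_2,g\rangle$ and any extension is forced to agree with the fixed socle isomorphism $\varphi$ on $T=\langle t_1,t_2\rangle$, the configuration $(t_1,t_2,g)\mapsto(s_1,s_2,f(g))$ already fails to extend when $f(g)s_if(g)^{-1}\neq\varphi(gt_ig^{-1})$; pebbling $g t_i g^{-1}$ in a later round is neither needed nor reliable, because Duplicator may choose the new bijection to place it consistently (you do first need one punishing move to force $f|_T=\varphi$). Second, your closing sentence presupposes that an isomorphism $G\to H$ extending $\varphi$ exists; what the no-violation condition actually gives is that $f(G)=H$ normalizes $S=\langle s_1,s_2\rangle$ and that the conjugation images of $H$ and of $G$ in $\Aut(S)$ coincide, whence $|G|=|H|$ and $C_G(T)=1$ force $C_H(S)=1$ and $H\cong G$ (this is the counting step behind \Lem{CharacterizeSemisimple}); with that spelled out, your argument goes through with $O(1)$ pebbles and $O(1)$ rounds, and the reduction to $\mathsf{TC}^1$ is the same as the paper's.
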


\begin{proof}
Because almost simple groups are 3-generated \cite{DVL}, they are identified by $(O(1), O(1))$-WL. By \Thm{ThmProduct}, direct products of almost simple groups are thus identified by $(O(1), O(\log n))$-WL. 
\end{proof}

\subsection{Preliminaries}
We recall some facts about semisimple groups from \cite{BCGQ}. As a semisimple group $G$ has no Abelian normal subgroups, we have that $\Soc(G)$ is the direct product of non-Abelian simple groups.  The conjugation action of $G$ on $\Soc(G)$ permutes the direct factors of $\Soc(G)$. So there exists a faithful permutation representation $\alpha : G \to G^{*} \leq \Aut(\Soc(G))$. $G$ is determined by $\Soc(G)$ and the action $\alpha$. Let $H$ be a semisimple group with the associated action $\beta : H \to \text{Aut}(\Soc(H))$. We have that $G \cong H$ precisely if $\Soc(G) \cong \Soc(H)$ via an isomorphism that makes $\alpha$ equivalent to $\beta$. 

We now introduce the notion of permutational isomorphism, which is our notion of equivalence for $\alpha$ and $\beta$. Let $A$ and $B$ be finite sets, and let $\pi : A \to B$ be a bijection. For $\sigma \in \text{Sym}(A)$, let $\sigma^{\pi} \in \text{Sym}(B)$ be defined by $\sigma^{\pi} := \pi^{-1}\sigma \pi$. For a set $\Sigma \subseteq \text{Sym}(A)$, denote $\Sigma^{\pi} := \{ \sigma^{\pi} : \sigma \in \Sigma\}$. Let $K \leq \text{Sym}(A)$ and $L \leq \text{Sym}(B)$ be permutation groups. A bijection $\pi : A \to B$ is a \textit{permutational isomorphism} $K \to L$ if $K^{\pi} = L$.

The following lemma, applied with $R = \Soc(G)$ and $S = \Soc(H)$, precisely characterizes semisimple groups \cite{BCGQ}.      
 
\begin{lemma}[{\cite[Lemma 3.1]{BCGQ}}] \label{CharacterizeSemisimple}
Let $G$ and $H$ be groups. Let $R \unlhd G$ and $S \unlhd H$, such that $R, S$ have trivial centralizers. Let $\alpha : G \to G^{*} \leq \Aut(R)$ and $\beta : H \to H^{*} \leq \Aut(S)$ be faithful permutation representations of $G$ and $H$ via the conjugation action on $R$ and $S$, respectively. Let $f : R \to S$ be an isomorphism. Then $f$ extends to an isomorphism $\hat{f} : G \to H$ if and only if $f$ is a permutational isomorphism between $G^{*}$ and $H^{*}$; and if so, $\hat{f} = \alpha f^{*} \beta^{-1}$, where $f^{*} :  G^{*} \to H^{*}$ is the isomorphism induced by $f$.
\end{lemma}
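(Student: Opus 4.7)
The plan is to prove both directions by exploiting the fact that conjugation by an element of $R$ corresponds to the inner automorphism of $R$ it induces inside $\Aut(R)$, and analogously for $S$. Before starting, I would note that the trivial-centralizer hypothesis $C_G(R) = 1$ is exactly what makes $\alpha$ injective (so $G \cong G^{*}$), and it also forces $Z(R) \leq R \cap C_G(R) = 1$, so that $R$ embeds into $\Aut(R)$ via its inner automorphisms. The symmetric statement holds for $S$ and $H$. Throughout, $f^{*}$ denotes the isomorphism $\Aut(R) \to \Aut(S)$ given by $\sigma \mapsto f \sigma f^{-1}$.

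For the forward direction, suppose $\hat{f} : G \to H$ is an isomorphism extending $f$. For any $g \in G$ and $r \in R$, normality of $R$ gives $grg^{-1} \in R$, so on one hand $\hat{f}(grg^{-1}) = f(\alpha(g)(r))$, and on the other $\hat{f}(grg^{-1}) = \hat{f}(g) f(r) \hat{f}(g)^{-1} = \beta(\hat{f}(g))(f(r))$. Equating and rearranging yields $f \circ \alpha(g) \circ f^{-1} = \beta(\hat{f}(g))$, i.e.\ $f^{*}(\alpha(g)) = \beta(\hat{f}(g))$. This places $f^{*}(\alpha(g)) \in H^{*}$ for every $g \in G$, and the symmetric argument applied to $\hat{f}^{-1}$ together with $(f^{-1})^{*} = (f^{*})^{-1}$ forces $f^{*}(G^{*}) = H^{*}$, so $f$ is a permutational isomorphism. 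The same identity gives the explicit formula $\hat{f}(g) = \beta^{-1}(f^{*}(\alpha(g)))$.

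For the backward direction, assume $f^{*}(G^{*}) = H^{*}$ and define $\hat{f} := \beta^{-1} \circ f^{*} \circ \alpha$ as a composition $G \to G^{*} \to H^{*} \to H$; it is a group isomorphism as a composition of isomorphisms (here we use that $\alpha$ is an iso onto $G^{*}$ and $\beta$ is an iso onto $H^{*}$, which is the content of the trivial-centralizer assumption). It remains to check that $\hat{f}$ restricts to $f$ on $R$. The key calculation is that $f^{*}$ carries the inner automorphism of $R$ induced by $r$ to the inner automorphism of $S$ induced by $f(r)$: for every $y \in S$,
\[
f^{*}(\alpha(r))(y) \;=\; f\bigl(r\, f^{-1}(y)\, r^{-1}\bigr) \;=\; f(r)\, y\, f(r)^{-1} \;=\; \beta(f(r))(y).
\]
Hence $\hat{f}(r) = \beta^{-1}(\beta(f(r))) = f(r)$, as required.

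I do not expect a significant obstacle: the only substantive content is the identification of inner automorphisms under $f^{*}$, and the rest is a diagram chase. The main points requiring care are (i) invoking the trivial-centralizer hypothesis to justify that $\alpha$ and $\beta$ are injective (so that the inverse $\beta^{-1}$ is defined on $H^{*}$), and (ii) keeping the composition order in $\hat{f} = \alpha f^{*} \beta^{-1}$ straight, since the formula is to be read as the left-to-right pipeline $G \xrightarrow{\alpha} G^{*} \xrightarrow{f^{*}} H^{*} \xrightarrow{\beta^{-1}} H$.
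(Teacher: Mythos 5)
Your proof is correct: the forward direction's identity $f^{*}(\alpha(g)) = \beta(\hat{f}(g))$ (which also yields the formula and uniqueness of $\hat{f}$) and the backward direction's computation that $f^{*}$ carries $\alpha(r)$ to $\beta(f(r))$ are exactly the substance of the statement, and you invoke the trivial-centralizer hypothesis precisely where it is needed, namely to invert $\alpha$ and $\beta$ on their images. Note that the paper itself gives no proof of this lemma---it is quoted from \cite[Lemma 3.1]{BCGQ}---and your argument is essentially the standard one from that source, so there is nothing to reconcile.
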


We also need the following standard group-theoretic lemmas. The first provides a key condition for identifying whether a non-Abelian simple group belongs in the socle. Namely, if $S_{1} \cong S_{2}$ are non-Abelian simple groups where $S_{1}$ is in the socle and $S_{2}$ is not in the socle, then the normal closures of $S_{1}$ and $S_{2}$ are non-isomorphic. In particular, the normal closure of $S_{1}$ is a direct product of non-Abelian simple groups, while the normal closure of $S_{2}$ is not a direct product of non-Abelian simple groups. We will apply this condition later when $S_{1}$ is a simple direct factor of $\Soc(G)$; in which case, the normal closure of $S_{1}$ is of the form $S_{1}^{k}$. We include the proofs of these two lemmas for completeness.

\begin{lemma} \label{LemmaSocle}
Let $G$ be a finite semisimple group. A subgroup $S \leq G$ is contained in $\Soc(G)$ if and only if the normal closure of $S$ is a direct product of nonabelian simple groups.
\end{lemma}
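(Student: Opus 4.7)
The plan is a standard structural argument using the fact that, since $G$ has no Abelian normal subgroups, $\Soc(G)$ decomposes as a direct product $T_1 \times \cdots \times T_k$ of non-Abelian simple groups, together with the observation that non-Abelian simple groups are perfect.

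For the forward direction, assume $S \leq \Soc(G)$. Since $\Soc(G) \triangleleft G$, the normal closure $N := \ncl_G(S)$ is contained in $\Soc(G)$ and normal in $G$. I would show $N = \prod_{i \in I} T_i$ for some $I \subseteq [k]$ in two steps. First, let $\pi_i \colon \Soc(G) \to T_i$ be the $i$-th projection. Since $T_i \leq G$ normalizes $N$, and since conjugation by $T_i$ commutes with every $T_j$ for $j \neq i$, the subgroup $\pi_i(N) \leq T_i$ is closed under $T_i$-conjugation, hence is normal in $T_i$; by simplicity, $\pi_i(N) \in \{1, T_i\}$. Let $I = \{i : \pi_i(N) = T_i\}$, so $N \leq \prod_{i \in I} T_i$. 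Second, I claim $N \cap T_i = T_i$ for each $i \in I$: given $t \in T_i$, pick $n \in N$ with $\pi_i(n) = t$; for any $s \in T_i$, the commutator $[n,s]$ lies in $N$ (as $N \triangleleft G$) and in $T_i$ (as $n$ agrees with the identity outside coordinate $i$ up to things that commute with $s$), so $[t,s] \in N \cap T_i$. As $t$ ranges over $T_i$ and $s$ ranges over $T_i$, these commutators generate $[T_i, T_i] = T_i$ since non-Abelian simple groups are perfect. Combining, $N = \prod_{i \in I} T_i$, which is a direct product of non-Abelian simple groups.

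For the backward direction, suppose $N := \ncl_G(S) = U_1 \times \cdots \times U_m$ is a direct product of non-Abelian simple groups. The $U_j$ are precisely the minimal normal subgroups of $N$, so since conjugation by $G$ acts on $N$ by automorphisms, $G$ permutes the set $\{U_1, \ldots, U_m\}$. For each $G$-orbit $O$, the product $M_O := \prod_{U_j \in O} U_j$ is normal in $G$; moreover $M_O$ is a minimal normal subgroup of $G$, because any nontrivial $G$-normal subgroup $M \leq M_O$ must, by the same projection-and-perfectness argument as in the forward direction applied inside $M_O$, equal a product of a $G$-invariant subset of $O$, and $G$ acts transitively on $O$. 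Hence $N$ is a product of minimal normal subgroups of $G$, so $N \leq \Soc(G)$, and in particular $S \leq \Soc(G)$.

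The main obstacle is step two of the forward direction: ruling out the possibility that $N$ is a proper ``subdirect'' subgroup of $\prod_{i \in I} T_i$ (for example, a diagonal embedding when two $T_i$'s are isomorphic). The essential ingredient that closes this gap is that each $T_i$ is perfect, which forces the commutators $[N, T_i]$ generated inside a single coordinate to exhaust $T_i$. Both directions are otherwise routine manipulations of normal closures and projections onto simple direct factors.
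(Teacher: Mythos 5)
Your proof is correct, but it takes a somewhat different route than the paper's. In the forward direction, the paper simply invokes the structural fact that a normal subgroup of $G$ contained in $\Soc(G)$ is a direct product of minimal normal subgroups (each a product of non-Abelian simple groups), whereas you prove the underlying fact from scratch: projecting $N = \ncl_G(S)$ onto the simple factors $T_i$ of $\Soc(G)$, using simplicity to force each projection to be trivial or all of $T_i$, and then using perfectness of $T_i$ (via the commutator identity $[n,s]=[\pi_i(n),s] \in N \cap T_i$) to rule out subdirect/diagonal configurations and conclude $N = \prod_{i \in I} T_i$. In the backward direction, the paper argues by induction on $|N|$: if $N$ is not minimal normal it splits as $L \times M$ with $M \trianglelefteq G$, and the uniqueness of the direct complement of $M$ in a product of non-Abelian simple groups forces $L \trianglelefteq G$, so both halves land in the socle inductively. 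You instead argue directly: $G$ permutes the simple factors $U_j$ of $N$ (the minimal normal subgroups of $N$), each orbit product $M_O$ is normal in $G$, and the same projection-and-perfectness argument plus transitivity of $G$ on the orbit shows $M_O$ is \emph{minimal} normal in $G$, so $N = \prod_O M_O \leq \Soc(G)$. Your version is more self-contained (it does not rely on the unproved complement-uniqueness assertion or the structural fact about normal subgroups inside the socle) and yields slightly more information, namely an explicit identification of $N$ with a product of specific minimal normal subgroups of $G$; the paper's induction is shorter given those standard facts and avoids fixing a decomposition of $\Soc(G)$ at the outset. The one place where you lean on an unproved (but standard) claim is that the $U_j$ are exactly the minimal normal subgroups of $N$; this follows from the same projection-and-perfectness argument you already use, so the gap is only expository.
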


\begin{proof}
Let $N$ be the normal closure of $S$. Since the socle is normal in $G$ and $N$ is the smallest normal subgroup containing $S$, we have that $S$ is contained in $\Soc(G)$ if and only if $N$ is. 

Suppose first that $S$ is contained in the socle. Since $\Soc(G)$ is normal and contains $S$, by the definition of $N$ we have that $N \leq \Soc(G)$. As $N$ is a normal subgroup of $G$, contained in $\Soc(G)$, it is a direct product of minimal normal subgroups of $G$, each of which is a direct product of non-Abelian simple groups.

Conversely, suppose $N$ is a direct product of nonabelian simple groups. We proceed by induction on the size of $N$. If $N$ is minimal normal in $G$, then $N$ is contained in the socle by definition. If $N$ is not minimal normal, then it contains a proper subgroup $M \lneq N$ such that $M$ is normal in $G$, hence also $M \unlhd N$. However, as $N$ is a direct product of nonabelian simple groups $T_1, \dotsc, T_k$, the only subgroups of $N$ that are normal in $N$ are direct products of subsets of $\{T_1, \dotsc, T_k\}$, and all such normal subgroups have direct complements. Thus we may write $N = L \times M$ where both $L,M$ are nontrivial, hence strictly smaller than $N$, and both $L$ and $M$ are direct product of nonabelian simple groups. 

We now argue that $L$ must also be normal in $G$. Since conjugating $N$ by $g \in G$ is an automorphism of $N$, we have that $N = g L g^{-1} \times g M g^{-1}$. Since $M$ is normal in $G$, the second factor here is just $M$, so we have $N = gLg^{-1} \times M$. But since the direct complement of $M$ in $N$ is unique (since $N$ is a direct product of \emph{non-Abelian} simple groups), we must have $g L g^{-1} = L$. Thus $L$ is normal in $G$. 

By induction, both $L$ and $M$ are contained in $\Soc(G)$, and thus so is $N$. We conclude since $S \leq N$.
\end{proof}

\begin{corollary} \label{CorSocleFactor}
Let $G$ be a finite semisimple group. A nonabelian simple subgroup $S \leq G$ is a direct factor of $\Soc(G)$ if and only if its normal closure $N = ncl_G(S)$ is isomorphic to $S^k$ for some $k \geq 1$ and $S \unlhd N$.
\end{corollary}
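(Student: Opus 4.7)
The plan is to combine Lemma~\ref{LemmaSocle} with the standard structural fact that every normal subgroup of a direct product $T_1 \times \cdots \times T_k$ of non-Abelian simple groups is itself the internal direct product of a subset of $\{T_1, \ldots, T_k\}$. Both directions of the equivalence will be short once this structure fact is in hand.

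For the forward direction, suppose $\Soc(G) = S \times T$. Since $G$ acts by conjugation on $\Soc(G)$ and this action permutes the non-Abelian simple direct factors, every $G$-conjugate $gSg^{-1}$ is again a simple direct factor of $\Soc(G)$ isomorphic to $S$, and distinct conjugates centralize each other and intersect trivially. Hence the subgroup generated by all $G$-conjugates of $S$ is their internal direct product, which is isomorphic to $S^k$ where $k$ is the size of the $G$-orbit of $S$. Since $S$ itself is one of these direct factors of $N$, we have $S \trianglelefteq N$.

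For the reverse direction, assume $N = \ncl_G(S) \cong S^k$ with $S \trianglelefteq N$. Because $N$ is a direct product of non-Abelian simple groups and $S \leq N$, Lemma~\ref{LemmaSocle} yields $S \leq N \leq \Soc(G)$. Now $N$ is normal in $G$ and contained in $\Soc(G)$, hence normal in $\Soc(G)$; by the structural fact above (applied to $\Soc(G)$), $N$ is the internal direct product of some subset $\{S_1, \ldots, S_\ell\}$ of the simple direct factors of $\Soc(G)$. The isomorphism $N \cong S^k$ forces each $S_i \cong S$ (and $\ell = k$). Finally, since $S$ is a simple normal subgroup of $N = S_1 \times \cdots \times S_k$, applying the structural fact one more time (this time inside $N$) forces $S$ to coincide with one of the $S_i$, so $S$ is a direct factor of $\Soc(G)$.

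The only mildly subtle step is the last one: knowing $S \trianglelefteq N$ and $N \cong S^k$ abstractly is not by itself enough to conclude that $S$ is one of the factors of the \emph{internal} decomposition of $N$ coming from $\Soc(G)$. But this is exactly where the structural fact about normal subgroups of direct products of non-Abelian simple groups (already used implicitly in the proof of Lemma~\ref{LemmaSocle}) does the work, since a simple normal subgroup of such a direct product must be one of the simple factors. No CFSG is required here; only the elementary fact that the only normal subgroups of a direct product of non-Abelian simple groups are the sub-products.
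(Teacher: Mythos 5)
Your proposal is correct and follows essentially the same route as the paper's proof: the reverse direction uses Lemma~\ref{LemmaSocle} to place $S$ (hence $N$) inside $\Soc(G)$ and then the fact that normal subgroups of a direct product of non-Abelian simple groups are sub-products of the factors, exactly as in the paper. Your forward direction phrases the argument via the conjugation action permuting the simple factors of $\Soc(G)$ rather than directly citing the sub-product fact for the normal closure, but this is only a cosmetic difference.
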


\begin{proof}
Let $S$ be a nonabelian simple subgroup of $G$. If $S$ is a direct factor of $\Soc(G)$, then $\Soc(G) = S^k \times T$ for some $k \geq 1$ and some $T$; choose $T$ such that $k$ is maximal. Then the normal closure of $S$ is a minimal normal subgroup of $\Soc(G)$ which contains $S$ as a normal subgroup. Since the normal subgroups of a direct product of nonabelian simple groups are precisely direct products of subsets of the factors, the normal closure of $S$ is some $S^{k'}$ for $1 \leq k' \leq k$.

Conversely, suppose the normal closure $N$ of $S$ is isomorphic to $S^k$ for some $k \geq 1$ and $S \unlhd N$. By \Lem{LemmaSocle}, $S$ is in $\Soc(G)$, and thus so is $N$ (being the normal closure of a subgroup of the socle). Furthermore, as a normal subgroup of $G$ contained in $\Soc(G)$, $N$ is a direct product of minimal normal subgroups and a direct factor of $\Soc(G)$ (in fact it is minimal normal itself, but we haven't established that yet, nor will we need to). Since $S$ is a normal subgroup of $N$, and $N$ is a direct product of non-Abelian simple groups, $S$ is a direct factor of $N$. Since $N$ is a direct factor of $\Soc(G)$, and $S$ is a direct factor of $N$, $S$ is a direct factor of $\Soc(G)$. This completes the proof.
\end{proof}

\begin{lemma} \label{LemmaDirectProdSimple}
Let $S_1, \dotsc, S_k \leq G$ be nonabelian simple subgroups such that for all distinct $i,j \in [k]$ we have $[S_i, S_j] = 1$. Then $\langle S_1, \dotsc, S_k \rangle = S_1 S_2 \dotsb S_k = S_1 \times \dotsb \times S_k$.
\end{lemma}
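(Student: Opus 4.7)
The plan is to build the direct product isomorphism explicitly as the natural multiplication map. Define
\[
\varphi : S_1 \times S_2 \times \cdots \times S_k \longrightarrow G, \qquad \varphi(s_1,\ldots,s_k) = s_1 s_2 \cdots s_k.
\]
The hypothesis that $[S_i,S_j]=1$ for all $i\neq j$ is exactly what is needed to check that $\varphi$ is a group homomorphism, since one can freely reorder factors coming from different $S_i$'s when multiplying two such products together.

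Next I would read off the image: by definition $\operatorname{Im}(\varphi) = S_1 S_2 \cdots S_k$ (as a set of products). Using pairwise commutativity once more, this set is closed under products and under inverses ($(s_1\cdots s_k)^{-1} = s_1^{-1}\cdots s_k^{-1}$), so it is a subgroup of $G$. Since it clearly contains each $S_i$ and is contained in $\langle S_1,\ldots,S_k\rangle$, we get the first equality $\langle S_1,\ldots,S_k\rangle = S_1 S_2 \cdots S_k$.

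The main step is to show $\varphi$ is injective, which will give the internal direct product structure. The kernel $K$ is a normal subgroup of $S_1 \times \cdots \times S_k$, a direct product of non-Abelian simple groups. A standard structural fact, which I would invoke (and can be proved by projecting $K$ onto each factor and using that each $S_i$ is simple with trivial center, so each projection is either trivial or surjective, and then using commutator arguments to rule out mixed ``diagonal'' contributions), says that every normal subgroup of such a direct product is itself of the form $\prod_{i\in I} S_i$ for some $I \subseteq [k]$. If $i\in I$, then $(1,\ldots,1,s,1,\ldots,1)\in K$ for every $s\in S_i$, so $\varphi$ would send it to $s=1$, forcing $S_i$ to be trivial, contrary to $S_i$ being non-Abelian simple. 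Hence $I=\emptyset$, $K=\{1\}$, and $\varphi$ is injective, establishing $S_1 S_2 \cdots S_k \cong S_1 \times \cdots \times S_k$ as claimed.

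The only mildly delicate point, which I would view as the main obstacle to writing out cleanly, is the invocation about normal subgroups of direct products of non-Abelian simple groups. An alternative, purely elementary route that sidesteps this and may be preferable for a self-contained write-up is to show directly that $S_i \cap \bigl(S_1 \cdots S_{i-1} S_{i+1} \cdots S_k\bigr) = \{1\}$ for each $i$: the right-hand factor is pointwise centralized by $S_i$, so the intersection lies in $Z(S_i)$, which is trivial since $S_i$ is non-Abelian simple. Combined with pairwise commutativity and the product equality already established, this is the standard internal-direct-product criterion and concludes the proof.
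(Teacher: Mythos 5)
Your proof is correct, but it is organized differently from the paper's. The paper proceeds by induction on $k$: assuming $T := S_1\dotsb S_{k-1} = S_1\times\dotsb\times S_{k-1}$, it notes $[T,S_k]=1$, so $TS_k=\langle T,S_k\rangle$ is a central product of $T$ and $S_k$, and since $Z(T)=Z(S_k)=1$ the central product is direct. You instead work globally with the multiplication map $\varphi\colon S_1\times\dotsb\times S_k\to G$, which is a homomorphism by pairwise commutation, and reduce everything to triviality of $\ker\varphi$. Your first route, invoking the classification of normal subgroups of a direct product of non-Abelian simple groups, is heavier machinery than the lemma needs (and your parenthetical sketch of that fact is a bit loose --- the clean argument is that a nontrivial projection $\pi_i(K)=S_i$ forces $[S_i,K]\leq K\cap S_i$ to generate all of $S_i=[S_i,S_i]$), but it is a legitimate standard fact that this paper itself states and uses in the proofs of Lemmas~\ref{LemmaSocle} and \ref{CorSocleFactor}, so there is no circularity. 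Your alternative elementary route --- $S_i\cap\bigl(S_1\dotsb S_{i-1}S_{i+1}\dotsb S_k\bigr)\leq Z(S_i)=1$, then the internal direct product criterion --- is essentially the paper's central-triviality idea done all at once rather than one factor at a time, and is the preferable self-contained write-up; either version establishes the lemma.
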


\begin{proof}
By induction on $k$. The base case $k=1$ is vacuously true. Suppose $k \geq 2$ and that the result holds for $k-1$. Then $T := S_1 S_2 \dotsb S_{k-1} = S_1 \times \dotsb \times S_{k-1}$. Now, since $S_k$ commutes with each $S_i$, and they generate $T$, we have that $[S_k, T] = 1$. Hence $T$ is contained in the normalizer (or even the centralizer) of $S_k$, so $TS_k = S_kT = \langle T, S_k \rangle$, and $S_k$ and $T$ are normal subgroups of $TS_k$. As $TS_k = \langle T,S_k \rangle$ and $T,S_k$ are both normal subgroups of $TS_k$ with $[T, S_k] = 1$, we have that $TS_k$ is a central product of $T$ and $S_k$. As $Z(T) = Z(S_k)=1$, it is their direct product.
\end{proof}

\subsection{Groups without Abelian normal subgroups in parallel}
Here we establish \Thm{thm:ListSemisimpleIsomorphisms}. We begin with the following.

\begin{proposition} \label{IdentifySemisimple}
Let $G$ be a semisimple group of order $n$, and let $H$ be an arbitrary group of order $n$. If $H$ is not semisimple, then Spoiler can win the Version II pebble game with at most $2$ pebbles and $2$ rounds.
\end{proposition}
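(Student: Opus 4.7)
The plan is to exhibit a winning strategy for Spoiler in the Version II pebble game using $3$ rounds, with at most two pebbles on the board at the decisive moment (so even $(2,3)$-WL suffices, giving the result a fortiori).

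The group-theoretic invariant I would key on is: a group $K$ is semisimple iff every $1 \neq k \in K$ has non-Abelian normal closure $\ncl_K(k)$. The $\Rightarrow$ direction is because $\ncl_K(k)$ is a nontrivial normal subgroup of $K$, which must be non-Abelian if $K$ has no Abelian normal subgroups. The $\Leftarrow$ direction is because any nontrivial Abelian normal subgroup contains elements whose normal closure is Abelian. Since $H$ is not semisimple, it has some nontrivial Abelian normal subgroup $A$, and any $h \in A \setminus \{1\}$ has $\ncl_H(h) \leq A$ Abelian. I will also use the observation that $\ncl_G(g)$ is Abelian iff $g$ commutes with every conjugate $g^x$: if $[g,g^x]=1$ for all $x$, then for any $a,b$, $[g^a, g^b] = (\,[g, g^{a^{-1}b}]\,)^a = 1$, so $\ncl_G(g)$ is generated by pairwise-commuting elements.

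\textbf{Spoiler's strategy.} Fix $h \in A \setminus \{1\}$ in $H$.
\begin{itemize}
\item Round 1: Spoiler picks up pebble pair $1$; Duplicator chooses a bijection $f_0$; Spoiler places the pebble on $(g_1, h)$, where $g_1 := f_0^{-1}(h)$. Note $g_1 \neq 1$ since $h \neq 1$ (WL Version II identifies the identity immediately, so we may assume Duplicator matches identities).
\item Round 2: Spoiler picks up pebble pair $2$; Duplicator chooses $f_1$. Since $G$ is semisimple and $g_1 \neq 1$, $\ncl_G(g_1)$ is non-Abelian, so by the observation above there exists $g_2 \in G$ with $[g_1, g_1^{g_2}] \neq 1$. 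Spoiler places the pebble on $(g_2, f_1(g_2))$.
\item Round 3: Spoiler picks up the unused pebble pair $3$ to trigger the winning-condition check. With pebbles on $(g_1,h)$ and $(g_2, f_1(g_2))$, the element $[g_1, g_1^{g_2}] \in \langle g_1, g_2\rangle$ is nontrivial, but the image under any extension of the pebbled map is $[h, h^{f_1(g_2)}]$, which equals $1$ in $H$ because both $h$ and $h^{f_1(g_2)}$ lie in the Abelian subgroup $A$. Thus the pebbled map does not extend to an isomorphism of $\langle g_1,g_2\rangle$ and $\langle h, f_1(g_2)\rangle$, so Spoiler wins.
\end{itemize}

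The main conceptual step is isolating ``Abelianness of the normal closure'' as the right distinguishing invariant; once that is in hand, the translation into a bounded pebble game is short because the witness is a single commutator of length two in the pebbled elements, requiring only two pebbles on the board and one conjugation to detect. No subtlety with Duplicator's freedom arises, because the failure of extension to an isomorphism uses only the two placed pebbles and holds regardless of $f_0$ and $f_1$: $f_0$ merely determines which preimage $g_1$ of $h$ Spoiler pebbles (and any nonidentity preimage works), while $f_1$ is only used to read off $f_1(g_2)$, whose conjugate of $h$ is automatically in $A$.
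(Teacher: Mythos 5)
Your proposal is correct and is essentially the paper's own argument: pull a nontrivial element of an Abelian normal subgroup of $H$ back through Duplicator's bijection, and use that its normal closure in the semisimple group $G$ is non-Abelian to produce non-commuting conjugates, which the Version II winning condition on the pebbled elements detects since the corresponding conjugates in $H$ lie in the Abelian normal subgroup and commute. The only difference is cosmetic—you witness the failure with $[g_1,g_1^{g_2}]\neq 1$ using two pebbles, while the paper pebbles $b, g_1, g_2$ and compares the conjugates $g_1bg_1^{-1}$ and $g_2bg_2^{-1}$ with three pebbles—so your pebble count is, if anything, slightly tighter.
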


\begin{proof}
Recall that a group is semisimple if and only if it contains no Abelian normal subgroups. As $H$ is not semisimple, $\Soc(H) = A \times T$, where $A$ is the direct product of elementary Abelian groups and $T$ is a direct product of non-Abelian simple groups. We show that Spoiler can win using at most $2$ pebbles and $2$ rounds. Let $f : G \to H$ be the bijection that Duplicator selects. Let $a \in A$. So $\text{ncl}_{H}(a) \leq A$. Let $b := f^{-1}(a) \in G$, and let $B := \text{ncl}_{G}(b)$. As $G$ is semisimple, we have that $B$ is not Abelian. Spoiler begins by pebbling $b \mapsto a$.

So there exist $g \in G$ such that $b$ and $g b g^{-1}$ do not commute (for $B$ is generated by $\{g b g^{-1} : g \in G\}$, and if they all commuted with $b$ then $b$ would be in $Z(B)$, but $Z(B)$ is characteristic in $B$ hence normal in $G$, hence $\text{ncl}_G(B) \leq Z(B)$ and $B$ would be Abelian). Let $f' : G \to H$ be the bijection that Duplicator selects at the next round. Spoiler pebbles $g \mapsto f'(g)$. As $\ncl(a) = A$ is Abelian, $f'(g)f(b)f'(g)^{-1}$ and $f(b)$ commute. Spoiler now wins.
\end{proof}
 
We now apply Lemma \ref{LemmaSocle} to show that Duplicator must map the direct factors of $\Soc(G)$ to isomorphic direct factors of $\Soc(H)$.

\begin{lemma} \label{LemmaProdSimple}
Let $G,H$ be finite semisimple groups of order $n$. Let $\Fac(\Soc(G))$ denote the set of simple direct factors of $\Soc(G)$. Let $S \in \Fac(\Soc(G))$ be a non-Abelian simple group. Let $a \in S$, and let $f : G \to H$ be the bijection that Duplicator selects. 
\begin{enumerate}[label=(\roman*)]
\item If $f(a)$ does not belong to some element of $\Fac(\Soc(H))$, or 
\item If there exists some $T \in \Fac(\Soc(H))$ such that $f(a) \in T$, but $S \not \cong T$,
\end{enumerate}

then Spoiler wins the Version II pebble game with at most $O(1)$ pebbles and $O(1)$ rounds.
\end{lemma}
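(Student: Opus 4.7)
Spoiler's opening move at round $1$ is to pebble $a \mapsto f(a)$. We may assume $a \neq 1$, since otherwise $f(a)$ must equal $1$ (else Spoiler wins immediately by the order mismatch $|a|=1 \neq |f(a)|$), and the hypotheses (a) and (b) are vacuous. The strategy for the remaining four rounds and three pebbles exploits the fact that $a$ lies in a single non-Abelian simple direct factor $S$ of $\Soc(G)$, whereas in case (a) $f(a)$ sits in no single simple direct factor of $\Soc(H)$, and in case (b) $f(a)$ sits in such a factor $T \not\cong S$.

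The key tools are \Cor{CorSocleFactor}, \Lem{LemmaSocle}, and the CFSG consequence that every finite non-Abelian simple group is $2$-generated. Choose $b_1, b_2 \in S$ with $\langle a, b_1, b_2 \rangle = S$. At round $2$, Spoiler pebbles $b_1 \mapsto f_1(b_1)$, and at round $3$, Spoiler pebbles $b_2 \mapsto f_2(b_2)$. For the WL Version II initial coloring not to already distinguish the configuration, Duplicator must choose $y := f_1(b_1)$ and $z := f_2(b_2)$ so that the correspondence $(a, b_1, b_2) \mapsto (f(a), y, z)$ extends to an isomorphism $S \to K$, where $K := \langle f(a), y, z \rangle \leq H$. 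In case (b), since $T \unlhd H$, we have $K \cap T \unlhd K$; as $K \cong S$ is simple and $f(a) \in K \cap T$ is nontrivial, we deduce $K \leq T$. Thus Duplicator is forced to place an embedded copy of $S$ inside $T$ containing $f(a)$.

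For case (b), Spoiler uses rounds $4$--$5$ to exploit that $\ncl_G(a)$ is a direct product of copies of $S$ while $\ncl_H(f(a))$ is a direct product of copies of $T$, which by Remak--Krull--Schmidt are non-isomorphic when $S \not\cong T$. Concretely, Spoiler pebbles a conjugator $g \in G$ chosen so that $gag^{-1}$ lands in a $G$-conjugate of $S$, or more directly an element whose order appears in $S$ but not in $T$ (or vice versa), exposing a multiplication relation realized in $\ncl_G(a) \cong S^k$ but not in $\ncl_H(f(a)) \cong T^{k'}$. For case (a), split on whether $f(a) \in \Soc(H)$. If $f(a) \notin \Soc(H)$, then by \Lem{LemmaSocle}, $\ncl_H(f(a))$ is not a direct product of non-Abelian simple groups; Spoiler pebbles a witnessing conjugate to expose the failure of perfectness or the appearance of an Abelian quotient. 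If $f(a) \in \Soc(H)$ but spans multiple factors, then Spoiler instead chooses $b_1$ at round $2$ to be an element of a distinct simple factor $S' \neq S$ of $\Soc(G)$ with $[a, b_1] = 1$; the image $f_1(b_1)$ must commute with $f(a)$, but the multi-factor decomposition of $f(a)$ restricts the available centralizers sharply, and rounds $3$--$5$ produce a forced non-commuting configuration.

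\textbf{Main obstacle.} The main obstacle is fitting all subcases into only $4$ pebbles and $5$ rounds. In case (b), the observation that $K \leq T$ does not yet yield a contradiction (since $T$ can contain proper simple subgroups isomorphic to $S$, as with $A_5 \leq A_6$), so the argument must finish via a Remak--Krull--Schmidt invariant realized within a single further pebble placement. The subcase of (a) in which $f(a)$ spans multiple simple factors of $\Soc(H)$ is the most delicate, since Duplicator has wide freedom in choosing bijections; the argument relies crucially on the fact that every element of $\Soc(G) \setminus S$ centralizes $a$, a rigid constraint that cannot be matched by any $f(a)$ with multi-factor support.
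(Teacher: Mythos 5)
Your opening (pebble $a \mapsto f(a)$, then two elements generating $S$, so that Duplicator is forced to produce a subgroup $K \leq H$ with $K \cong S$ and $f(a) \in K$, or lose immediately) coincides with the paper's first step, but the two places where the work actually happens are not carried out, and one step is wrong. First, your case (b) deduction $K \leq T$ rests on the claim $T \unlhd H$: a simple direct factor of $\Soc(H)$ is normal in $\Soc(H)$ but is in general only permuted with the other factors by conjugation in $H$, so $K \cap T \unlhd K$ does not follow (and $K$ need not even lie in $\Soc(H)$, since Duplicator may realize the copy of $S$ anywhere in $H$ subject only to the marked-isomorphism condition). Second, and more importantly, your proposed finishing moves are not legal winning configurations in the pebble game. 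Pebbling ``an element whose order appears in $S$ but not in $T$'' does not create a violation, because Duplicator can match orders using elements of $H$ outside $T$; the winning condition concerns whether the map on the pebbled tuple extends to an isomorphism of the \emph{generated} subgroups, and neither ``failure of perfectness of $\ncl_H(f(a))$'' nor a Remak--Krull--Schmidt comparison of $\ncl_G(a)$ with $\ncl_H(f(a))$ is a property of a $4$-element pebbled configuration. Your subcase of (a) where $f(a) \in \Soc(H)$ with multi-factor support additionally assumes a second simple factor $S' \neq S$ of $\Soc(G)$ exists (false when $G$ is almost simple), and the ``sharply restricted centralizers'' assertion is unsubstantiated---the centralizer of an element with support in several factors is large.

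The missing idea is how to convert the forced copy $K = \langle f'(x), f'(y)\rangle \cong S$ with $f(a) \in K$ into a contradiction using the \emph{pebbled subgroup} rather than invariants of normal closures or element orders. The paper argues: if $K$ is normal in $\Soc(H)$ then $K \in \Fac(\Soc(H))$ with $K \cong S$ and $f(a) \in K$, contradicting both hypotheses (a) and (b); so either $K \not\leq \Soc(H)$, in which case $\ncl_H(K)$ is not a direct product of non-Abelian simple groups (\Lem{LemmaSocle}), and by \Lem{LemmaDirectProdSimple} there is a single conjugator $g$ with $[gSg^{-1}, S] = 1$ in $G$ but $[f'(g)Kf'(g)^{-1}, K] \neq 1$ in $H$, so Spoiler re-uses the pebble on $a$ to pebble $g \mapsto f'(g)$ and the tuple $(x,y,g)$ generates $S \times gSg^{-1} \cong S \times S$ while its image does not ($3$ pebbles, $4$ rounds); or $K \leq \Soc(H)$ but is not normal in it, in which case some $Q = \langle q_1, q_2\rangle \in \Fac(\Soc(H))$ fails to normalize $K$, Spoiler pebbles the preimages of $q_1, q_2$ (which must land in $\Soc(G)$ by the previous case and hence normalize $S$, since $S \unlhd \Soc(G)$), giving a $4$-pebble, $5$-round win. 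Your sketch contains neither of these two concrete moves, nor any verification of the pebble/round budget, so as written the proof does not go through.
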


\begin{proof}
Spoiler begins by pebbling $a \mapsto f(a)$. At the next two rounds, Spoiler pebbles generators $x, y$ for $S$. Let $f' : G \to H$ be the bijection Duplicator selects at the next round. Denote $T := \langle f'(x), f'(y) \rangle$. We note that if $T \not \cong S$ or $f(a) \not \in T$, then Spoiler wins. 

So suppose that $f(a) \in T$ and $T \cong S$. We have two cases.
\begin{itemize}
\item \textbf{Case 1:} Suppose first that $T$ does not belong to $\Soc(H)$. As $S \unlhd \text{Soc}(G)$, the normal closure $\text{ncl}(S)$ is minimal normal in $G$ \cite[Exercise 2.A.7]{Isaacs2008FiniteGT}. As $T$ is not even contained in $\text{Soc}(H)$, we have by Lemma \ref{LemmaSocle} that $\text{ncl}(T)$ is not a direct product of non-Abelian simple groups, so $\text{ncl}(S) \not\cong \text{ncl}(T)$. We note that $\text{ncl}(S) = \langle \{ gSg^{-1} : g \in G \} \rangle$.

As $\text{ncl}(T)$ is not isomorphic to a direct power of $S$, there is some conjugate $g S g^{-1} \neq S$ such that $f'(g)Tf'(g)^{-1}$ does not commute with $T$, by Lemma~\ref{LemmaDirectProdSimple}. Yet since $S \unlhd \Soc(G)$, $gSg^{-1}$ and $S$ do commute. Spoiler moves the pebble pair from $a \mapsto f(a)$ and pebbles $g$ with $f'(g)$. Since Spoiler has now pebbled $x,y,g$ which generate $\langle S, gSg^{-1} \rangle = S \times gSg^{-1} \cong S \times S$ but the image is not isomorphic to $S \times S$, the map $(x,y,g) \mapsto (f'(x), f'(y), f'(g))$ does not extend to an isomorphism of $S \times g S g^{-1}$. Spoiler now wins. In total, Spoiler used $O(1)$ pebbles and $O(1)$ rounds.

\item \textbf{Case 2:} Suppose instead that $T \leq \Soc(H)$, but that $T$ is not normal in $\Soc(H)$. As $T$ is not normal in $\Soc(H)$, there exists $Q = \langle q_{1}, q_{2} \rangle \in \Fac(\Soc(H))$ such that $Q$ does not normalize $T$. At the next two rounds, Spoiler pebbles $q_{1}, q_{2}$, and their respective preimages, which we label $r_{1}, r_{2}$. When pebbling $r_{1} \mapsto q_{1}$, we may assume that Spoiler moves the pebble placed on $a \mapsto f(a)$. By Case 1, we may assume that $r_{1}, r_{2} \in \Soc(G)$, or Spoiler wins with an additional $1$ pebble and $1$ round. Now as $S \trianglelefteq \Soc(G)$, $\langle r_{1}, r_{2}\rangle$ normalizes $S$. However, $Q$ does not normalize $T$. So the pebbled map $(x, y, r_{1}, r_{2}) \mapsto (f'(x), f'(y), q_{1}, q_{2})$ does not extend to an isomorphism. Thus, Spoiler used $O(1)$ pebbles and $O(1)$ rounds.  \qedhere
\end{itemize}
\end{proof}

\begin{lemma} \label{IdentifyDirectFactors}
Let $G$ be a semisimple group. There is a logspace algorithm that decides, given $g_{1}, g_{2} \in G$, whether $\langle g_{1}, g_{2} \rangle \in \Fac(\Soc(G))$.
\end{lemma}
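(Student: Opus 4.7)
The plan is to invoke Corollary~\ref{CorSocleFactor}: for $S := \langle g_1, g_2 \rangle$, we have that $S$ is a simple direct factor of $\Soc(G)$ iff $S$ is nonabelian simple, the normal closure $N := \ncl_G(S)$ is isomorphic to $S^k$ for some $k \geq 1$, and $S \unlhd N$. I would verify each of these conditions in logspace, exploiting the fact that $G$ is presented by its Cayley table.

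Since $G$ is given by its Cayley table, standard arguments (via Fact~\ref{FactWordLength} and undirected reachability in the Cayley graph of a constant-generated subgroup, which lies in $\textsf{L}$ by Reingold) allow us to enumerate the elements of any constant-generated subgroup and to test subgroup membership in logspace. With these primitives, I would first check that $S$ is nonabelian simple: verify $[g_1, g_2] \neq 1$, and for each non-identity $s \in S$ confirm that $\ncl_S(s) = \langle\{x s x^{-1} : x \in S\}\rangle$ equals $S$. This is a polynomial number of logspace subgroup-membership tests, hence logspace overall.

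Next I would reformulate the remaining two conditions ``$N \cong S^k$ and $S \unlhd N$'' as a single pairwise-commutation test on $G$-conjugates of $S$. By Lemma~\ref{LemmaDirectProdSimple}, if the pairwise distinct conjugates $\{g S g^{-1} : g \in G\}$ commute elementwise, then their join is their internal direct product; being $G$-invariant, containing $S$, and being generated by $G$-conjugates of $S$, it must equal $N$. Thus $N \cong S^k$ and $S$ appears as one of the direct factors, so $S \unlhd N$. Conversely, if $N = S_1 \times \cdots \times S_k$ with each $S_j \cong S$ nonabelian simple and $S \unlhd N$, then every $G$-conjugate of $S$ is a simple normal subgroup of $N$; in a direct product of isomorphic nonabelian simple groups the simple normal subgroups are exactly the factors $S_j$, so distinct $G$-conjugates of $S$ coincide with distinct factors and therefore commute.

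Given this equivalence, the algorithm iterates over all pairs $(g, h) \in G \times G$ and, for each, checks either that $h^{-1} g \in N_G(S)$ (equivalently $g S g^{-1} = h S h^{-1}$, a logspace subgroup-membership test) or that each of the four commutators $[g g_i g^{-1},\, h g_j h^{-1}]$ with $i, j \in \{1,2\}$ equals the identity; acceptance requires that every pair passes and that the preliminary checks on $S$ succeed. Each individual test is a bounded number of Cayley-table lookups plus a logspace subroutine, and the outer iteration is over polynomially many pairs, so the whole procedure runs in $\textsf{L}$. The main point requiring care is the group-theoretic equivalence above, which rests on the elementary classification of normal subgroups of a direct product of isomorphic nonabelian simple groups; once that is established, the remainder is routine Cayley-table bookkeeping.
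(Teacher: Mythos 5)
Your proposal is correct and follows essentially the same route as the paper: verify simplicity of $S=\langle g_1,g_2\rangle$ via normal closures of its elements, then apply Corollary~\ref{CorSocleFactor} by reducing the condition $\ncl_G(S)\cong S^k$ with $S\unlhd \ncl_G(S)$ to equality-or-commutation checks among conjugates of $S$, all implemented with logspace membership tests on the Cayley table. The only (immaterial) differences are that you test commutation between all pairs of conjugates $gSg^{-1},hSh^{-1}$, while the paper tests only $S$ against each conjugate (plus trivial intersection), and that you justify the membership subroutines via Reingold-style reachability rather than the cited membership-testing results.
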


\begin{proof}
Using a membership test \cite{BarringtonMcKenzie, TangThesis}, we may enumerate the elements of $S := \langle g_{1}, g_{2} \rangle$ by a logspace transducer. We first check whether $S$ is simple. For each $g \in S$, we check whether $\ncl_{S}(g) = S$. This check is $\textsf{L}$-computable \cite[Thm.~7.3.3]{VijayaraghavanThesis}. 

It remains to check whether $S \in \Fac(\Soc(G))$. By \Cor{CorSocleFactor}, $S \in \Fac(\Soc(G))$ if and only if $N := \ncl_{G}(S) = S^{k}$ for some $k$ and $S \unlhd N$. As $S$ is simple, it suffices to check that each conjugate of $S$ is either (1) equal to $S$ or (2) intersects trivially with $S$ and commutes with $S$. For a given $g \in G$ and each $h \in S$, we may check whether $h \in gSg^{-1}$. If there exist non-trivial $h_{1}, h_{2} \in S$ such that $h_{1} \in gSg^{-1}$ and $h_{2} \not \in gSg^{-1}$, we return that $S \not \in \Fac(\Soc(G))$. Otherwise, we know that all conjugates of $S$ are either equal to $S$ or intersect $S$ trivially. Next we check that those conjugates that intersect $S$ trivially commute with $S$. For each $g \in G, h_1, h_2 \in S$ we check whether $g h_1 g^{-1} \in S$; if not, we check that $[gh_1 g^{-1}, h_2]=1$. If not, then we return that $S \not \in \Fac(\Soc(G))$. If all these tests pass, then $S$ is a direct factor of the socle. For both of these procedures, we only need to iterate over 3- and 4-tuples of elements of $G$ or $S$, so this entire procedure is $\textsf{L}$-computable. The result follows.
\end{proof}

\begin{lemma} \label{LemmaAC1Factors}
Let $G$ be a semisimple group. We can compute the direct factors of $\Soc(G)$ using a  logspace transducer.
\end{lemma}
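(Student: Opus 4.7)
The plan is to enumerate the simple direct factors of $\Soc(G)$ by searching over generating pairs. The key inputs are: (i) the Classification of Finite Simple Groups implies every non-Abelian finite simple group is 2-generated, so every $S \in \Fac(\Soc(G))$ arises as $\langle g_1, g_2 \rangle$ for some $(g_1, g_2) \in G^2$; and (ii) \Lem{IdentifyDirectFactors}, which tests membership of $\langle g_1, g_2 \rangle$ in $\Fac(\Soc(G))$ in logspace.

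First, the transducer would iterate over all pairs $(g_1, g_2) \in G^2$ using $O(\log n)$ counter bits. For each pair, it invokes \Lem{IdentifyDirectFactors} to decide whether $\langle g_1, g_2 \rangle$ is a simple direct factor of the socle. The main subtle step is deduplication: without it, each factor would be listed once per generating pair, of which there are many. To sidestep this, the transducer outputs a pair $(g_1, g_2)$ only when it is the lexicographically smallest pair generating its subgroup. To certify this, it loops over all lexicographically smaller pairs $(g_1', g_2')$ and uses the logspace subgroup-membership test \cite{BarringtonMcKenzie, TangThesis} to check whether $\langle g_1', g_2' \rangle = \langle g_1, g_2 \rangle$ (by verifying that each of $g_1, g_2$ lies in $\langle g_1', g_2' \rangle$ and vice versa). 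If no smaller pair generates the same subgroup, the transducer emits the direct factor, either implicitly by writing the generating pair, or explicitly by iterating over $g \in G$ and printing those satisfying the logspace membership test for $\langle g_1, g_2 \rangle$.

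Each subroutine is logspace-computable, and the nested loops only require $O(\log n)$ counter bits, so the overall composition is again a logspace transducer. I expect the only point needing a little care is the canonical-pair convention used for deduplication; once that is in place, the rest is a straightforward composition of \Lem{IdentifyDirectFactors} with standard logspace membership testing. I also note that CFSG enters only through the bound on the number of generators of a simple group; the algorithm itself is purely combinatorial, which is consistent with the CFSG-usage pattern emphasized elsewhere in the paper.
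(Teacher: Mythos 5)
Your proposal is correct and follows essentially the same route as the paper: enumerate pairs $(g_1,g_2) \in G^2$, test each with \Lem{IdentifyDirectFactors}, and use the logspace membership test of \cite{BarringtonMcKenzie, TangThesis} to write out the elements of each factor, with 2-generation of simple groups (via CFSG) guaranteeing completeness. Your explicit lexicographic deduplication is a minor refinement of a detail the paper leaves implicit, not a different approach.
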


\begin{proof}
Using \Lem{IdentifyDirectFactors}, we may identify in $\textsf{L}$ the ordered pairs that generate direct factors of $\Soc(G)$. Now for $x \in G$ and a pair $(g_{1}, g_{2})$ that generates a direct factor of $\Soc(G)$, define an indicator $Y(x, g_{1}, g_{2}) = 1$ if and only if $x \in \langle g_{1}, g_{2} \rangle$. We may use a membership test \cite{BarringtonMcKenzie, TangThesis} to decide in $\textsf{L}$ whether $x \in \langle g_{1}, g_{2} \rangle$. Thus, we are able to write down the direct factors of $\Soc(G)$ and their elements in $\textsf{L}$.
\end{proof}

We now prove \Thm{thm:ListSemisimpleIsomorphisms}.

\begin{proof}[Proof of \Thm{thm:ListSemisimpleIsomorphisms}]
By Prop.~\ref{IdentifySemisimple}, we may assume that both groups are semisimple. We now note that, by \Lem{LemmaProdSimple}, if $\Soc(G) \not \cong \Soc(H)$, then $(O(1), O(1))$-WL Version II will distinguish $G$ from $H$. For in this case, there is some simple normal factor $S \in \Fac(\Soc(G))$ such that there are more copies of $S$ in $\Fac(\Soc(G))$ than in $\Fac(\Soc(H))$. Thus under any bijection Duplicator selects, some element of $S$ must get mapped into a simple direct factor of $\Soc(H)$ that is not isomorphic to $S$, and thus by \Lem{LemmaProdSimple}, Spoiler can win with $O(1)$ pebbles and $O(1)$ rounds.

So suppose $\Soc(G) \cong \Soc(H)$. By \Lem{LemmaAC1Factors}, in $\textsf{L}$ we may enumerate the non-Abelian simple direct factors of $\Soc(G)$ and $\Soc(H)$. Furthermore, we may decide in $\textsf{L}$---and therefore, $\textsf{SAC}^{1}$---with a membership test \cite{BarringtonMcKenzie, TangThesis} whether two non-Abelian simple direct factors of the socle are conjugate. Thus, in $\textsf{SAC}^{1}$, we may compute a decomposition $\Soc(G)$ and $\Soc(H)$ of the form $T_{1}^{t_{1}} \times \cdots \times T_{k}^{t_{k}}$, where each $T_{i}$ is non-Abelian simple and each $T_{i}^{t_{i}}$ is minimal normal.

There are at most $|S|^{2}$ automorphisms of each simple factor $|S|$, and so there are at most
\[
n^{2} k! \prod_{i=1}^{k} t_{i}!
\]
isomorphisms between $\Soc(G)$ and $\Soc(H)$ that could extend to isomorphisms $G\cong H$. From \cite{BCGQ}, we note that this quantity is bounded by $n^{O(\log \log n)}$. (This is bound is tight, as in the case of the groups $A_{5}^{k}$.) 

We now turn to testing isomorphism of $G$ and $H$. To do so, we use the individualize and refine strategy. We individualize in $G$ arbitrary generators for each element of $\Fac(\Soc(G))$ (2 for each factor). In parallel, we try each of the $\leq k! \prod_{i=1}^k t_i! \leq n^{O(\log \log n)}$ possible bijections $\psi \colon \Fac(\Soc(G)) \to \Fac(\Soc(H))$ (this is the one place responsible for the quasi-polynomial, rather than polynomial, size of our resulting circuits).
Then for each configuration of generators for the elements of $\Fac(\Soc(H))$, we individualize those in such a way that respects $\psi$. Precisely, if $\psi(S) = T$ and $(g_{1}, g_{2})$ are individualized in $S$, then for the desired generators $(h_{1}, h_{2})$ of $T$, we individualize $h_{i}$ to receive the same color as $g_{i}$. Note that, although we are individualizing up to $2 \log |G|$ elements here, we are not choosing them from all possible $\binom{|G|}{2 \log |G|}$ choices (which would be worse than the trivial upper bound!); the algorithm only considers at most $\prod_{S \in \Fac(\Soc(H))} \binom{|S|}{2} \leq O(|G|^2)$ many choices for which tuples to individualize.

Observe that in two more rounds, no two elements of $\Soc(G)$ have the same color. Similarly, in two more rounds, no two elements of $\Soc(H)$ have the same color. However, an element of $\Soc(G)$ and an element of $\Soc(H)$ may share the same color.

Suppose now that $G \not \cong H$. We now show that constant-dimensional WL Version II coloring starting from the coloring above distinguishes $G$ from $H$, using the Spoiler--Duplicator game.
Let $f : G \to H$ be the bijection that Duplicator selects. As $G \not \cong H$, there exists $g \in G$ and $s \in \Soc(G)$ such that $f(gsg^{-1}) \neq f(g)f(s)f(g^{-1})$. Spoiler pebbles $g$. Let $f' : G \to H$ be the bijection Duplicator selects at the next round. As no two elements of $\Soc(G)$ have the same color and no two elements of $\Soc(H)$ have the same color, we have that $f'(s) = f(s)$. Spoiler pebbles $s$ and wins. So after the individualization step, $(2,4)$-WL Version II will decide whether the given map extends to an isomorphism of $G \cong H$. Now $(2,4)$-WL Version II is $\textsf{L}$-computable, and so $\textsf{SAC}^{1}$ computable. As we have to test at most $n^{O(\log \log n)}$ isomorphisms of $\Soc(G) \cong \Soc(H)$, our circuit has size $n^{O(\log \log n)}$. The result now follows.
\end{proof}

\begin{remark}
We also note that there is at most one such way of extending the given isomorphism between $\Soc(G)$ and $\Soc(H)$ to that of $G$ and $H$ \cite[Lemma~3.1]{BCGQ}. So in particular, after individualizing the generators for the non-Abelian simple direct factors of the socles, from the last paragraph in the proof we see that WL will assign a unique color to each element of the group.
\end{remark}

We also obtain the following corollary, which improves upon \cite[Corollary~4.4]{BCGQ} in the direction of parallel complexity.

\begin{corollary}
Let $G$ and $H$ be semisimple with $\Soc(G) \cong \Soc(H)$. If $\Soc(G) \cong \Soc(H)$ have $O(\log n / \log \log n)$ non-Abelian simple direct factors, then we can decide isomorphism between $G$ and $H$ in $\textsf{L}$, and list all the isomorphisms between $G$ and $H$ in $\textsf{FL}$.
\end{corollary}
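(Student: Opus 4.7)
The plan is to re-examine the algorithm underlying \Thm{thm:ListSemisimpleIsomorphisms} and observe that the \emph{only} obstacle to placing it in $\mathsf{FL}$ was the number of bijections $\psi : \Fac(\Soc(G)) \to \Fac(\Soc(H))$ that must be enumerated. That count was bounded by $k!\prod_{i=1}^{k}t_i!$, where $\Soc(G) \cong T_1^{t_1} \times \dotsb \times T_k^{t_k}$ is the decomposition into pairwise non-isomorphic non-Abelian simple factors with multiplicities $t_i$. Every other ingredient---computing $\Fac(\Soc(G))$ and $\Fac(\Soc(H))$ via \Lem{LemmaAC1Factors}, enumerating the $\leq n^2$ choices of generator pairs for the socle factors, individualizing those generators, and running the $(2,4)$-WL Version II refinement (which by Section~2.4 is $\mathsf{L}$-computable on the individualized instance)---is already $\mathsf{FL}$-computable.

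Under the hypothesis $m := \sum_i t_i = O(\log n / \log \log n)$, I would apply Stirling's estimate to bound the number of relevant bijections:
\[
k! \prod_{i=1}^{k} t_i! \;\leq\; m! \;=\; 2^{O(m \log m)} \;=\; 2^{O(\log n)} \;=\; n^{O(1)}.
\]
Combined with the $O(n^2)$ generator-pair configurations on the $H$-side, the full algorithm now has only $\poly(n)$ branches to consider, rather than the $n^{O(\log \log n)}$ branches needed in the general case.

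The final step is to explain how a log-space transducer marshals these branches. Each branch consists of fixing a bijection $\psi$ together with a choice of individualized generators---encodable in $O(\log n)$ bits---after which one runs a single $\mathsf{L}$-subroutine to evaluate $(2,4)$-WL Version II. Since composing $\mathsf{FL}$ with a $\poly(n)$-sized enumeration of $\mathsf{L}$-computations remains in $\mathsf{FL}$, the machine iterates through all branches in sequence, outputting each branch whose refinement certifies an isomorphism (for the listing problem) or accepting if at least one does (for the decision problem). The only subtle point I would want to check carefully is that the per-branch computation is genuinely $\mathsf{L}$-computable rather than implicitly relying on something stronger; a careful read of the proof of \Thm{thm:ListSemisimpleIsomorphisms}, together with \Lem{IdentifyDirectFactors} and \Lem{LemmaAC1Factors}, confirms this, so the bound $\mathsf{FL}$ follows.
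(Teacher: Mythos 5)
Your proposal is correct and is essentially the paper's (implicit) argument: the corollary is stated as an immediate consequence of \Thm{thm:ListSemisimpleIsomorphisms}, whose only superpolynomial cost was the $k!\prod_i t_i!$ many socle-factor bijections, and under the hypothesis this is at most $m! = 2^{O(m\log m)} = n^{O(1)}$ with $m = O(\log n/\log\log n)$, while every other step (computing $\Fac(\Soc(\cdot))$, the $O(n^2)$ generator-image choices, and the individualize-and-refine $(2,4)$-WL Version II check) was already shown to be $\mathsf{L}$-computable, so a logspace transducer can cycle through the $\poly(n)$ branches, each described by $O(\log n)$ bits. The one point worth stating explicitly is the bound $k!\prod_i t_i! \leq 2^{O(m\log m)}$ (e.g., via $k \leq m$ and $t_i \leq m$), which is exactly the computation inherited from Babai et al.
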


\begin{proof}[Proof (Sketch).]
We proceed identically as in the proof of \Thm{thm:ListSemisimpleIsomorphisms}. As there are $O(\log n / \log \log n)$ non-Abelian simple factors of $\Soc(G) \cong \Soc(H)$, there are only $\poly(n)$ isomorphisms between $\Soc(G)$ and $\Soc(H)$ that could extend to isomorphisms between $G$ and $H$ \cite{BCGQ}. As these isomorphisms can be checked in parallel, we may thus enumerate these isomorphisms using a logspace transducer. The result now follows.
\end{proof}

\section{Count-free Weisfeiler--Leman} \label{CountFreeWL}
In this section, we examine consequences for parallel complexity of the \emph{count-free} WL algorithm. Our first main result here is to show a $\Omega(\log |G|)$ lower bound (optimal and maximal, up to the constant factor) on count-free WL-dimension for identifying Abelian groups (\Thm{CountFreeAbelian}). Despite this result showing that count-free WL on its own is not useful for testing isomorphism of Abelian groups, we nonetheless use count-free WL for Abelian groups, in combination with a few other ideas, to get improved upper bounds on the parallel complexity of testing isomorphism (\Thm{ThmAbelian}) of Abelian groups. 

We begin by defining analogous pebble games and logics for count-free WL versions I-II. Furthermore, we establish the equivalence of the three count-free WL versions up to $O(\log n)$ rounds. These results extend \cite[Section~3]{WLGroups} to the count-free setting.

\subsection{Equivalence between count-free WL, pebble games, and logics}
We define analogous pebble games for count-free WL Versions I-II. The count-free $(k+1)$-pebble game consists of two players: Spoiler and Duplicator, as well as $(k+1)$ pebble pairs $(p, p^{\prime})$. In Versions I and II, Spoiler wishes to show that the two groups $G$ and $H$ are not isomorphic. Duplicator wishes to show that the two groups are isomorphic. Each round of the game proceeds as follows.
\begin{enumerate}
\item Spoiler picks up a pebble pair $(p_{i}, p_{i}^{\prime})$.
\item The winning condition is checked. This will be formalized later.
\item Spoiler places one of the pebbles on some group element (either $p_{i}$ on some element of $G$ or $p_{i}'$ on some element of $H$). 
\item Duplicator places the other pebble on some element of the other group.
\end{enumerate}

Let $v_{1}, \ldots, v_{m}$ be the pebbled elements of $G$ at the end of step 1, and let $v_{1}^{\prime}, \ldots, v_{m}^{\prime}$ be the corresponding pebbled elements  of $H$. Spoiler wins precisely if the map $v_{\ell} \mapsto v_{\ell}^{\prime}$ does not extend to a marked equivalence in the appropriate version of WL. Duplicator wins otherwise. Spoiler wins, by definition, before the start of round $0$ if $G$ and $H$ do not have the same number of elements. We will show, for $J \in \{I, II\}$, that $G$ and $H$ are not distinguished by the first $r$ rounds of the count-free $k$-WL version $J$ if and only if Duplicator wins the first $r$ rounds of the Version $J$ $(k+1)$-pebble game. 

\begin{lemma}
Suppose $|G|=|H|$, and let $\overline{g} := (g_{1}, \ldots, g_{k}) \in G^{k}$ and $\overline{h} := (h_{1}, \ldots, h_{k}) \in H^{k}$. If the count-free $(k,r)$-WL distinguishes $\overline{g}$ and $\overline{h}$, then Spoiler can win in the count-free $(k+1)$-pebble game within $r$ moves on the initial configuration $(\overline{g}, \overline{h})$. (We use the same version of WL and the pebble game).
\end{lemma}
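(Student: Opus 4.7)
The plan is to proceed by induction on $r$, paralleling the standard equivalence proof for counting WL but exploiting the fact that in the count-free refinement only the \emph{set}, rather than the multiset, of neighbor colors is compared. This asymmetry is exactly what will allow Spoiler to win by committing to a single witnessing element, which fits naturally with the count-free pebble game, where Spoiler (rather than Duplicator via a bijection) chooses the side on which to place a pebble.

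For the base case $r = 1$, the count-free $(k,1)$-WL coloring is just the initial coloring. If it distinguishes $\overline{g}$ from $\overline{h}$, then by the definition of the initial coloring in Version $J \in \{I, II, III\}$, the map $g_\ell \mapsto h_\ell$ fails to extend to the marked equivalence specified by the corresponding winning condition. Hence when Spoiler picks up any pebble pair at round $1$ and the winning condition is checked, Spoiler has already won, using $0 \leq 1$ moves.

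For the inductive step, I would assume the claim for $r$ and suppose that count-free $(k, r+1)$-WL distinguishes $\overline{g}$ from $\overline{h}$. If $(k,r)$-WL already distinguishes them, the inductive hypothesis applies directly. Otherwise $\chi_r(\overline{g}) = \chi_r(\overline{h})$, but the count-free refinement produces a difference: there exists $i \in [k]$ such that
\[
\{ \chi_r(\overline{g}(g_i/x)) : x \in G \} \neq \{ \chi_r(\overline{h}(h_i/y)) : y \in H \}.
\]
Without loss of generality, fix a color $c$ lying in the $G$-set but not the $H$-set, and choose $x \in G$ with $\chi_r(\overline{g}(g_i/x)) = c$. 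Spoiler then picks up pebble pair $(p_i, p_i')$ and places $p_i$ on $x$. Whichever $y \in H$ Duplicator plays, we have $\chi_r(\overline{h}(h_i/y)) \neq c$, so count-free $(k, r)$-WL distinguishes the resulting configurations. By the inductive hypothesis, Spoiler wins within $r$ additional moves, for a total of at most $r+1$.

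No substantive obstacle is anticipated. The argument is uniform across Versions I, II, and III, since only the base case depends on the version (via the corresponding notion of marked equivalence used in the winning condition). The set-versus-multiset feature of count-free refinement is precisely what lets Spoiler commit to a single witnessing element without having to juggle multiplicities, which in turn aligns with the count-free game's rule that Spoiler chooses a single placement rather than Duplicator offering a bijection.
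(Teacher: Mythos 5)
Your base case and the overall induction scheme are fine, and you are right that the only version-dependence enters through the notion of marked equivalence in the initial coloring / winning condition. The gap is in the inductive step, where you mis-state what the count-free refinement compares. For every version, one refinement round attaches to $\overline{g}$ the \emph{set of $k$-tuples of colors} $\{(\chi_r(\overline{g}(g_1/x)), \ldots, \chi_r(\overline{g}(g_k/x))) : x \in G\}$, i.e.\ each $x$ contributes its whole tuple across all $k$ coordinates simultaneously; it does not compare, coordinate by coordinate, the sets $\{\chi_r(\overline{g}(g_i/x)) : x \in G\}$. From ``$(k,r+1)$-WL distinguishes $\overline{g},\overline{h}$ but $(k,r)$-WL does not'' you therefore cannot conclude that some single coordinate $i$ has differing color sets: the sets of tuples $\{(a,b),(c,d)\}$ and $\{(a,d),(c,b)\}$ differ while both coordinate projections agree. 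So the pair $(i,c)$ you fix -- a coordinate $i$ and a color $c$ occurring on the $G$-side but on no $y \in H$ in coordinate $i$ -- need not exist at all, and the move ``place $p_i$ on $x$'' has no justification; this is exactly the step where count-free WL's strength is encoded, and the proof as written does not use the actual refinement data.

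What does follow from the definition is that (w.l.o.g.) there is an $x \in G$ whose full color tuple $(\chi_r(\overline{g}(g_1/x)), \ldots, \chi_r(\overline{g}(g_k/x)))$ does not occur as the tuple of any $y \in H$; this is the witness the paper's proof works with. Since the coordinate in which the tuples differ depends on Duplicator's reply, Spoiler should not commit to a coordinate in advance: Spoiler places the spare $(k+1)$-st pebble on $x$. Whatever $y$ Duplicator answers, there is some $j$ (depending on $y$) with $\chi_r(\overline{g}(g_j/x)) \neq \chi_r(\overline{h}(h_j/y))$, so the $k$-sub-configuration obtained by disregarding the pebble pair on $g_j \mapsto h_j$ is distinguished by count-free $(k,r)$-WL; Spoiler recurses from there, using that pair as the spare pebble for the next step. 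The extra pebbled pair can only help Spoiler, since any failure of the sub-map to be a marked equivalence is inherited by any superset of it. This costs one move per refinement level, giving the claimed bound of $r$ moves with $k+1$ pebbles. With this replacement of your inductive step the argument goes through and coincides in substance with the paper's proof.
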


\begin{proof}

\noindent
\begin{itemize}
\item \textbf{Version I:} For $r = 0$, then $\overline{g}$ and $\overline{h}$ differ with respect to the Version I marked equivalence type. Fix $r > 0$. Suppose that $\chi_{r}(\overline{g}) \neq \chi_{r}(\overline{h})$. We have two cases. Suppose first that $\chi_{r-1}(\overline{g}) \neq \chi_{r-1}(\overline{h})$. Then by the inductive hypothesis, Spoiler can win in the $(k+1)$-pebble game using at most $r-1$ moves. 

Suppose instead that $\chi_{r-1}(\overline{g}) = \chi_{r-1}(\overline{h})$. So without loss of generality, there exists an $x \in G$ such that the color configuration $(\chi_{r-1}(\overline{g}(g_{1}/x)), \ldots, \chi_{r-1}(\overline{g}(g_{k}/x))$ does not appear amongst the colored $k$-tuples of $H$. That is, for all $y \in H$, there exists some $j \in [k]$ such that $\chi_{r-1}(\overline{g}(g_{j}/x)) \neq \chi_{r-1}(\overline{h}(h_{j}/y))$. Spoiler places an unused pebble on $x$. Duplicator responds by placing the corresponding pebble on some $y \in H$. Let $j$ be such that $\chi_{r-1}(\overline{g}(g_{j}/x)) \neq \chi_{r-1}(\overline{h}(h_{j}/y))$. Then at the next round, Spoiler removes the pebble on $p_j$ and reuses that. By the inductive hypothesis, Spoiler wins with $r-1$ additional moves. 


\item \textbf{Version II:} We modify the Version I argument above to use the Version II marked equivalence type. Otherwise, the argument is identical. \qedhere
\end{itemize}
\end{proof}

We now prove the converse.

\begin{lemma}
Suppose $|G|=|H|$, and let $\overline{g} := (g_{1}, \ldots, g_{k}) \in G^{k}$ and $\overline{h} := (h_{1}, \ldots, h_{k}) \in H^{k}$. Suppose that Spoiler can win in the count-free $(k+1)$-pebble game within $r$ moves on the initial configuration $(\overline{g}, \overline{h})$. Then the count-free $(k,r)$-WL distinguishes $\overline{g}$ and $\overline{h}$. (We use the same version of WL and the pebble game).
\end{lemma}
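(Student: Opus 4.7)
The plan is to prove the converse by induction on the number of moves $r$, closely mirroring the structure of the forward direction already given. The three versions share the same inductive argument; only the base case changes, depending on which marked equivalence (Versions I and II) or induced-subgraph isomorphism condition (Version III) defines $\chi_0$.

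For the base case $r=0$, Spoiler's strategy wins without placing any additional pebble, so the initial configuration already violates the winning condition. In Version I this means the map $g_i \mapsto h_i$ fails to respect the multiplication relation; in Version II the map fails to extend to an isomorphism of the generated subgroups; and in Version III the pebbled map fails to induce a subgraph isomorphism in $\Gamma_G, \Gamma_H$. In all three cases, the initial coloring $\chi_0$ is defined precisely so that this failure gives $\chi_0(\overline{g}) \neq \chi_0(\overline{h})$.

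For the inductive step with $r \geq 1$, suppose Spoiler has a winning strategy in $r$ moves. The first move picks up some pebble pair $(p_j, p_j')$ and, by symmetry of the game, we may assume it is placed on an element $x$ of $G$ (or of $V(\Gamma_G)$ in Version III). Spoiler's strategy then guarantees that for every response $y$ by Duplicator, Spoiler can win in $r-1$ additional moves from the configuration $(\overline{g}(g_j/x), \overline{h}(h_j/y))$. Applying the inductive hypothesis to each such $y$, we obtain
\[
\chi_{r-1}(\overline{g}(g_j/x)) \neq \chi_{r-1}(\overline{h}(h_j/y)) \qquad \text{for every } y.
\]

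The final step converts this into distinguishing $\chi_r(\overline{g})$ from $\chi_r(\overline{h})$. Let $T := (\chi_{r-1}(\overline{g}(g_1/x)), \ldots, \chi_{r-1}(\overline{g}(g_k/x)))$; by the count-free refinement rule, $T$ belongs to the set of tuples that constitutes $\chi_r(\overline{g})$ (it is the entry coming from $x' = x$). For every $y$, the corresponding tuple $(\chi_{r-1}(\overline{h}(h_1/y)), \ldots, \chi_{r-1}(\overline{h}(h_k/y)))$ contributing to $\chi_r(\overline{h})$ disagrees with $T$ in coordinate $j$ by the displayed inequality. Hence $T$ lies in the set defining $\chi_r(\overline{g})$ but is absent from the set defining $\chi_r(\overline{h})$, so $\chi_r(\overline{g}) \neq \chi_r(\overline{h})$. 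The only non-routine point, which is mild, is verifying that this set-versus-set comparison (rather than the multiset comparison of standard WL) still suffices; it does, because the argument exhibits a tuple $T$ that is entirely \emph{missing} on the $H$-side, not merely occurring with the wrong multiplicity, which is precisely the feature that the count-free refinement is able to detect.
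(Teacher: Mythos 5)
Your proof is correct and follows essentially the same route as the paper's: induction on the number of moves, with the base case handled by the definition of the initial coloring and the inductive step showing that the color tuple arising from Spoiler's chosen element $x$ appears in the set on the $G$-side but, by the inductive hypothesis applied to every Duplicator response $y$, is absent from the $H$-side, so the count-free (set-based) refinement separates $\chi_r(\overline{g})$ from $\chi_r(\overline{h})$. Your explicit remark that the exhibited tuple is entirely missing (not merely of different multiplicity) makes precise the step the paper leaves implicit.
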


\begin{proof}
\noindent
\begin{itemize}
\item 
\textbf{Version I:} 
If $r = 0$, then the initial configuration is already a winning one for Spoiler. We have in this case, by definition, that $\overline{g}, \overline{h}$ receive different colorings at the initial round of count-free $k$-WL. Now let $r > 0$, and suppose that Spoiler wins at round $r$ of the pebble game. Suppose Spoiler begins their $r$-round winning  strategy by moving pebble $p_{j}$ from $g_{j}$ to $x$, and suppose that Duplicator responds by placing $p_{j}'$ on $y$. Thus, Spoiler has a winning strategy in the $(k+1)$-pebble, $(r-1)$-round game, starting from the configuration $( \overline{g}(g_{j}/x), \overline{h}(h_{j}/y))$. So by the inductive hypothesis, $\chi_{k,r-1}(\overline{g}(g_{j}/x)) \neq \chi_{k,r-1}(\overline{h}(h_{j}/y))$. As it was a winning strategy for Spoiler to move $p_j$ from $g_j$ to $x$ at the initial round, we thus have for any $y \in H$ that:
\[
(\chi_{k,r-1}(\overline{g}(g_{1}/x)), \ldots, \chi_{k,r-1}(\overline{g}(g_{k}/x))) \neq 
(\chi_{k,r-1}(\overline{h}(g_{1}/y)), \ldots, \chi_{k,r-1}(\overline{h}(h_{k}/y))).
\]
It follows that $\chi_{k,r}(\overline{g}) \neq \chi_{k,r}(\overline{h})$, as desired. The result now follows.

\item \textbf{Version II:} We modify the Version I argument above to use the Version II marked equivalence type. Otherwise, the argument is identical. \qedhere
\end{itemize}
\end{proof}

\subsection{Logics}

We recall the central aspects of first-order logic. We have a countable set of variables $\{x_{1}, x_{2}, \ldots, \}$. Formulas are defined inductively. As our basis, $x_{i} = x_{j}$ is a formula for all pairs of variables. Now if $\varphi$ is a formula, then so are the following: $\varphi \land \varphi, \varphi \vee \varphi, \neg{\varphi}, \exists{x_{i}} \, \varphi,$ and $\forall{x_{i}} \, \varphi$. Variables can be reused within nested quantifiers. In order to define logics on groups, it is necessary to define a relation that relates the group multiplication. We recall the two different logics introduced by Brachter \& Schweitzer \cite{WLGroups}.
\begin{itemize}
\item \textbf{Version I:} We add a ternary relation $R$ where $R(x_{i}, x_{j}, x_{\ell}) = 1$ if and only if $x_{i}x_{j} = x_{\ell}$ in the group. In keeping with the conventions of \cite{CFI}, we refer to the first-order logic with relation $R$ as $\mathcal{L}_{I}$ and its $k$-variable fragment as $\mathcal{L}_{I}^{k}$. We refer to the logic $\mathcal{C}_{I}$ as the logic obtained from $\mathcal{L}_{I}$ by adding counting quantifiers $\exists^{\geq n} x_{i} \, \varphi$ and $\exists{!n} \, \varphi$,  and its $k$-variable fragment as $\mathcal{C}_{I}^{k}$. 

\item \textbf{Version II:} We consider relations of the form $R_{w}(x_1, \ldots, x_t)$, where $w \in (\{x_{1}, \ldots, x_{t}\} \cup \{ x_{1}^{-1}, \ldots, x_{t}^{-1}\})^{*}$. We say that $R_{w}(x_1, \ldots, x_t)$ is fulfilled by a $t$-tuple $(g_1, \ldots, g_t)$ of group elements in $G$ if and only if $w(g_1, \ldots, g_t) = 1$ in $G$. Define $\mathcal{L}_{II}$ to be the extension of first-order logic obtained by adding all such relations $R_{w}$. Let $\mathcal{L}_{II}^{k}$ be the fragment of $\mathcal{L}_{II}$ that uses at most $k$ variables and relations $R_{w}$, where $w$ ranges through the $k-1$ variable words defined over these $k$ variables and their inverses. We refer to the logic $\mathcal{C}_{II}$ as the logic obtained from $\mathcal{L}_{II}$ by adding counting quantifiers $\exists^{\geq n} x_{i} \, \varphi$ and $\exists{!n} \, \varphi$,  and its $k$-variable fragment as $\mathcal{C}_{II}^{k}$. 
\end{itemize}

\begin{remark}
Brachter \& Schweitzer \cite{WLGroups} refer to $\mathcal{L}_{I}$ and $\mathcal{L}_{II}$ as the logics with counting quantifiers. We instead adhere to the conventions in \cite{CFI}.
\end{remark}

Brachter \& Schweitzer \cite[Lemma~3.6]{WLGroups} showed that for $J \in \{I, II\}$ two $k$-tuples $\overline{g}, \overline{h}$ receive a different initial color under $k$-WL Version $J$ if and only if there is a quantifier-free formula in $\mathcal{C}_{J}$ that distinguishes $\overline{g}, \overline{h}$. As such formulas do not use any quantifiers, $\overline{g}, \overline{h}$ receive a different initial color under $k$-WL Version $J$ if and only if there is a quantifier-free formula in $\mathcal{L}_{J}$ that distinguishes $\overline{g}, \overline{h}$. Now the equivalence between the $(k+1)$-pebble, $r$-round Version $J$ count-free pebble game and the $(k+1)$-variable, quantifier-depth $r$ fragment of $\mathcal{L}_{J}$ follows identically from the argument as in the case of graphs \cite{CFI}. We record this with the following theorem.

\begin{theorem}
Let $G$ and $H$ be groups of order $n$, and let $J \in \{I, II\}$. Fix $k \geq 2$, and let $\overline{g} \in G^k, \overline{h} \in H^k$. Let $\chi_{k,r}$ be the coloring computed by the count-free $(k,r)$-WL on $G$ and $H$. We have that $\chi_{k,r}(\overline{g}) \neq \chi_{k,r}(\overline{h})$ if and only if there exists a sentence formula $\varphi \in \mathcal{L}_{J}$ that uses at most $k$ free variables and quantifier depth at most $r$, such that $(G, \overline{g}) \models \varphi$ and $(H, \overline{h}) \not \models \varphi$.
\end{theorem}

\subsection{Count-free WL and Abelian groups}
We now turn to showing that the count-free WL Version II algorithm fails to yield a polynomial-time isomorphism test even for Abelian groups.


For our lower bound we will use groups of the form $G = (\Z/2\Z)^n \times (\Z/4\Z)^m$. We begin by recalling the classification of subgroups of a finitely generated Abelian group \cite{birkhoff}, as applied to groups of this form. 

A \emph{basis} of an Abelian group $A$ is a generating set $a_1,\dotsc,a_k$ of $A$ such that $A = \prod_{i=1}^k \langle a_i \rangle$, or equivalently, $\langle a_i \rangle \cap \langle a_j : j \neq i \rangle = 1$ for all $i \in [k]$. For finite Abelian $p$-groups $A$, all bases have the same size, which is equal to the dimension of $A/A^p$ as an $\F_p$-vector space. (If $A$ is not a $p$-group this is false, e.g. $\Z/6\Z$ a basis of size 1 and a basis of size 2.) This is also the minimum number of generators of $A$, which is also denoted $d(A)$. If $A$ is an Abelian $p$-group and $x \in A$ is not the $p$-th power of another element of $A$, then there is a subgroup $A' \leq A$ such that $A = A' \times \langle x \rangle$, or equivalently there is a basis of $A$ that includes $x$ (this follows from Nakayama's Lemma, by the analogous result for bases of the $\F_p$-vector space $A / A^p$).

For Abelian groups $G = (\Z/2\Z)^n \times (\Z/4\Z)^m$, any subgroup $A \leq G$ has the following form: there is a basis $g_1,\dotsc,g_{n+m}$ of $G$ such that there exists a $k \leq n+m$, and $0 \leq a,b,c, \leq k$ such that a basis for $A$ is given by $\{g_1,\dotsc,g_a\} \cup \{g_{a+1}, \dotsc, g_{a+b}\} \cup \{g_{a+b+1}^2, \dotsc, g_{a+b+c}^2\}$, where $g_1,\dotsc,g_a$ have order 2, and $g_{a+1},\dotsc,g_{a+b},g_{a+b+1},\dotsc,g_{a+b+c}$ have order 4. (One way to see this is to write the basis of $G$ as the columns of a matrix, generators of $A$ as the rows of the matrix, and then apply Smith Normal Form. Since all the entries are taken modulo 4, the elementary divisors are all either $1 \pmod{2}$, in columns corresponding to $g_i$ of order 2, or $1$ or $2\pmod{4}$ in columns corresponding to $g_i$ of order 4.) 

We refer to $(a,b,c)$ as the \emph{type} of the subgroup $A$. An alternative characterization is: $a$ is the number of direct factors of $A$ that are $\Z/2\Z$ that do not lie in a copy of $\Z/4\Z$ in $G$; $b$ is the number of direct factors of $A$ that are $\Z/4\Z$; and $c$ is the number of direct factors of $A$ that are $\Z/2\Z$ that are subgroups of a $\Z/4\Z$ in $G$. It is clear that the action of $\Aut(G)$ preserves the type of a subgroup. The converse is also true: by the classification of automorphisms of Abelian groups (see, e.g., \cite{hillarRhea} for a nice exposition) or by using Smith Normal Form as mentioned above, it follows that for any two subgroups $A,B \leq G$ of the same type, there is an automorphism of $G$ that sends $A$ to $B$.

\begin{lemma} \label{lem:types}
Let $G = (\Z/2\Z)^n \times (\Z/4\Z)^m$, and $A \leq G$ a subgroup of type $(a,b,c)$ as defined above. Then $A$ is contained in subgroups of each of the following types:
\begin{enumerate}
\item $(a,b+1,c-1)$ if $c > 0$;
\item $(a+1,b,c)$ if $a < n$; 
\item $(a,b+1,c)$ if $b+c < m$; 
\item $(a,b,c+1)$ if $b+c < m$.
\end{enumerate}
Furthermore, if $A \leq B$ and $d(B) = d(A)+1$, then the type of $B$ is one of the latter three types.
\end{lemma}

\begin{proof}
Let $\beta = \{x_1,\dotsc,x_{a+b+c}\}$ be a basis of $A$ such that $\{x_1,\dotsc,x_a\}$ generate a subgroup of non-squares-in-$G$ that is isomorphic to $(\Z/2\Z)^a$, $\{x_{a+1},\dotsc,x_{a+b}\}$ is a basis for a subgroup isomorphic to $(\Z/4\Z)^b$, and $\{x_{a+b+1},\dotsc,x_{a+b+c}\}$ is a basis for a subgroup of square-in-$G$ that is isomorphic to $(\Z/2\Z)^c$. Let $Y$ be a copy of $(\Z/4\Z)^m$ such that $\{x_{a+1},\dotsc,x_{a+b+c}\} \subseteq Y$, and let $X$ be a copy of $(\Z/2\Z)^n \leq G$ such that $\{x_1,\dotsc,x_a\} \subseteq X$, $X \cap Y = 1$, and $G = \langle X, Y \rangle = X \times Y$.

\begin{enumerate}
\item If $c > 0$, let $x = x_{a+b+1}$; then $x^2 = 1$ and $x$ is a square in $G$, say $y^2 = x$. If we replace $x_{a+b+1}$ in $\beta$ with $y$, we get a basis for a subgroup that contains $A$ (since $y^2 = x_{a+b+1}$), and we have changed the type to $(a,b+1,c-1)$.

\item If $a < n$, then $\langle x_1,\dotsc, x_a \rangle$ is a proper subgroup of $(\Z/2\Z)^n$. Since subgroups of $(\Z/2\Z)^n$ are the same thing as subspaces of $(\Z/2\Z)^n$ as a $\Z/2\Z$-vector space, in this case the result follows from the usual result for vector spaces.

\item For $i=1,\dotsc,c$, let $y_i$ be a square root of $x_{a+b+i}$, and let 
\[
A_Y := \langle x_{a+1},\dotsc, x_{a+b}, y_1,\dotsc,y_c \rangle \leq Y.
\]
Now consider the surjective homomorphism $\pi\colon Y \to (\Z/2\Z)^m$ with kernel $Z$ (i.e., take each coordinate modulo 2). If $b + c < m$, then $\pi(A_Y)$ is not all of $(\Z/2\Z)^m$, so be the result for vector spaces, there exists $\overline{y} \in (\Z/2\Z)^m$ such that $\overline{y}$ is not in the image $\pi(A_Y)$. Let $y \in Y$ be a preimage of $\overline{y}$, i.e. so that $\pi(y) = \overline{y}$. We claim that $\langle y \rangle \cap A_Y = 1$. For we have $\langle \overline{y} \rangle \cap \pi(A_Y) = 1$, and therefore $\langle y \rangle \cap A_Y \leq \ker(\pi) = Z$. Thus at most we have $\{1,y^2\} \in \langle y \rangle \cap A_Y$. Suppose for the sake of contradiction that $y^2$ is in $A_Y$. Since $A_Y \cong (\Z/4\Z)^{b+c}$, this means there is a $z \in A_Y$ of order 4 such that $z^2 = y^2$. But then $z^{-1} y$ has order 2, so $z$ and $y$ differ by an element of $Z = \ker(\pi)$, giving $\pi(z) = \pi(y)$. But this contradicts our construction that $\pi(y) \notin \pi(A_Y)$. Thus $\langle y \rangle \cap A_Y = 1$, as claimed. Then $B = A \times \langle y \rangle$ is a subgroup containing $A$ and of type $(a,b+1,c)$.

\item As in the previous part, we can find a $y \in Y$ such that $\langle Y \rangle \cap A = 1$, but now we let $B = A \times \langle y^2 \rangle$, which has type $(a,b,c+1)$. 
\end{enumerate}

To see the furthermore, suppose that $A \leq B$ and $d(B) = d(A) + 1$. Let $y \in G$ be such that $B = A \cdot \langle y \rangle$. If $y$ is an element of order 4, then we must have $\langle y \rangle \cap A = 1$, for otherwise $y^2 \in A$, there is a basis of $A$ that includes $y^2$, and then a basis for $B$ can be gotten from that basis for $A$ by replacing $y^2$ by $y$, giving $d(B) = d(A)$, contradicting our assumption. Thus if $y$ has order 4 we are necessarily in case 3.

If $y$ has order 2, then we have $\langle y \rangle = \{1, y \}$, and therefore $\langle y \rangle \cap A = 1$ (since $d(B) > d(A)$, so $B$ strictly contains $A$), so we again have $B = A \times \langle y \rangle$. If $y$ is not a square in $G$, then we are in case 2, and if $y$ is a square in $G$ then we are in case 4. This completes the proof of the lemma.
\end{proof}

Next, note that if $A \leq G = (\Z/2\Z)^n \times (\Z/4\Z)^m$ has type $(a,b,c)$, then the set of elements in $A$ that are of order at most 2 and are also squares in $G$ is a subgroup of $A$ of type $(0,0,c)$: such elements are their own inverse, and if $x,y$ are two such elements with $a,b \in G$ such that $a^2 = x$ and $b^2= y$, then since $G$ is Abelian we have $(ab)^2 = xy$, and thus $xy$ is also of order (at most) 2 and has a square root in $G$. Let us call this the subgroup of $G$-squares in $A$, and denote it $sq_G(A)$. 

Below we will use the following observation without further mention. Suppose $G,H$ are finite groups $A \leq G$, $B \leq H$, $\varphi\colon A \to B$ is an isomorphism,  $x \in G$, $y \in H$, $\langle x \rangle \cap A = 1$, $x$ normalizes $A$ and $A$ normalizes $\langle x \rangle$, and similarly for $y$ and $B$, and $x$ and $y$ have the same order. Then there is a unique isomorphism $\hat{\varphi} \colon A \times \langle x \rangle \to B \times \langle y \rangle$ that agrees with $\varphi$ when restricted to $A$, and such that $\hat{\varphi}(x) = y$. 

\begin{theorem} \label{CountFreeAbelian}
For $n \geq 2$, let $G_n := (\Z/2\Z)^{n-2} \times (\Z/4\Z)^{n+1}$ and $H_n := (\Z/2\Z)^{n} \times (\Z/4\Z)^{n}$. The $n$-pebble count-free Version II game does not distinguish $G_n$ from $H_n$. 
\end{theorem}

We note that for these pairs of groups, this result is tight, as with $(n+1)$ pebbles, Spoiler can pebble generators of $(\Z/4\Z)^{n+1}$ in $G_n$, and there is no subgroup of $H_n$ isomorphic to $(\Z/4\Z)^{n+1}$.

\begin{proof}
We may assume that the game starts with all $n$ pebbles on the identity of both groups, so that for simplicity we can assume that all $n$ pebbles are already placed. For the purposes of this proof, we define a \emph{good pebbling} as follows. If $A$ is the subgroup generated by the pebbled elements in $G_n$ and $B$ is the subgroup generated by the pebbled elements in $H_n$, an isomorphism $\varphi \colon A \to B$ is \emph{good} if $\varphi(sq_G(A)) = sq_H(B)$. We then call the pebbling \emph{good} if the pebbled map extends to a marked isomorphism that is also a good isomorphism.

Next, we claim that if there is a good isomorphism from $A \leq G$ to $B \leq H$, then $A$ and $B$ have the same type. Let $(a,b,c)$ be the type of $A$ and $(a',b',c')$ be the type of $B$. Since $A \cong (\Z/2\Z)^{a+c} \times (\Z/4\Z)^b$ and $A \cong B$, we have $a' + c' = a + c$ and $b' = b$. Directly from the definition of good, it follows that $c = c'$, and thus by the preceding equality, we also get $a = a'$. Thus the existence of a good isomorphism $A \to B$ implies $A$ and $B$ have the same type.

We will show by induction on the rounds of the game that Duplicator can always ensure the pebbled map is good, and therefore never lose.

The initial pebbled map (all pebbles on the identity in both groups) is vacuously good.

Suppose inductively that the pebbled map $g_i \mapsto h_i$ ($i=1,\dotsc,n$) is good. Spoiler picks up a pebble---for simplicity of notation, say the first pebble---and places it on the element $g'_1$. If $g'_1 = g_1$, then Duplicator may respond by placing the other pebble on $h_1$. If $g'_1 = 1$ then Duplicator responds by placing the pebble on $1$ as well. Otherwise, we break into three cases based on $\langle g'_1 \rangle \cap \langle g_2, \dotsc, g_{n} \rangle$. Let $A_1 = \langle g_2, \dotsc, g_{n} \rangle$ and $B_1 = \langle h_2, \dotsc, h_{n} \rangle$. 

\textbf{Case 1: $\langle g'_1 \rangle \cap A_1$ is all of $\langle g'_1 \rangle$.} Equivalently, $g'_1$ is in $A_1$. By assumption, there exists a good isomorphism $\varphi \colon A_1 \to B_1$ that extends the pebbled map, that is, such that $\varphi(g_i) = h_i$ for all $i=1,\dotsc,n$. Thus $\varphi$ restricted to $A_1$ is also a good isomorphism that sends $g_i \mapsto h_i$ for all $i=2,3,\dotsc,n$. Since $g'_1$ is in $A_1$, Duplicator responds with $\varphi(g'_1)$. The newly pebbled map is still good, for $\varphi$ is a good isomorphism that extends the pebbled map.

\textbf{Case 2: $\langle g'_1 \rangle \cap A_1$ is neither trivial nor all of $\langle g'_1 \rangle$.}  Then $g'_1$ must have order 4, and the intersection must be $\{1, (g'_1)^2 \} = \langle (g'_1)^2 \rangle$. Let $\varphi \colon A_1 \to B_1$ be a good isomorphism that extends the pebbled map $g_i \mapsto h_i$ for $i=2,3,\dotsc,n$, and let $(a,b,c)$ be the type of $A_1$ (hence also the type of $B_1$). Let $y = \varphi((g'_1)^2)$. Then $(g'_1)^2$ is a square in $G$; it may either be a square of an element of $A_1$ or not, so we split further into those two cases:
\begin{itemize}
\item[2a] There exists $z \in A_1$ such that $z^2 = (g'_1)^2$. Then $z$ necessarily has order $4$, $\langle A_1, g'_1 \rangle = \langle A_1, z^{-1} g'_1 \rangle$. We have that $z^{-1} g'_1$ has order 2 since $z^2 = (g'_1)^2$, and $z^{-1} g'_1$ cannot be in $A_1$ since $z$ is in $A_1$ while $g'_1$ is not. Thus $z^{-1} g'_1$ is part of a basis for $\langle A_1, g'_1 \rangle$.

Now, If $z^{-1} g'_1$ is a square in $G$, then the type of $\langle A_1, g'_1 \rangle$ is $(a,b,c+1)$. Since $\langle A_1, g'_1 \rangle$ is generated by at most the $n$ pebbled elements, we have $b+c+1 \leq n$, and therefore $b+c < n$. Thus by Lemma~\ref{lem:types} there exists $w \in H$ of order 4 such that $\langle w \rangle \cap B_1 = 1$. We will select Duplicator's response so that $\varphi$ extends to a good isomorphism that maps $z^{-1} g'_1$ to $w^2$. Since $z$ is in $A_1$, we have $\varphi(z) \in B_1$, and then Duplicator we will respond by placing the pebble on $h'_1 := \varphi(z) w^2$. Thus $\varphi(z)^{-1} h'_1 = w^2$ is part of a basis of $\langle B_1, h'_1 \rangle$, and thus $\langle B_1, h'_1 \rangle$ has type $(a,b,c+1)$, and there is a unique extension of $\varphi$ to $\langle A_1, g'_1 \rangle$ defined by mapping $g'_1 \mapsto h'_1$, and this extension is a good isomorphism, so the new pebbling is good.

On the other hand, if $z^{-1} g'_1$ is not a square in $G$, then the type of $\langle A_1, g'_1 \rangle$ is $(a+1,b,c)$. Since this subgroup is generated by at most $n-1$ elements, we have $a+1 + b + c \leq n$, and in particular $a+1 \leq n$. Since $B_1$ has type $(a,b,c)$ and $a < n$, by Lemma~\ref{lem:types} there is an element of $w$ of order 2 in $H$ that is not a square in $H$ such that $w \notin B_1$. We will select Duplicator's response so that $\varphi$ extends to a good isomorphism that maps $z^{-1} g'_1$ to $w$. Namely, Duplicator responds with $h'_1 := \varphi(z) w$. By a similar argument to the preceding paragraph, $\varphi$ extends uniquely to a good isomorphism that maps $g'_1 \mapsto h'_1$, so the new pebbling is good.

\item[2b] $(g'_1)^2$ is not the square of any element of $A_1$. Since $\varphi$ is an isomorphism from $A_1$ to $B_1$, $\varphi((g'_1)^2)$ is not the square of any element of $B_1$. And since $\varphi$ is good and $(g'_1)^2$ is in $sq_G(A)$, $\varphi((g'_1)^2)$ must be in $sq_H(B)$. Thus there is an element $h'_1 \in H$ of order 4 such that $(h'_1)^2 = \varphi((g'_1)^2)$, and since no such element is in $B_1$, we also have $h'_1 \notin B_1$. Duplicator responds with $h'_1$. By construction, $\varphi$ extends uniquely to a good isomorphism that maps $g'_1 \mapsto h'_1$, and therefore the new pebbling is good.
\end{itemize}

\textbf{Case 3: $\langle g'_1 \rangle \cap A_1= 1$.} In this case, $g'_1$ is part of a basis for $\langle A_1, g'_1 \rangle$, thus the type of the latter is either $(a+1,b,c)$ if the order of $g'_1$ is 2, or $(a,b+1,c)$ if the order of $g'_1$ is 4. In the former case, we have $a+1 \leq n-2$ since $G$ only contains $(\Z/2\Z)^{n-2}$ as a direct factor, but not any larger power of $(\Z/2\Z)$, that is, $G_n$ does not contain subgroups of type $(n-1,*,*)$. (And in any case we must have $a + 1 < n$ since there are only $n$ pebbles.) Thus $a < n$, and by Lemma~\ref{lem:types} there is an element $h'_1 \in H$ of order 2 that is not a square in $H$ and is not in $B_1$, and Duplicator may respond with $h'_1$. As before, since $\langle h'_1 \rangle \cap B_1 = 1$, there is a unique extension of $\varphi$ to $\langle A_1, g'_1 \rangle$ that sends $g'_1 \mapsto h'_1$ which is also good, and thus the new pebbling is good.

On the other hand, if the order of $g'_1$ is 4 and the type of $\langle A_1, g'_1 \rangle$ is $(a,b+1,c)$, then we have $b+1+c \leq n$ because there are only $n$ pebbles, and thus $b+c < n$, so by Lemma~\ref{lem:types} there is an element $h'_1 \in H$ of order 4 with $\langle h'_1 \rangle \cap B_1 = 1$, and Duplicator responds with $h'_1$. As in the other cases, there is a unique good extension of $\varphi$ that sends $g'_1 \mapsto h'_1$, and thus the new pebbling is good.

In all cases, Duplicator can respond in a way that results in a good pebbling. Thus, by induction, Duplicator can play forever and Spoiler cannot win.
\end{proof}

\begin{remark}
A crucial fact that is being used ``behind the scenes'' in the above proof is that the automorphism group of $G$ acts transitively on subgroups of a given type, and therefore the type of a subgroup uniquely determines the existence of elements outside of the subgroup with various relations to the subgroup (encapsulated in Lemma~\ref{lem:types}). Without this, a much more complicated strategy would be needed. We can also see where this proof breaks down for counting WL, namely: while subgroups of any needed type \emph{exist} in the argument, the number of such subgroups will be different in $G$ and $H$.
\end{remark}



\begin{remark}
\Thm{CountFreeAbelian} shows that count-free WL fails to serve as a polynomial-time (or even $|G|^{o(\log |G|)}$) isomorphism test for Abelian groups. As the $n$-pebble count-free WL game fails to distinguish $G_{n}$ and $H_{n}$, we also obtain an $\Omega(\log(|G_{n}|))$ lower bound on the quantifier rank of any $\textsf{FO}$ formula identifying $G_{n}$. In particular, this suggests that $\algprobm{GpI}$ is not in $\textsf{FO}(\poly \log \log n)$, even for Abelian groups. As $\textsf{FO}(\poly \log \log n)$ cannot compute $\algprobm{Parity}$ \cite{Smolensky87algebraicmethods}, this suggests that counting is necessary to solve \algprobm{GpI}. This is particularly interesting, as $\algprobm{Parity}$ is not $\textsf{AC}^{0}$-reducible to $\algprobm{GpI}$ \cite{ChattopadhyayToranWagner}.
\end{remark}

While count-free WL is unable to distinguish Abelian groups, the multiset of colors computed actually provides enough information to do so. That is, after count-free WL terminates, the color classes present and their multiplicities provide enough information to identify Abelian groups. The technical difficulty lies in parsing this information; we will show how to do so with a single $\textsf{Majority}$ gate and $O(\log n)$ non-deterministic bits. 

To this end, we first consider the order-finding problem. Barrington, Kadau, Lange, \& McKenzie \cite{BKLM} previously showed that order-finding is $\textsf{FOLL}$-computable. Our next result (\Prop{CountFreeOrder}) shows that the count-free Weisfeiler--Leman effectively implements this strategy. 

\begin{lemma} \label{LemmaCountFreeExponent}
Let $G,H$ be groups of order $n$. Suppose in the count-free Version I game, pebbles have already been placed on $g \mapsto h$ and $g^i \mapsto x$ with $x \neq h^i$. Then Spoiler can win with $O(1)$ additional pebbles in $O(\log \log i)$ rounds.
\end{lemma}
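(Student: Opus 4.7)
The plan is to exploit Lemma~\ref{CountFreePebbleGame}(b)--(c): once Spoiler pebbles a multiplication gadget $M(g_{1},g_{2})$, Duplicator must respond with a compatible gadget $M(h_{1},h_{2})$ satisfying $h_{1}h_{2}=f(g_{1}g_{2})$, on pain of losing in only a constant number of additional rounds (or $O(\log n)$ rounds if closure via part~(c) is triggered). In effect, Spoiler can ``check one group multiplication'' per round while keeping only a constant number of pebbles on the board, and this is exactly the mechanism to be iterated.

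The straightforward strategy is iterated squaring. Starting from the pebbles $g\mapsto h$ and $g^{i}\mapsto x$, Spoiler in round $k$ implicitly pebbles $M(g^{2^{k-1}},g^{2^{k-1}})$, which forces the image of $g^{2^{k}}$ to be $h^{2^{k}}$, while reusing pebbles along the squaring chain so that the pebble budget stays $O(1)$. Combining the resulting powers along the binary expansion of $i$ yields, after $O(\log i)$ rounds, an enforced image $g^{i}\mapsto h^{i}$, which contradicts the given pebble $g^{i}\mapsto x$ with $x\neq h^{i}$. This already gives an $O(\log i)$-round winning strategy with $O(1)$ extra pebbles.

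Sharpening this bound to $O(\log\log i)$ is the heart of the lemma and its main obstacle. The plan is a recursive exponent-reduction: using $O(1)$ gadget pebblings (and the fact that the pre-existing pebble on $g^{i}\mapsto x$ is available for reuse), Spoiler either wins outright or transforms the bad pair $(g^{i},x)$ with $x\neq h^{i}$ into a new bad pair $(g^{i'},x')$ with $i'=O(\log i)$ and $x'\neq h^{i'}$. Iterating drives the exponent down to a constant in $O(\log^{*} i)\subseteq O(\log\log i)$ levels, at which point a constant number of final gadget checks closes the game.

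The delicate point is implementing each reduction step in genuinely $O(1)$ rounds with $O(1)$ pebbles; a naive implementation either loses a factor of $\log i$ in rounds (collapsing back to the iterated-squaring bound) or blows up the pebble count. The key tool is to batch several bits of the binary expansion of $i$ into a single multiplication-gadget check against the pebble on $g^{i}\mapsto x$, using Lemma~\ref{CountFreePebbleGame}(b) to force Duplicator's responses to be multiplicatively consistent. If at any point Duplicator's response is inconsistent, Spoiler wins in $O(1)$ extra rounds; otherwise, the inconsistency with $h^{i}$ is provably inherited by a much smaller exponent, giving the recursion the doubly-exponential collapse required.
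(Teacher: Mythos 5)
Your outer framing (use Lemma~\ref{CountFreePebbleGame}(b) to force Duplicator's implicit-pebble responses to be multiplicatively consistent, and recursively replace the bad pair $(g^{i},x)$, $x\neq h^{i}$, by a bad pair with a smaller exponent, reusing pebbles so the budget stays $O(1)$) is the right skeleton, but the step you lean on for the $O(\log\log i)$ bound is exactly the part you have not supplied, and as stated it does not work. You claim that a single $O(1)$-round batch of gadget checks turns the bad exponent $i$ into a bad exponent $i'=O(\log i)$, giving $O(\log^{*} i)$ levels. A multiplication gadget encodes one product, so one implicit pebbling of $(g^{j},g^{k})$ checked against the pebble on $g^{i}\mapsto x$ only lets you split $i$ additively, $i=j+k$ (where $\max(j,k)\geq i/2$), or, by first pebbling an intermediate power $g^{j}$, multiplicatively, $i=jk$ (where $\max(j,k)\geq\sqrt{i}$). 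Neither produces an exponent of size $O(\log i)$; to reach $\log i$-sized exponents in one ``batch'' you would have to combine $\Theta(\log i/\log\log i)$ factors, which costs that many rounds, not $O(1)$. So ``batch several bits into a single multiplication-gadget check'' has no implementation, and the claimed doubly-fast collapse is not achievable by this mechanism.

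The correct quantitative engine (and the one the paper uses) is weaker per step but suffices: measure progress not by the size of the exponent but by the number of ones in its binary expansion. If $i$ is not a power of two, write $i=j+k$ so that $j$ and $k$ each have at most half as many ones as $i$; Spoiler implicitly pebbles $(g^{j},g^{k})$, forces $ab=x$ via part (b), notes that $x\neq h^{i}$ implies (WLOG) $a\neq h^{j}$, and moves the pebble from $g^{i}$ to $g^{j}$, yielding a bad pair whose exponent has half as many ones --- all in $O(1)$ rounds with reusable pebbles. After $O(\log\log i)$ such steps $i$ is a power of two, and then one switches to the multiplicative split $i=jk$ with $\log j,\log k\approx\tfrac12\log i$: either Duplicator's image of $g^{j}$ is wrong (recurse on exponent $j$) or it is $h^{j}$, in which case the inconsistency transfers to the pair $(g^{j}\mapsto h^{j},\,(g^{j})^{k}\mapsto x)$ with exponent $k$; either way $\log$ of the exponent halves per $O(1)$ rounds, giving another $O(\log\log i)$ rounds down to the base case. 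Your proposal as written leaves this halving measure unidentified and substitutes an unsupported stronger reduction, so the $O(\log\log i)$ bound is not established. (A minor additional point: in the count-free game Duplicator answers a single pebble placement rather than choosing a bijection $f$, so the consistency you invoke comes purely from Lemma~\ref{CountFreePebbleGame}(b), not from any map $f$.)
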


\begin{proof}
By induction on $i$. If $i = 0$, then $g^i = \text{id}_{G}$. In particular, $g^{i}$ is the unique element in $G$ such that for all $g' \in G$, $g^{i} \cdot g' = g'$. As $x \neq h^i$, we have that $x \neq \text{id}_{H}$. Spoiler can now pebble some $y' \in H$ such that $xy' \neq y'$. Let $y \in G$ be Duplicator's response. As $g^{i} = \text{id}_{G}$, we have that $g^{i}y = y$. So Spoiler wins. If $i = 1$, then there is one pebble pair mapping $g \mapsto h$ and another mapping $g \mapsto x \neq h$, so Spoiler wins immediately. 
So we now suppose $i > 1$, and that the result is true for all smaller exponents, say in $\leq c \log \log i'$ rounds for all $i' < i$.

The structure of the argument is as follows. If $i$ is not a power of $2$, we show how to cut the number of 1s in the binary expansion of $i$ by half using $O(1)$ rounds and only $O(1)$ pebbles that may be reused. Since the number of 1s in the binary expansion of $i$ is at most $\log_2 i$, and we cut this number in half each time, this takes only $O(\log \log i)$ rounds (and $O(1)$ pebbles) before $i$ has just one 1 in its binary expansion, that is, $i$ is a power of $2$. Once $i$ is a power of 2, we will show how to cut $\log_2 i$ in half using $O(1)$ rounds and $O(1)$ pebbles that may be reused. This takes only $O(\log \log i)$ rounds (and $O(1)$ pebbles) before getting down to the base case above. Concatenating these two strategies uses only $O(1)$ pebbles and $O(\log \log i)$ rounds. Now to the details.

If $i$ is not a power of $2$, we will show how cut the number of 1s in the binary expansion of $i$ in half. Write $i=j+k$ where $j,k$ each have at most half as many 1s in their binary expansion as $i$ does (rounded up). (Examine the binary expansion $i_\ell i_{\ell-1} \dotsb i_0$ and finding an index $z$ such that half the ones are on either side of $z$. Then let $j$ have binary expansion $i_\ell i_{\ell-1} \dotsb i_z 0 0 \dotsc 0$ and let $k$ have binary expansion $i_{z-1} i_{z-2} \dotsb i_0$.) At the next two rounds, Spoiler places pebbles on $g^j$ and $g^k$. Duplicator responds by pebbling $g^j \mapsto a$ and $g^k \mapsto b$. If $ab \neq x$, then Spoiler immediately wins. Thus we may now assume $ab=x$.

Since $x \neq h^i$, we necessarily have $\{a,b\} \neq \{h^j, h^k\}$. Without loss of generality, suppose $a \notin \{h^j, h^k\}$. Spoiler now picks up the pebble on $g^i$ and places it on $g^j$ instead (since $g^j$ is already pebbled, Spoiler could in fact reuse this pebble in the next round of the strategy, but as we don't need to be that efficient for our result, we just have Spoiler put it somewhere we know can't hurt). We are now in a situation where $g \mapsto h$ and $g^j \mapsto a \neq h^j$ are pebbled, and $j$ has at most half as many 1s in its binary expansion as $i$ did. (So there are only two pebbles that can't be re-used, which is precisely the number we started with.) The cost to get here was $O(1)$ rounds and $1$ additional pebble, which can be reused as the argument is iterated. 

After that has been iterated $\log \log i$ times, we come to the case where $i$ is a power of 2. We will show how to reduce to a case where $\log_2 i$ has been cut in half. Write $i=jk$ with $jk$ powers of $2$ such that $\log_2 j, \log_2 k \leq \lceil \frac{\log_2 i}{2}\rceil$ (if $i=2^z$, let $j=2^{\lceil z/2 \rceil}, k=2^{z-\lceil z/2 \rceil}$). Note that  we have $g^i = (g^j)^k$. Spoiler now pebbles $g^j$, and Duplicator responds by pebbling some $a$. If $a \neq h^j$, then Spoiler can re-use the pebble from $g^i$, and we now have $g \mapsto h, g^j \mapsto a \neq h^j$ pebbled with $\log_2 j \leq \lceil (1/2)\log_2 i \rceil$. This took $O(1)$ rounds and no non-reusable pebbles. On the other hand, if $a = h^j$, then we have $a^k = h^{jk} = h^i \neq x$. Spoiler may now reuse the pebble on $g \mapsto h$, and we are now in a situation where $g^j \mapsto a$ and $(g^j)^k \mapsto x \neq a^k$, just as we started, and with $\log_2 k \leq \lceil(1/2) \log_2 i \rceil$. As in the other case, this took $O(1)$ rounds and no non-reusable pebbles. This completes the proof.
\end{proof}

\begin{proposition}[Order finding in WL-I] \label{CountFreeOrder}
Let $G$ be a group. Let $g, h \in G$ such that $|g| \neq |h|$. The count-free $(O(1), O(\log \log n))$-WL Version I distinguishes $g$ and $h$.
\end{proposition}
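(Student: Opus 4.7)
The plan is to reduce everything to \Lem{LemmaCountFreeExponent}, which already packages the recursive-doubling strategy for comparing powers of paired elements. Since $|g| \neq |h|$, after possibly swapping the two copies we may assume $|g| < |h|$. Set $i := |g| \leq n$; then $g^{i} = 1$ while $h^{i} \neq 1$, since $|h| \nmid i = |g| < |h|$. The pebble game begins from the configuration with a single pebble pair on $g \mapsto h$, and my goal is to exhibit a Spoiler strategy that wins in $O(\log \log n)$ further rounds using $O(1)$ additional pebbles.

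For Spoiler's first move, I would pebble the identity $1$ on the left copy of $G$; Duplicator must respond by placing the partner pebble on some element $x$ of the right copy. If $x \neq 1$, then taking exponent $0$ we have $g^{0} = 1$ pebbled to $x$ while $h^{0} = 1 \neq x$, and the base case of \Lem{LemmaCountFreeExponent}---which exploits the fact that the gadget $M(1,1)$ has all of its group-element slots collapsed to the single vertex $1$---finishes the game in $O(1)$ further rounds. If instead $x = 1$, then the pebbles on $g \mapsto h$ together with $g^{i} = 1 \mapsto 1$ and the fact that $h^{i} \neq 1$ meet the hypothesis of \Lem{LemmaCountFreeExponent} at exponent $i$, so Spoiler closes the game in $O(\log \log i) \leq O(\log \log n)$ additional rounds using $O(1)$ additional pebbles.

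I do not anticipate any genuine obstacle: the combinatorial heart of the argument is already encapsulated in \Lem{LemmaCountFreeExponent}, and the only new ingredient is the opening ``anchor'' move of pebbling the identity, which forces Duplicator into one of the two tractable cases above. The only bookkeeping is to verify that taking $i = \min(|g|,|h|) \leq n$ keeps the lemma's round bound within the promised $O(\log \log n)$ budget, which is immediate.
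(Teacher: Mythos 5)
Your proposal is correct and follows essentially the same route as the paper: pebble the identity, use the uniqueness of the gadget $M(1,1)$ (i.e., the base case of \Lem{LemmaCountFreeExponent}) to force Duplicator onto the identity, and then invoke \Lem{LemmaCountFreeExponent} with exponent $i = |g| \leq n$ to win in $O(\log \log n)$ rounds with $O(1)$ pebbles. The only cosmetic difference is that the paper treats the case $g = 1$, $h \neq 1$ separately via the $M(1,1)$ argument before assuming both elements are nontrivial, whereas you fold that case into the lemma's base cases; both versions are sound.
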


\begin{proof}
We use the pebble game characterization, starting from the initial configuration $( (g), (h))$. We first note that if $g = 1$ and $h \neq 1$, that Spoiler wins, as $g \cdot g = g$, while $h \cdot h \neq h$. 
So now suppose that $g \neq 1$ and $h \neq 1$. 

Without loss of generality suppose $|g| < |h|$. Note that $g \mapsto h$ has already been pebbled by assumption. Spoiler now pebbles $1$. By the same argument as above, Duplicator must respond by pebbling $1$. But we have, for $i := |g|$, that $g^i = 1$ and by assumption $h^i \neq 1$. Thus, by \Lem{LemmaCountFreeExponent}, Spoiler can now win with $O(1)$ pebbles in $O(\log \log |g|) \leq O(\log \log n)$ rounds.
\end{proof}

As finite simple groups are uniquely identified amongst all groups by their order and the set of orders of their elements \cite{SimpleOrder}, we obtain the following immediate corollary.

\begin{corollary}
If $G$ is a finite simple group, then $G$ is identified by the count-free $(O(1), O(\log \log n))$-WL Version I. Consequently, isomorphism testing between a finite simple group $G$ and an arbitrary group $H$ is in $\textsf{FOLL}$.
\end{corollary}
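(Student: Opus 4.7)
The plan is to combine \Prop{CountFreeOrder}, which gives that count-free WL Version III distinguishes two elements of different orders using $O(1)$ pebbles in $O(\log \log n)$ rounds, with the cited result from \cite{SimpleOrder} that a finite simple group $G$ is determined up to isomorphism by the pair $(|G|, \{|g| : g \in G\})$ consisting of its order and its set of element orders (its ``spectrum''). I would prove the identification statement first, then read off the $\textsf{FOLL}$ upper bound from the parallel implementation discussed in Section~2.4.

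For the identification: let $H$ be any finite group with $H \not\cong G$. If $|H| \neq |G|$, count-free $(O(1), O(1))$-WL trivially distinguishes them at the initial round. Otherwise $|H| = |G|$, and by the spectrum-characterization of finite simple groups the sets $\{|g| : g \in G\}$ and $\{|h| : h \in H\}$ must differ. Without loss of generality pick an order $k$ and a witness $g \in G$ of order $k$ such that no element of $H$ has order $k$. In the pebble game, Spoiler opens by pebbling $g$; whatever $h \in H$ Duplicator names in response must satisfy $|h| \neq |g|$. We are now in the hypothesis of \Prop{CountFreeOrder}, so Spoiler wins with $O(1)$ further pebbles and $O(\log \log n)$ further rounds.

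For the complexity: each refinement step of \emph{count-free} WL compares \emph{sets}, not multisets, of color signatures of neighboring tuples. Such set-equality can be expressed by a constant alternation of existentials and universals over the polynomially many relevant $k$-tuples (for fixed $k$), so a single round of count-free WL is $\textsf{AC}^{0}$-computable, in contrast with the $\textsf{TC}^{0}$ required by the counting version. Chaining $O(\log \log n)$ such rounds, together with the $\textsf{AC}^{0}$ construction of $\Gamma_{G}$ from \Rmk{RemarkReduction} to realize Version III, yields an overall circuit of polynomial size and depth $O(\log \log n)$ with unbounded-fan-in $\textsf{AND}/\textsf{OR}/\textsf{NOT}$ gates, which is by definition $\textsf{FOLL}$. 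Isomorphism testing is then decided by running this count-free WL on $G$ and $H$ and comparing the resulting stable colorings.

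The only genuine subtlety is the distinction between count-free and counting WL in the parallel implementation; all the group-theoretic work is packaged into \Prop{CountFreeOrder} and the external spectrum-characterization theorem for finite simple groups. Once one notes that equality of \emph{sets} (unlike equality of multisets) needs no threshold gates, the $\textsf{FOLL}$ bound follows by straightforward composition of the $O(\log \log n)$ rounds.
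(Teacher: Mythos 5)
Your proposal is correct and follows essentially the same route as the paper: the corollary there is derived immediately from \Prop{CountFreeOrder} together with the fact that finite simple groups are determined among all groups by their order and set of element orders, and the $\textsf{FOLL}$ bound comes from the $O(\log\log n)$ rounds of count-free (hence $\textsf{AC}^0$-per-round) WL Version III on the $\textsf{AC}^0$-constructible graph $\Gamma_G$. Your spelled-out pebble-game argument and the observation that set-comparison avoids threshold gates match the paper's reasoning.
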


We also obtain an improved upper bound on the parallel complexity of Abelian Group Isomorphism:

\begin{theorem} \label{ThmAbelian}
If $G$ is an Abelian group, and $H$ is an arbitrary group, then we can decide if $G \not \cong H$ in $\beta_{1}\textsf{MAC}^{0}(\textsf{FOLL})$.
\end{theorem}

Here, $\beta_1 \mathsf{MAC}^0(\mathsf{FOLL})$ denotes the class of languages decidable by a (uniform) family of circuits that have $O(\log n)$ nondeterministic input bits, are of depth $O(\log \log n)$, have gates of unbounded fan-in, and the only gate that is not an $\textsf{And}$, $\textsf{Or}$, or $\textsf{Not}$ gate is the output gate, which is a $\textsf{Majority}$ gate of unbounded fan-in. Note that, by simulating the $\poly(n)$ possibilities for the nondeterministic bits, $\beta_1 \mathsf{MAC}^0(\mathsf{FOLL})$ is contained in $\mathsf{TC}^0(\mathsf{FOLL})$, at the expense of using $\poly(n)$ \textsf{Majority} gates. Thus, our result improves on the prior upper bound of $\mathsf{TC}^0(\mathsf{FOLL})$ \cite{ChattopadhyayToranWagner}.

\Thm{ThmAbelian} is an example of the strategy of using \emph{count-free} WL, followed by a limited amount of counting afterwards. (We contrast this with the parallel implementation of the classical (counting) WL algorithm, which---for fixed $k$---uses a polynomial number of $\textsf{Majority}$ gates at each iteration \cite{GroheVerbitsky}.) After the fact, we realized this same bound could be achieved by existing techniques; we include both proofs to highlight an example of how WL was used in the discovery process.

\begin{proof}[Proof using Weisfeiler--Leman]
Let $G$ be Abelian, and let $H$ be an arbitrary group such that $G \not \cong H$. Suppose first that $H$ is not Abelian. We show that count-free $(O(1), O(1))$-WL Version I can distinguish $G$ from $H$. Spoiler uses two pebbles to identify a pair of elements $(x, y)$ in $H$ that do not commute. Duplicator responds by pebbling $(u, v) \in G$. At the next round, Spoiler pebbles $uv \in G$. Let $z \in H$ be Duplicator's response. As $x, y$ do not commute, we cannot have both $xy = z$ and $yx = z$. So Spoiler wins, having used $O(1)$ additional pebbles and $O(1)$ additional rounds. 

Suppose now that $H$ is Abelian. We run the count-free $(O(1), O(\log \log n))$-WL using the parallel WL implementation due to Grohe \& Verbitsky. As $G$ and $H$ are non-isomorphic Abelian groups, they have different order multisets. In particular, there exists an order class of greater multiplicity in $G$ than in $H$. By \Prop{CountFreeOrder}, two elements with different orders receive different colors. We use a $\beta_{1}\textsf{MAC}^{0}$ circuit to distinguish $G$ from $H$. Using $O(\log n)$ non-deterministic bits, we  guess the color class $C$ where the multiplicity differs. At each iteration, the parallel WL implementation due to Grohe \& Verbitsky records indicators as to whether two $k$-tuples receive the same color. As we have already run the count-free WL algorithm, we may in $\textsf{AC}^{0}$ decide whether two $k$-tuples have the same color. For each $k$-tuple of $G^{k}$ having color class $C$, we feed a $1$ to the $\textsf{Majority}$ gate. For each $k$-tuple of $H^k$ having color class $C$, we feed a $0$ to the $\textsf{Majority}$ gate. The $\textsf{Majority}$ gate outputs a $1$ if and only if there are strictly more $1$'s than $0$'s. The result now follows.
\end{proof}

\begin{proof}[Alternative proof using prior techniques, that we only realized after discovering the WL proof]
This proof follows the strategy of Chattopadhyay, Tor\'an, \& Wagner \cite{ChattopadhyayToranWagner}, realizing that their use of many threshold gates can be replaced by $O(\log n)$ nondeterministic bits and a single threshold gate. 

Compute the multiset of orders in $\mathsf{FOLL}$ \cite[Prop.~3.1]{BKLM}, guess the order $k$ such that $G$ has more elements of order $k$ than $H$ does. Use a single $\textsf{Majority}$ gate to compare those counts.
\end{proof}

\section{Conclusion}

We combined the parallel WL implementation of Grohe \& Verbitsky \cite{GroheVerbitsky} with the WL for groups algorithms due to Brachter \& Schweitzer \cite{WLGroups} to obtain an efficient parallel canonization procedure for several families of groups, including: (i) coprime extensions $H \ltimes N$ where $N$ is Abelian and $H$ is $O(1)$-generated, and (ii) direct products, where WL can efficiently identify the indecomposable direct factors. 

We also showed that the individualize-and-refine paradigm allows us to list all isomorphisms of semisimple groups with an $\textsf{SAC}$ circuit of depth $O(\log n)$ and size $n^{O(\log \log n)}$. Prior to our paper, no parallel bound was known. And in light of the fact that multiplying permutations is $\textsf{FL}$-complete \cite{COOK1987385}, it is not clear that the techniques of Babai, Luks, \& Seress \cite{BabaiLuksSeress} can yield circuit depth $o(\log^{2} n)$.

Finally, we showed that $\Omega(\log(n))$-dimensional count-free WL is required to identify Abelian groups. It follows that count-free WL fails to serve as a polynomial-time isomorphism test even for Abelian groups. Nonetheless, count-free WL distinguishes group elements of different orders. We leveraged this fact to obtain a new $\beta_{1}\textsf{MAC}^{0}(\textsf{FOLL})$ upper bound on isomorphism testing of Abelian groups.

Our work leaves several directions for further research that we believe are approachable and interesting.

\begin{question} \label{Question1}
Show that coprime extensions of the form $H \ltimes N$ with both $H,N$ Abelian have constant WL-dimension (the WL analogue of \cite{QST11}). More generally, a WL analogue of Babai--Qiao \cite{BQ} would be to show that when $|H|,|N|$ are coprime and $N$ is Abelian, the WL dimension of $H \ltimes N$ is no more than that of $H$ (or the maximum of that of $H$ and a constant independent of $N,H$). 
\end{question}

\begin{question}
Is the WL dimension of semisimple groups bounded? 
\end{question}

It would be of interest to address this question even in the non-permuting case when $G = \text{PKer}(G)$. Subsequent to initial versions of this paper, Brachter showed \cite{BrachterThesis} that the WL dimension of semisimple groups is $O(\log \log n)$, showing the ``WL analogue of \cite{BCGQ}'', without the need for individualization and refinement.
In a higher-arity version of WL, which is not known to admit efficient algorithms, the analogous question for semisimple groups has a positive answer \cite{GLDescriptiveComplexity}.

It is often the case that if an uncolored class of graphs is identified by WL, then so is the corresponding class of colored graphs. So if constant-dimensional WL identifies a class of graphs, it can often readily be extended to an efficient canonization procedure (cf., \cite{grohe2019canonisation}). In the case of groups, it is not clear whether WL easily identifies colored variants, which were studied in \cite{BrachterSchweitzerWLLibrary, GR16}. To this end, we ask the following. A \emph{colored group} is a group $G$ together with a coloring $\chi \colon G \to \mathcal{K}$ for some set of colors $\mathcal{K}$. Two colored groups are isomorphic if there is an isomorphism between them that preserves the colors of all elements. WL can naturally be started with such an initial coloring, which is then refined according to the initial WL coloring as defined above.

\begin{question}
Does constant-dimensional Weisfeiler--Leman identify every colored Abelian group?
\end{question}

As pointed out to us by an anonymous reviewer, a positive answer to this question would resolve, in a strong way, a question of Babai \cite[\S 13.2 of the arXiv version]{BabaiGraphIso}: namely, deciding isomorphism of colored elementary Abelian $p$-groups is the same as the \algprobm{String Isomorphism} problem on $n=p^k$ points under the natural action of the general linear group $GL(k,p)$. Babai (\emph{ibid.}) asked whether the latter problem could be solved in $p^{o(k^2)}$ time, which would follow if $o(\log n)$-dimensional WL identified such colored groups. Yet even for colored groups, we do not currently know a lower bound on the (counting) WL dimension.


For the classes of groups we have studied, when we have been able to give an $O(1)$ bound on their WL-dimension, we also get an $O(\log n)$ bound on the number of rounds needed. The dimension bound alone puts the problem into $\cc{P}$, while the bound on rounds puts it into $\cc{TC}^{1}$. A priori, these two should be distinct. For example, in the case of graphs, Kiefer \& McKay \cite{KieferMcKay} have shown that there are graphs for which color refinement takes $n-1$ rounds to stabilize. There has been also considerable work on lower bounds against the iteration number for $k$-WL for graphs, when $k > 2$ \cite{FurerRefinement, BerkholzNordstrom, GroheLichterNeuen}.

\begin{question}
Is there a family of groups $\mathcal{G}$ and a $k \geq 2$, such that each group $G \in \mathcal{G}$ is identified by $k$-WL but requiring $\omega(\log n)$ rounds? 
\end{question}

We also wish to highlight a question that essentially goes back to \cite{ChattopadhyayToranWagner}, who showed that \algprobm{GpI} cannot be hard under $\cc{AC}^0$ reductions for any class containing \algprobm{Parity}. In Theorem \ref{CountFreeAbelian}, we showed that count-free WL requires dimension $\Omega(\log n)$ (and hence, $\Theta(\log n)$) to identify even Abelian groups. This shows that this particular, natural method does not put \algprobm{GpI} into $\cc{FO}(\poly \log \log n)$, though it does not actually prove $\algprobm{GpI} \notin \cc{FO}(\poly \log \log n)$, since we cannot rule out clever bit manipulations of the Cayley (multiplication) tables. While we think the latter lower bound would  be of significant interest, we think even the following question is interesting:

\begin{question}
Show that $\algprobm{GpI}$ does not belong to (uniform) $\cc{AC}^0$.
\end{question}

\bibliographystyle{alphaurl}
\bibliography{references}

\end{document}